\newtheorem{nota}[thm]{Notation}
\Crefname{lem}{Lemma}{Lemmas}
\Crefname{thm}{Theorem}{Theorems}
\Crefname{defn}{Definition}{Definitions}
\Crefname{cor}{Corollary}{Corollaries}
\Crefname{prop}{Proposition}{Propositions}
\begin{document}
\begin{frontmatter}
  \title{Cost-sensitive Computational Adequacy of Higher-order\\[1ex] Recursion in Synthetic Domain Theory}
  \author{Yue Niu\thanksref{a}\thanksref{yue}}
   \author{Jonathan Sterling\thanksref{b}\thanksref{jon}}
  \author{Robert Harper\thanksref{a}\thanksref{bob}}
   \address[a]{National Institute of Informatics\\  Tokyo, Japan}
   \thanks[yue]{Email: \href{mailto:yue\_niu@nii.ac.jp} {\texttt{\normalshape yue\_niu@nii.ac.jp}}}
  \address[b]{Department of Computer Science and Technology\\University of Cambridge\\ Cambridge, United Kingdom}
  \thanks[jon]{Email:  \href{mailto:js2878@cl.cam.ac.uk} {\texttt{\normalshape js2878@cl.cam.ac.uk}}}
   \address[c]{Computer Science Department\\ Carnegie Mellon University\\  Pittsburgh, USA}
  \thanks[bob]{Email:  \href{mailto:rwh@cs.cmu.edu} {\texttt{\normalshape rwh@cs.cmu.edu}}}
\begin{abstract}
  We study a cost-aware programming language for higher-order recursion dubbed \pcfc{} in the setting of \emph{synthetic domain theory} (SDT). Our main contribution relates the denotational cost semantics of \pcfc{} to its \emph{computational cost semantics}, a new kind of dynamic semantics for program execution that serves as a mathematically natural alternative to operational semantics in SDT. In particular we prove an internal, cost-sensitive version of Plotkin's computational adequacy theorem, giving a precise correspondence between the denotational and computational semantics for complete programs at base type. The constructions and proofs of this paper take place in the internal dependent type theory of an SDT topos extended by a \emph{phase distinction} in the sense of Sterling and Harper. By controlling the interpretation of cost structure via the phase distinction in the denotational semantics, we show that \pcfc{} programs also evince a noninterference property of cost and behavior. We verify the axioms of the type theory by means of a model construction based on relative sheaf models of SDT.
\end{abstract}
\begin{keyword}
  compositional cost analysis, domain theory, synthetic domain theory, type theory, PCF
\end{keyword}
\end{frontmatter}

\section{Introduction}\label{intro}

In 1977 Plotkin~\cite{plotkin:1977} introduced a programming language for higher-order recursion, \textbf{PCF}, and defined \emph{computational adequacy}, a fundamental notion relating the denotational and operational semantics of a programming language in terms of the computational behavior of closed programs of ground type. Since Plotkin's seminal work, computational adequacy has been developed and refined to tailor various programming features and is an important property in the context of denotational approaches to program verification, enabling one to bring familiar mathematical structures and equational reasoning to bear on problems regarding program behavior. One refinement takes place in both the arrangement and metatheory of the semantics. In one direction, when the ambient category in which the model is defined has sufficient logical structure, one may view the object programming language from an internal perspective, a situation that was first thoroughly developed in the context of \emph{synthetic domain theory}~\cite{hyland:1991,phoa:1991}. In the other direction, one may move to a constructive metatheory such as the internal languages of generic elementary topoi or guarded type theory. The transition to these constructive, internal versions of computational adequacy has several benefits. First, by working inside topoi or type theories, one is automatically equipped with a powerful logical language to reason directly about domains and the denotational semantics, which was one of the original motivations for the development of synthetic domain theory. Second, the constructive nature of the metatheory means that one can argue that computationally adequate denotational semantics can be directly executed, as observed by De Jong~\cite{dejong:2023:thesis} in his work on constructive domain theory.

\emph{Computational adequacy in cost analysis}. More recently, Niu and Harper~\cite{niu-harper:2023} identified another application of internal adequacy in the context of cost analysis in \textbf{calf}~\cite{niu-sterling-grodin-harper:2022}, a type theory in which cost is reified as a computational effect and cost analysis takes the form of equational reasoning. A natural question in this setting is the relationship between the abstract cost bounds specified and proved in \textbf{calf} and the concrete bounds derived with respect to traditional operationally-based cost models. Viewing \calf{} as a semantic universe for \emph{internal denotational semantics}, Niu and Harper~\cite{niu-harper:2023} determined a criterion justifying the adequacy of an abstract \textbf{calf} cost model with respect to operational semantics by means of a cost-sensitive refinement of computational adequacy, which they instantiate using a variant of the Algol programming language, which featured higher-order (total) functions, first-order store and while loops.

\subsection{Motivations}

The aim of the present work is to extend the results of Niu and Harper~\cite{niu-harper:2023} to \emph{full higher-order recursion} by combining their ideas with \emph{synthetic domain theory}, culminating in a type theory for computing with \emph{generalized spaces} that support cost instrumentation alongside an information order for recursion. %

\subsubsection{Types as generalized spaces}\label{subsubsec:spaces}

The type theoretic philosophy is to study a real phenomenon in terms of \emph{abstract interfaces} rendered as types; type theoretic terms, then, correspond precisely to constructions in this discourse that preserve the geometrical and topological structure at play. For example, in synthetic domain theory (SDT), types have an intrinsic topology with respect to which all functions are continuous. By adhering to an abstract interface, the type-theoretic viewpoint brings to the fore the essential logical/geometric structure of underlying objects of interest and suppresses ill-defined objects and transformations, which makes it possible to define and reason about the domain of discourse in a conceptually simple manner that is completely rigorous.

Orthogonal to the topological and order-theoretic aspects of types as spaces in synthetic domain theory, we shall impose a different kind of geometric structure called a \emph{phase distinction}. Originating in the theory of program modules and logical relations~\cite{harper-mitchell-moggi:1990,sterling-harper:2021}, a phase distinction can generally refer to any situation in which types represent \emph{indexed} structures. For example, a program module can be represented as a family of dynamic runtime components whose ``shape'' is determined/indexed by a static module signature; similarly, a logical relation of a type theory can be represented as a family of computability data indexed in a syntactic type.
In this work, we are concerned with a similar kind of phase distinction called the \emph{intension-extension} phase distinction, introduced by Niu~\etal~\cite{niu-sterling-grodin-harper:2022} to enable simultaneous reasoning about both the cost and extensional behavior of programs.

The geometric aspect of phase distinctions emerges through two operations, \emph{restriction} and \emph{sealing}, that manipulate types to either trivialize the fibers or the base of the indexed structure represented by a given type. In the present case, these two operations are used to implement a form of cost profiling semantics for programs that can be easily ``stripped away'' via restriction, enabling one to simultaneously and faithfully carry out cost analysis in one framework; we shall describe this in mathematical terms in \cref{subsubsec:phase-distinction}.

In this paper we combine the intension-extension phase distinction of Niu~\etal~\cite{niu-sterling-grodin-harper:2022} with the domain-theoretic structure of SDT to obtain an intrinsically cost-sensitive theory of higher-order recursion in type theory. There are two main motivations for this combination. First, building on the work of Niu~\etal~\cite{niu-sterling-grodin-harper:2022,niu-harper:2023} we obtain an account of cost analysis in type theory that is compatible with general recursive programs, which we anticipate will enable more natural programming and verification techniques. Second, we will use the resulting type theory to define and study the \emph{internal denotational semantics} of \pcfc{}, a cost-aware version of \pcf{}, and prove a cost-sensitive internal computational adequacy theorem in the sense of \opcit. We explain what this entails in the following section.

\subsubsection{Internal denotational semantics}\label{subsubsec:internal}

By \emph{internal} denotational semantics, we mean to specify the syntax and semantics of a given programming language $\mathcal{L}$ called the \emph{object language} as constructions internal to type theory. Typically, one may define the syntax of the object language by means of a first-order encoding in terms of (indexed) inductive data types in which we have a type $\kw{tp} : \mathcal{U}$ of object-level types and a type family $\kw{tm} : \kw{tp} \to \mathcal{U}$ of object-level terms. A denotational model of $\mathcal{L}$ is given by a pair of maps $\sem{-}_0 : \kw{tp} \to \mathcal{U}$ and $\sem{-}_1 : (A : \kw{tp}) \to \kw{tm}(A) \to \sem{A}$ sending types to semantic domains and terms to elements of semantic domains. Usually we will use $\sem{-}$ to refer to either map when the context is unambiguous. Similarly, one may define an internal operational semantics of $\mathcal{L}$ as a type family ${\Downarrow} : (A : \kw{tp}) \to \kw{tm}(A) \to \kw{tm}(A) \to \mathcal{U}$ such that the type $e \Downarrow v$ (viewed as a proposition) holds just when $e$ evaluates to $v$.

Computational adequacy, first introduced in Plotkin~\cite{plotkin:1977}, is a statement that relates the operational and denotational semantics of $\mathcal{L}$ at the type of booleans (or any other observable type): a denotational semantics is \emph{computationally adequate} when $\sem{e} = b$ just when $e \Downarrow \overline{b}$, where $\overline{b}$ is the object-level boolean denoted a semantic boolean $b$. In the context of program verification, computational adequacy is a desirable property because the agreement of both semantics at observable types can be used to justify semantic reasoning with respect to operational behavior. In the context of cost analysis, the significance of this property is made explicit in the work of Niu and Harper~\cite{niu-harper:2023}, where the authors illustrate how a cost-sensitive refinement of internal computational adequacy can serve as the basis for validating user-defined cost models and program cost instrumentations. At a high level, results of this kind serve to ground the abstract reasoning supported by type theory to ``reality'' as given by operational semantics; we defer to \emph{op.~cit.} for a deeper discussion.

In this paper we will generalize the result of Niu and Harper~\cite{niu-harper:2023} to higher-order recursion. In contrast to \opcit{}, in this paper we prefer to work with a version of \pcfc{} whose execution model is given by a \emph{computational semantics} (see \cref{sec:comp-sem}). We will elaborate on the relationship between this new kind of dynamic semantics and traditional operational semantics in \cref{sec:discussion}.

\subsection{Mathematical techniques}

In constructing a model for the type theory supporting both a phase distinction and recursion, we employ mathematical tools that deserve some explication, which we summarise below.

\subsubsection{Cost as a phase distinction}\label{subsubsec:phase-distinction}

The writer monad on a monoid is a well-established way to express the computational effect of incurring and accumulating the cost of program execution. However, as pointed out by Niu~\etal~\cite{niu-sterling-grodin-harper:2022}, this arrangement does not faithfully model the semantics of cost profiling because it allows functions to branch on the cost component of their inputs. Thus, semantic functions lack a coherent \emph{underlying} behavior independent of the profiling, which is indispensable for stating behavioral or correctness specifications of algorithms.

In the \calf{} type theory of Niu~\etal~\cite{niu-sterling-grodin-harper:2022} this problem is resolved by means of the \emph{intension-extension phase distinction}, as discussed in \cref{subsubsec:spaces}. Mathematically, a type-theoretic \emph{phase} is simply a distinguished proposition \(\P\) whose associated \emph{open} and \emph{closed} modalities~\cite{rijke-shulman-spitters:2020} generate subuniverses for classifying purely extensional and purely intensional types, respectively. A \emph{purely extensional} type $A$ does not detect the presence of the phase proposition in the sense that the map $A \to (\P \to A)$ sending $a$ to the constant map $\P \to A$ determined by $a$ is an isomorphism; in other words a purely extensional type classifies only the behavior of programs. By contrast, a \emph{purely intensional} type $A$ ``collapses'' extensionally in the sense that $(\P \to A) \cong 1$. Given any type $A$, one may extract from $A$ the purely extensional and purely intensional part of $A$ by means of \emph{idempotent monadic} modalities. The \emph{extensional}/\emph{restriction} modality is simply defined as the function space out of the proposition: \(\P   \to  A\) (\ie the reader monad on $\P$). The \emph{intensional}/\emph{sealing} modality $\P \vee -$ is defined as the following pushout/quotient inductive type (QIT)~\cite{fiore-pitts-steenkamp:2021}: 
\begin{minipage}{0.5\textwidth}
  \DiagramSquare{
    nw = A \times \P,
    sw = A, 
    ne = \P, 
    se = \P \vee A,
    se/style = pushout,
    north = \pi_2,
    west = \pi_1,
  } 
\end{minipage}%
\begin{minipage}{0.5\textwidth}
  \iblock{
    \mhang{\textbf{inductive}~\P \vee A : \kw{Set}~\textbf{where}}{
      \mrow{\eta_{\P \vee -} : A \to \P \vee A}
      \mrow{\ast : \P \to \P \vee A}
      \mrow{\_ : (a : A) \to (u : \P) \to \eta_{\P \vee -}(a) = \ast(u)}
    }
  }      
\end{minipage}
The universal property of the sealing monad can be phrased in terms of a unique extension property: for any map $A \to B$ into a purely intensional type $B$, there is a unique extension $\P \vee A \to B$ along the unit of the sealing monad $\eta_{\phase \vee -} : A \to \P \vee A$. 
One can use the phase proposition $\P$ to exhibit a form of noninterference in the sense of information flow: any map \(A  \to  ( \P  \to  B)\) from a purely intensional type \(A\) to the extensional part of \(B\) must be constant. Therefore in \textbf{calf} one achieves a semantically faithful instrumentation of cost by programming against the writer monad on a \emph{purely intensional} cost monoid.

From an external point of view, \textbf{calf}'s types can be modeled as presheaves on the interval \( \I = { \left \{ \mathsf{ext}   \le   \mathsf{int} \right \} }\); in the terminology of Sterling and Harper~\cite{sterling-harper:2022}, such a presheaf encodes a set varying in the intension-extension ``security poset'' $\I$ whose presheaf restriction action \emph{redacts} intensional cost structure. The phase proposition $\P$ is given by the ``intermediate'' proposition $0 \to 1$ corresponding to the representable $\y{\I}(\kw{ext})$. 

\subsubsection{Synthetic domain theory}\label{subsubsec:SDT}

Synthetic domain theory (SDT) as a field started when Dana Scott conjectured that one ought to be able to reason about domains as if they were just sets \emph{provided} that one employs a constructive ambient metalanguage. In technical terms, this metalanguage can be construed as the internal languages of a topos, \ie an extensional dependent type theory. In general it is difficult to reconcile the domain-theoretic structure with the rich logical structure of dependent type theory, so the quest for SDT was in essence about constructing full subcategories of topoi that supported domain-theoretic constructions. Fullness is a critical property --- it means that \emph{every} map definable in type-theoretic language is a domain morphism, which absolves one from checking onerous side conditions when working internally. There are two well-known ways to obtain models of synthetic domain theory: one based on realizability (Hyland~\cite{hyland:1991}, Phoa~\cite{phoa:1991}, Reus~\cite{reus:1995}) and one based on sheaf topoi (Rosolini~\cite{rosolini:1986}, Fiore and Rosolini~\cite{fiore-rosolini:1997}, Fiore and Plotkin~\cite{fiore-plotkin:1996}, Matache \etal~\cite{matache-moss-staton:2021}). In this paper we will construct a sheaf model of SDT, so we take a moment to recall the basic ideas, mainly drawing from the work of Fiore and Plotkin~\cite{fiore-plotkin:1996}. 

\emph{Sheaf models of SDT.} The intuition behind (pre)sheaf models of synthetic domain theory is to use the Yoneda embedding to embed a category of predomains as a full subcategory of an ambient category whose internal language is dependent type theory. Thus given a (small\footnote{The fact that natural categories of predomains are rarely small can be overcome by means of Grothendieck universes or a small dense subcategory (\eg $\{\omega\} \xhookrightarrow{\text{full}} \OCPO{}$).}) category of predomains $G$, the category of presheaves on $G$ presents the original category of predomains as a full subcategory, which constitutes a simple model of SDT. However there are some strange properties of this model from an internal perspective. For example, because the $\emptyset$ is the initial object in $G$, it is a sub-predomain of every other predomain. But because the embedding does not preserve colimits, the subcategory of predomains does not contain an initial object from the perspective of \PSH{G}; in other words, this means that while the empty set is a subset of every predomain, it is not a sub-predomain, which obstructs the use of the ambient logic to reason about (pre)domains. We may repair the model by imposing a nontrivial Grothendieck topology on $G$ that ensures that the empty predomain is preserved by the embedding into the category of sheaves on $G$. In practice one may choose to preserve additional colimits as the application demands --- for instance in \cref{sec:model} we will construct a model of SDT in which discrete finite predomains are preserved.

\subsubsection{Predomains and the intrinsic order}\label{subsubsec:intrinsic-preorder}

Contrary to classical domain theory, ordering on predomains is a derived notion in synthetic domain theory. Nonetheless one may equip a synthetic predomain with an \emph{intrinsic preorder} that is analogous to the information order of ordinary predomains. By default the intrinsic/synthetic preorder is not very well-behaved: it is not pointwise on limits of predomains, and it is not even a partial order in general. In this paper we shall define and base our constructions on a notion of synthetic predomains for which the intrinsic preorder is extremely well-behaved: it is pointwise, partially ordered, and closed under \emph{synthetic} $\omega$-chains, which is the counterpart to closure under $\Nat$-chains for classic $\omega$-cpos.

Aside from intrinsic interest, we are motivated by two practical incentives for considering and developing the theory of the synthetic preorder. First, we would like to connect our work to the type theory for cost analysis developed by Grodin~\etal~\cite{grodin-niu-sterling-harper:2024}. In that setting one works with a \emph{cost ordering} on programs in which $e \le e'$ represents a proof of the fact that $e'$ is an upper bound of $e$ with respect to computational cost. We expect that one may extend the work of \emph{op. cit.} to account for recursive programs by the methods we develop in this paper, and the information/intrinsic preorder appears to be the correct theoretical framework for describing the interaction of cost and partiality~\cite[Section 8]{kavvos-morehouse-licata-danner:2019}. Second, the advantages of the synthetic point of view outlined \cref{subsubsec:SDT} do not detract from the benefits of the \emph{language} of classical domain theory. The intrinsic preorder integrates the advantages of both the synthetic and classical perspective by providing an intuitively appealing language for reasoning about domains that automatically complies with well-definedness conditions such as monotonicity and continuity.

\subsubsection{Synthetic predomains from orthogonality}\label{subsubsec:orthogonality}

  From an internal perspective, predomains in a model of SDT~\cite{fiore-plotkin:1996,reus-streicher:1999,sterling-harper:2022} are often defined in terms of (internal) orthogonality conditions. Intuitively, an orthogonality condition can be thought of as a way of specifying a closure property with respect to a given \emph{figure shape} $X \to Y$. A type \(A\) is \emph{orthogonal} to a map \(f : X  \to  Y\) when \(A^f : A^Y  \to  A^X\) is invertible. In other words \(A\) is orthogonal to \(f\) when \(A\) ``thinks'' \(f\) is an isomorphism. For instance, an $\omega$-cpo is a poset $P$ that is orthogonal to the figure shape $\{0 \le 1 \le \dots\} \hookrightarrow \{0 \le 1 \le \dots \le \infty\} $, \ie when $P$ is closed under joins of $\Nat$-chains. In \cref{sec:predomains} we define predomains by means of multiple such orthogonality conditions. The benefit of this general approach to synthetic domain theory is that subcategories defined by (internal) orthogonality conditions are automatically (internally) reflective --- this closes the predomains under structures necessary for day-to-day denotational semantics such as product and function types. Moreover, \emph{limits} of predomains are preserved by the inclusion into the ambient SDT topos, which allows one to reason about the denotational semantics in a straightforward way using the internal language. Note that colimits of predomains always exist by virtue of the reflection, but they are usually not preserved by inclusion; in some situations this may be calibrated by ensuring a precise correspondence between the original domain-theoretic site and the subcategory of predomains (\eg the internal characterization of $\omega$-cpos of Fiore and Rosolini~\cite{fiore-rosolini:1997:cpos}).

\subsection{Contributions}
The contribution of our work is summarized as follows:
\begin{enumerate}
  \item An axiomatization of a type theory based on SDT that incorporates both a phase distinction and a subuniverse of predomains whose intrinsic order structure is well-behaved (\cref{sec:type-theory}).
  \item A denotational semantics for \pcfc{} exhibiting noninterference between cost and behavior (\cref{sec:den-sem}).
  \item A dynamic semantics for \pcfc{} in which execution is modeled directly as computation (\cref{sec:comp-sem}).
  \item An internal cost-sensitive adequacy theorem identifying the two semantics at base type (\cref{sec:adequacy}).
  \item A relative sheaf model of SDT justifying the axioms of the type theory (\cref{sec:model}).
\end{enumerate}
\section{A type theory for cost-sensitive synthetic domain theory}\label{sec:type-theory}

We work in an extensional dependent type theory combining synthetic domain theory with the intension-extension phase distinction of Niu~\etal~\cite{niu-sterling-grodin-harper:2022}; the semantics and model construction of this theory is developed in \cref{sec:predomains} and \cref{sec:model}. Here we outline the language furnished by this combination. Following Niu~\etal~\cite{niu-sterling-grodin-harper:2022}, we assume an indeterminate proposition $\phase{}$ representing the phase distinction. We assume a (reflective) subuniverse of predomains $\mathcal{U}_\kw{predom}$. Every predomain is equipped with a \emph{synthetic} $\omega$-complete partial order structure, which we explain in \cref{subsec:synth-omega-cpo}. The significance of {synthetic} $\omega$-completeness is that they can be used to construct the join of \emph{rational chains} ($\Nat$-chains arising from the iterates of an endomap), the critical component in the interpretation of fixed-points in both classical and synthetic domain theory. 

By a \emph{domain} we shall mean a predomain $X$ equipped with a least element. Equivalently this may be characterized as an algebra for the \emph{lifting monad} $\mathbb{L} = (\kw{L}, \eta, \mu)$, which freely adjoins a least element to a predomain. There is a special domain $\Sigma \cong \lift{1} \hookrightarrow \Omega$ spanned by propositions called the \emph{Sierpiński space} or \emph{dominance}; propositions in $\Sigma$ should be thought of as the support of partial maps, and a map $X \to \Sigma$ can be thought of as a ``computational'' or ``observable'' subset of a predomain $X$. Moreover, because we use the \emph{sealing modality} (see \cref{subsubsec:phase-distinction}) associated with the phase proposition $\P$ to define the denotational cost semantics of \pcf{}, we require that predomains are closed under the sealing modality. Lastly, we assume that the subuniverse $\mathcal{U}_\kw{predom}$ is closed under lifting.

\subsection{Denotational semantics of cost-sensitive \pcf{}}\label{subsec:den-sem}

We aim to prove the computational adequacy property for a cost-sensitive version of \pcf{} called \pcfc{} in which cost and partiality are treated as a single \emph{call-by-push-value effect}. We spell out the details of the language in \cref{sec:pcfc}, but the basic idea is to separate pure values and effectful computations at both the term and type level, resulting in a class of value types $A$ and a class of computation types $X$. The type structure of the language is generated by a pair of type operators $\kw{F}, \kw{U}$ in which $\kw{F}(A)$ represents the partial cost-aware computations of type $A$ and $\kw{U}(X)$ represents the value type whose underlying set of points are computations of type $X$. In contrast to call-by-value languages, call-by-push-value function types are \emph{computations} and take the form $A \to X$. As an example, the ordinary call-by-value \pcf{} function type $\kw{nat} \to \kw{nat}$ corresponds to the type $\kw{U}(\kw{nat} \to \kw{F}(\kw{nat}))$ in the call-by-push-value setting --- here the codomain type $\kw{F}(\kw{nat})$ records both the cost incurred and the possibility of divergence, and the outer $\kw{U}(-)$ corresponds to the fact that functions are values in a call-by-value setting. 

The denotational semantics of \pcfc{} is based on the adjunction models of call-by-push-value~\cite{levy:2003:book} in which $\sem{\kw{F}(A)}$ is defined to be the lift of $\mathbb{C} \times \sem{A}$ for a \emph{purely intensional} cost monoid $\mathbb{C}$ in the sense of \cref{subsubsec:phase-distinction}. By using a purely intensional type for the cost semantics, we may prove a cost-sensitive computational adequacy theorem that can be restricted to an extensional adequacy result for \pcf{}. Moreover, by interpreting the cost this way, the resulting denotational cost semantics evinces a form of \emph{information flow security} when viewing cost and behavior as security levels:

\begin{proposition}
  For any purely extensional type $B$, every map $f : \mathbb{C} \to \lift{B}$ is \emph{weakly constant} in the sense that for any inputs $x, y : \mathbb{C}$, we have that $f~x$ and $f~y$ are equal whenever they are both defined. 
\end{proposition}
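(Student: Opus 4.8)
The plan is to exploit the noninterference principle recalled in \cref{subsubsec:phase-distinction}: any map from a purely intensional type into the extensional part $\P \to (-)$ of a type is constant. The strategy is to apply this to $f$ viewed through the extensional phase, obtaining constancy of $f$ ``under $\P$'', and then to bootstrap this conditional equality into an unconditional one by using the extensionality of $B$ together with the hypothesis that both values are defined.

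First I would make the partiality explicit by presenting the lift via the dominance, writing $\lift{B} \cong \sum_{p : \Sigma} (p \to B)$, so that $f(x) = (p_x, h_x)$ with $p_x : \Sigma$ the support (``$f~x$ is defined'') and $h_x : p_x \to B$ the value. Fixing $x, y : \mathbb{C}$ together with witnesses $u : p_x$ and $v : p_y$ of definedness, the goal $f~x = f~y$ in $\lift{B}$ splits into (i) $p_x = p_y$ in $\Sigma \hookrightarrow \Omega$ and (ii) $h_x(u) = h_y(v)$ in $B$. Part (i) is immediate, since $p_x$ and $p_y$ are both inhabited propositions and hence equal.

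For part (ii) I would apply noninterference to the map $\mathbb{C} \to (\P \to \lift{B})$ sending $x \mapsto \lambda\_.\, f(x)$, i.e.\ the composite of $f$ with the unit of the restriction modality. Since $\mathbb{C}$ is purely intensional, this map is constant, which is exactly the statement that $\P \to (f~x = f~y)$ for all $x, y$. Restricting this equality of pairs to its value component at the definedness witnesses yields $\P \to (h_x(u) = h_y(v))$. Now the extensionality of $B$ enters: because $B \to (\P \to B)$ is an isomorphism, it is in particular injective, so an equality of elements of $B$ that holds under $\P$ already holds unconditionally; applying this to $h_x(u)$ and $h_y(v)$ discharges part (ii), and combining with (i) gives $f~x = f~y$.

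I expect the main obstacle to be the differing behavior of the two components of the lift under the phase. The support lands in $\Sigma$, which is not purely extensional, so noninterference gives no information there directly; this is precisely why the ``both defined'' hypothesis is needed, both to force the supports to agree in part~(i) and to expose genuine values in $B$ on which noninterference can act in part~(ii). The delicate step is the passage from the ``under $\P$'' equality supplied by noninterference to an honest equality in $B$: this is legitimate only because the values inhabit the purely extensional type $B$ and would fail for the intensional support, so I would take care to isolate the value component before invoking extensionality.
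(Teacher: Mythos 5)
Your proof is correct, and it follows essentially the same route as the paper's treatment of noninterference (the paper states this proposition without a printed proof, but its proof of the syntactic noninterference theorem in the appendix uses exactly your two key moves: reduce to an equality in the purely extensional type, which may be checked under $\P$ because the unit $B \to (\P \to B)$ is injective, and then collapse the purely intensional $\mathbb{C}$ under $\P$). Your care in separating the $\Sigma$-valued support (handled by the ``both defined'' hypothesis) from the $B$-valued payload (handled by noninterference) is precisely the right decomposition.
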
 

As an application of computational adequacy, we can immediately transfer this intrinsic denotational security property to functions of \pcfc{}. 

\begin{remark}
  As mentioned in \cref{subsubsec:internal}, we will define both the syntax and semantics of \pcfc{} as objects and functions in the internal type theory of a synthetic domain theory topos. In a more traditional approach to denotational cost semantics, one may instead define a model of \pcfc{} in the category $\omega\textbf{CPO}^\to$ of families of $\omega$-cpos, which is semantically simpler in comparison to models of synthetic domain theory. In this paper we pursue an approach based on the latter in order to unify two strands of prior work --- cost-sensitive programming and verification in dependent type theory~\cite{niu-sterling-grodin-harper:2022} and internal denotational semantics~\cite{paviotti-mogelberg-birkedal:2015,niu-harper:2023} --- and synthetic domain theory provides the means for smoothly integrating higher-order recursion into dependent type theory. 
\end{remark}

\section{Cost-sensitive predomains in synthetic domain theory}\label{sec:predomains}

As mentioned in the \cref{subsubsec:intrinsic-preorder}, the purpose of this section is to define a notion of predomains in synthetic domain theory such that the intrinsic preorder on predomains is partially ordered, defined pointwise on function spaces, and is closed under suprema of synthetic \(\omega\)-chains. We will use these properties to define and reason about \emph{admissible subsets of domains}. We define the basic notions of synthetic domain theory, and give an axiomatization of the rest of the paper in terms of \emph{SDT models of the intension-extension phase distinction}, culminating in a category of predomains satisfying the properties laid out above. 

\subsection{Partial maps, dominance, and lifting}

In a category with pullbacks, a \emph{partial map} $A \rightharpoonup B$ is a span $A \hookleftarrow D \to B$ consisting of a subset $D \hookrightarrow A$ on which $A \rightharpoonup B$ is defined. In synthetic domain theory, only certain monomorphisms correspond to domains of definitions of partial maps. In the terminology of Rosolini~\cite{rosolini:1986} such a collection is called a dominion:

\begin{definition}
  A pullback-stable collection of monomorphisms is a called a \emph{dominion} when it is closed under identity and composition. 
\end{definition}

\begin{definition}
  Let \CatIdent{E} be an elementary topos equipped with a subobject \(\Sigma   \hookrightarrow  \Omega\) such that $\top \in \Sigma$. A monomorphism is \emph{classified} by $\Sigma$ if and only if its characteristic map factors through $\Sigma \hookrightarrow \Omega$. A \emph{dominance} is a subobject $\Sigma \hookrightarrow \Omega$ such that the class of monos classified by $\Sigma$ is a dominion. We call a proposition (resp., predicate) factoring through $\Sigma$ a \emph{$\Sigma$-proposition} (resp., \emph{$\Sigma$-predicate}). 
\end{definition}

In the internal language, this means that \(\Sigma\) contains the true proposition \(\top\) and is closed under dependent sums, which we write as \(\phi   \mathbin{\angle}  f\) given \(\phi  :  \Sigma\) and \(f :  \phi   \to   \Sigma\). These structures ensure that the dominance determines a \emph{lifting monad} $\mathbb{L} = (\kw{L}, \eta, \mu)$ whose action on points are defined as follows:

\begin{definition}
  The \emph{lift} of a type $A$ relative to a dominance $\Sigma$ is defined as $\lift{A} = \Sigma_{\phi : \Sigma}.~\phi \to A$. 
\end{definition}

The lifting monad is also called the \emph{partial map classifier} because every partial map $A \hookleftarrow D \to B$ with $D$ a $\Sigma$-subobject of $A$ appears as the pullback of $\eta_B : B \to \lift{B}$ for a unique $A \to \lift{B}$. 

\begin{nota}
  Given a partial element \(e :  \mathsf{L} A\), we write \(e { \downarrow }  :  \Sigma\) for its support, \ie{} $- {\downarrow}$ is the first projection $\kw{L}(A) \to \Sigma$. When it is known that \(e { \downarrow }\) holds, we may write \(e : A\) for the defined element. 
\end{nota}

\subsection{Complete types}

In synthetic domain theory the structure of predomains is generated not from consideration of $\Nat$-indexed chains but rather a new notion of chains called a \emph{synthetic} $\omega$-chain, which is defined simply to be a map out of $\omega$. The difference between the two notions of chain is  elucidated by the fact that $\Nat$ is the initial lifting algebra for the dominance of decidable propositions, whereas $\omega$ is the initial algebra for a larger dominance.

\begin{definition}
A \emph{synthetic \(\omega\)-chain} is a map \(\omega \to  A\) from the initial $\kw{L}$-algebra or lift algebra $\omega$.
\end{definition}

The closure properties of synthetic $\omega$-chains can be captured by considering an orthogonality condition relative to the figure shape \(\omega   \hookrightarrow   \overline{\omega}\) induced by the inclusion of the initial lift algebra in the \emph{final lift coalgebra} $\overline{\omega}$. As we alluded to in \cref{subsubsec:orthogonality}, orthogonality is a way to identify types that ``think'' certain maps are isomorphisms: 

\begin{definition}
A type \(A\) is \emph{orthogonal} to \(f : X  \to  Y\) when there is a unique extension of any map \(g : X  \to  A\) to a map \(\overline{g}  : Y  \to  A\) such that \(g =  \overline{g}   \circ  f\). 
\end{definition}

\begin{definition}
  A type \(A\) is called \emph{complete} when it is orthogonal to the figure shape $\omega \hookrightarrow \overline{\omega}$, and \emph{well-complete} when $\lift{A}$ is complete.
\end{definition}

Complete types~\cite{hyland:1991,phoa:1991} are the synthetic counterpart to $\omega$-cpos in classic domain theory. The class of well complete types was introduced in Longley and Simpson~\cite{longley-simpson:1997} as the least restrictive possible notion of predomain that is closed under lifting. In this paper we consider the dual \emph{most restrictive} class of predomain, the \emph{replete} types~\cite{hyland:1991}, in order to obtain a sharper characterisation of the intrinsic order relation (see \cref{subsec:intrinsic}) on predomains. 

\begin{definition}
  A map \(f : X  \to  Y\) is called \emph{\(\Sigma\)-equable} or a \emph{\(\Sigma\)-isomorphism} when \(\Sigma\) is orthogonal to \(f\). 
\end{definition}

\begin{definition}\label[defn]{def:replete}
  A type is \emph{replete} when it is orthogonal to every \(\Sigma\)-iso. A \emph{predomain} is a replete type. 
\end{definition}

In other words, a predomain respects every $\Sigma$-isomorphism; we will use this fact to easily transfer properties that hold of $\Sigma$ to every predomain in \cref{sec:properties}.

\subsection{The intrinsic order}\label{subsec:intrinsic}

For every type $A$, the dominance $\Sigma$ induces an \emph{intrinsic preorder} $\specle_A$ on $A$ analogous to the specialization preorder on a topological space: $x \specle_A y$ if and only if $f~x$ implies $f~y$ for every \(f:A \to \Sigma\). Viewing $f : A \to \Sigma$ as a computational or observable property of $A$, $x \specle_A y$ holds whenever $y$ satisfies every observable property of $x$. By default the intrinsic preorder is relatively unconstrained --- for instance, it need not be a partial order in general and on the dominance $\Sigma$ it need not coincide with the implication order on propositions. The purpose of this section is to axiomatize some constraints on $\Sigma$ that will make intrinsic preorder coincide with the implication order on $\Sigma$; this is used in the characterization of the intrinsic order of lifting in \cref{sec:properties}. To this end, we introduce an intermediate \emph{path relation} on types: 

\begin{definition}
  A \emph{path} in a type $A$ is a map $\Sigma \to A$. The \emph{boundary} of a path $f:\Sigma\to A$ is the pair \(\boundary f = (f~ \bot , f~ \top )\). We write $x \pathle_A y$ when there exists a path whose boundary is $(x, y)$. 
\end{definition}

The path relation is an alternative way to surface the order structure of predomains, studied in much more detail by Fiore~\cite{fiore:1995}; the path relation is also used by Grodin~\etal~\cite{grodin-niu-sterling-harper:2024} to obtain a theory of synthetic preorders for cost analysis. In this paper one can view the path relation as an auxiliary notion that ultimately coincides with the intrinsic preorder on predomains. In general we call types for which this holds linked, following Phoa~\cite{phoa:1991}:

\begin{definition}
 A type is called \emph{linked} when the intrinsic preorder coincides with the path relation.
\end{definition}

The fact that the intrinsic order on $\Sigma$ coincides with the implication order follows if $\Sigma$ is linked. This latter property holds when $\Sigma$ satisfies one of the fundamental axioms of synthetic domain theory: 
\begin{definition}
  The dominance $\Sigma$ satisfies \emph{Phoa's principle} when the boundary evaluation map $\boundary : \Sigma^\Sigma \to \Sigma \times \Sigma$ factors as an isomorphism followed by an inclusion: $\Sigma^\Sigma \cong \{(\phi, \psi) \mid \phi \to \psi\} \hookrightarrow \Sigma \times \Sigma$. 
\end{definition}

Phoa's principle expresses the fact that the path space of $\Sigma$ is fully characterized by ordered pairs of $\Sigma$-propositions (with respect to implication). Because the negation of an observable property is not in general observable, Phoa's principle may be seen as an explicit statement of the constructive/observable nature of $\Sigma$-propositions, \ie there is no map $\Sigma \to \Sigma$ sending $\phi$ to $\neg\phi$. 

\begin{restatable}{proposition}{PropDomLinked}\label[prop]{prop:dom-linked}
Assuming Phoa's principle, the dominance \(\Sigma\) is linked.
\end{restatable}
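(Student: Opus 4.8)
The plan is to show that both relations on $\Sigma$ collapse to the implication order $\to$ on $\Sigma$-propositions, and therefore coincide with each other. Phoa's principle is precisely the bridge: the boundary map $\boundary : \Sigma^\Sigma \to \Sigma \times \Sigma$ is an isomorphism onto $\{(\phi,\psi)\mid\phi\to\psi\}$. I would use its two halves separately — that every boundary of an endomap is a monotone pair (the inclusion), and that every monotone pair is realized by some endomap (the surjection) — one for each inclusion between $\specle_\Sigma$ and $\pathle_\Sigma$.

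First I would prove $\pathle_\Sigma \subseteq \specle_\Sigma$. Suppose $x \pathle_\Sigma y$, witnessed by a path $g : \Sigma \to \Sigma$ with $g\,\bot = x$ and $g\,\top = y$. For an arbitrary observation $f : \Sigma \to \Sigma$, the composite $f \circ g$ is again an endomap of $\Sigma$, and its boundary is $(f\,x, f\,y)$; Phoa's principle forces this boundary into $\{(\phi,\psi)\mid\phi\to\psi\}$, giving $f\,x \to f\,y$. Since $f$ was arbitrary, $x \specle_\Sigma y$. I would remark in passing that this argument never uses anything special about the domain, so under Phoa's principle the path relation is contained in the intrinsic preorder for \emph{every} type; it is only the reverse inclusion that is particular to $\Sigma$.

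For the converse $\specle_\Sigma \subseteq \pathle_\Sigma$, suppose $x \specle_\Sigma y$. Instantiating the definition of the intrinsic preorder with the single observation $\mathrm{id}_\Sigma$ collapses it to $x \to y$, since $\mathrm{id}_\Sigma\, x = x$ and $\mathrm{id}_\Sigma\, y = y$. This places $(x,y)$ in $\{(\phi,\psi)\mid\phi\to\psi\}$, so by the surjectivity half of the Phoa isomorphism there is an endomap $g : \Sigma \to \Sigma$ with $\boundary g = (x,y)$ — exactly a path witnessing $x \pathle_\Sigma y$. Combining the two inclusions yields $\specle_\Sigma = \pathle_\Sigma$, i.e. $\Sigma$ is linked.

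The proof is short once Phoa's principle is in hand, so the only real subtlety is bookkeeping: keeping careful track of which half of the isomorphism each direction consumes, and resisting the temptation to route the monotonicity step through the explicit normal form $\sigma \mapsto f\,\bot \vee (\sigma \wedge f\,\top)$ of an endomap. That normal form would demand that $\Sigma$ be closed under the relevant join, which a bare dominance need not satisfy; reasoning only with the boundaries of composites $f \circ g$ sidesteps this requirement entirely. I expect this decoupling to be the one point that genuinely requires care.
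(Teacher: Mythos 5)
Your proof is correct and follows essentially the same route as the paper's: the inclusion $\pathle_\Sigma \subseteq \specle_\Sigma$ by observing that the boundary of the composite $f \circ g$ is a monotone pair under Phoa's principle, and the converse by specializing the intrinsic preorder at the identity to get $x \to y$ and then invoking the surjective half of the Phoa isomorphism to produce a path. Your closing remark about avoiding the explicit normal form of endomaps of $\Sigma$ is a sensible precaution, but it does not change the substance of the argument.
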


\begin{corollary}
  Assuming Phoa's principle, the intrinsic order on \(\Sigma\) is the implication order.
\end{corollary}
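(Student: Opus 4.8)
The plan is to derive the corollary directly from \cref{prop:dom-linked} together with Phoa's principle, so that the only real content is matching up the path relation on $\Sigma$ with the implication order. First I would invoke \cref{prop:dom-linked}: assuming Phoa's principle, $\Sigma$ is linked, meaning its intrinsic preorder $\specle_\Sigma$ and its path relation $\pathle_\Sigma$ coincide. Thus for any $\phi, \psi : \Sigma$ we have $\phi \specle_\Sigma \psi$ if and only if $\phi \pathle_\Sigma \psi$, and it suffices to identify $\pathle_\Sigma$ with the implication order on $\Sigma$.

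Next I would unfold the path relation on $\Sigma$. A path in $\Sigma$ is by definition a map $\Sigma \to \Sigma$, i.e.\ an element of $\Sigma^\Sigma$, and the boundary of such a path $f$ is the pair $\boundary f = (f~\bot, f~\top)$. Hence $\phi \pathle_\Sigma \psi$ holds precisely when $(\phi, \psi)$ lies in the image of the boundary evaluation map $\boundary : \Sigma^\Sigma \to \Sigma \times \Sigma$. Phoa's principle asserts exactly that this map factors as an isomorphism $\Sigma^\Sigma \cong \{(\phi, \psi) \mid \phi \to \psi\}$ followed by the inclusion into $\Sigma \times \Sigma$; in particular its image is the subobject $\{(\phi, \psi) \mid \phi \to \psi\}$. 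Therefore $(\phi, \psi)$ lies in the image of $\boundary$ if and only if $\phi \to \psi$, which gives $\phi \pathle_\Sigma \psi$ iff $\phi \to \psi$.

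Chaining the two equivalences yields $\phi \specle_\Sigma \psi$ iff $\phi \to \psi$ for all $\phi, \psi : \Sigma$, which is precisely the assertion that the intrinsic order on $\Sigma$ is the implication order. Since essentially all of the substantive work has been packaged into \cref{prop:dom-linked}, the proof has no genuine obstacle; the one point requiring care is the bookkeeping that the path relation really is the image of $\boundary$ and that Phoa's principle pins that image down to the implication-ordered pairs, rather than merely to some superset or subset of them. Once that identification is made explicit, the corollary is immediate.
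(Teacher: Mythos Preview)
Your argument is correct and is exactly the intended one: the paper states the corollary without proof because it is immediate from \cref{prop:dom-linked} together with Phoa's principle, and your two-step chain (intrinsic order $=$ path relation by linkedness, path relation $=$ implication order by reading off the image of $\boundary$ from Phoa's principle) is precisely how that inference unpacks.
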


Moreover, the path in $\Sigma$ associated to every implication $\phi \to \psi$ is unique in the following sense: 
\begin{definition}
  A type $A$ is \emph{boundary separated} when any two paths in $A$ sharing a boundary are equal.
\end{definition}

\begin{proposition}\label[prop]{prop:sigma-boundary-sep}
  Assuming Phoa's principle, $\Sigma$ is boundary separated. 
\end{proposition}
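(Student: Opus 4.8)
The plan is to observe that boundary separatedness of $\Sigma$ is, after unfolding definitions, precisely the injectivity of the boundary evaluation map on $\Sigma^\Sigma$, and that this injectivity falls out immediately from the factorization posited by Phoa's principle. So the real work is bookkeeping: aligning the statement with the exact shape of the hypothesis.

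First I would unfold the definitions. A path in $\Sigma$ is by definition a map $\Sigma \to \Sigma$, i.e. an element of $\Sigma^\Sigma$, and two paths $f, g : \Sigma \to \Sigma$ share a boundary exactly when $\boundary f = \boundary g$ in $\Sigma \times \Sigma$, where $\boundary f = (f~\bot, f~\top)$. Thus the assertion ``$\Sigma$ is boundary separated'' is the claim that $\boundary f = \boundary g$ implies $f = g$; equivalently, that the boundary evaluation map $\boundary : \Sigma^\Sigma \to \Sigma \times \Sigma$ is a monomorphism. This is the reformulation I would record explicitly, since it is where the content of the proposition lives.

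Next I would invoke Phoa's principle directly. By hypothesis $\boundary$ factors as an isomorphism $\Sigma^\Sigma \cong \{(\phi, \psi) \mid \phi \to \psi\}$ followed by the inclusion $\{(\phi, \psi) \mid \phi \to \psi\} \hookrightarrow \Sigma \times \Sigma$. An isomorphism and a subobject inclusion are both monomorphisms, and monomorphisms compose, so $\boundary$ is itself a monomorphism. Internally this reads: given $(f~\bot, f~\top) = (g~\bot, g~\top)$, transporting back along the iso component of the factorization yields $f = g$. This is exactly the uniqueness required.

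I do not expect a genuine obstacle here. The only subtlety worth flagging is the definitional coincidence identified in the first step — that boundary separatedness of $\Sigma$ and injectivity of $\boundary$ are the same statement — after which Phoa's principle supplies the result with no further computation. In particular, unlike \cref{prop:dom-linked}, this proposition does not require reasoning about the implication order or the structure of individual paths, only the fact that the whole path space $\Sigma^\Sigma$ embeds into $\Sigma \times \Sigma$ via $\boundary$.
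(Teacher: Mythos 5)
Your proof is correct: boundary separation of $\Sigma$ is precisely the statement that $\boundary : \Sigma^\Sigma \to \Sigma \times \Sigma$ is monic, and Phoa's principle factors $\boundary$ as an isomorphism followed by a subobject inclusion, hence a mono. The paper omits the proof as immediate, and your argument is exactly the intended one.
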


\subsection{Axioms of the phase distinction in SDT and predomains}\label{subsec:predomains} 

Having developed the axiomatics of the ordinary synthetic domain theoretic components of our work, we now introduce the intension-extension phase distinction as discussed in \cref{subsubsec:phase-distinction}. The meeting point of the domain-theoretic and the cost-sensitive aspects of our work is simply expressed by the requirement that the phase proposition $\phase$ is a $\Sigma$-proposition. This allows us to manufacture a \emph{purely intensional} monoid from a standard cost monoid that we will use to define a denotational semantics for \pcf{} that exhibits the natural information flow security properties with respect to cost and behavior as outlined in \cref{subsec:den-sem}. 

More specifically, we may use the fact that $\phase$ is a $\Sigma$-proposition to define the sealing monad $\phase \vee -$ mentioned in \cref{subsubsec:phase-distinction} that can be seen as the canonical way of making a predomain purely intensional. When defining the denotational semantics, we may then take as an input to the model construction any ordinary monoid (in the subuniverse of predomains) and apply the sealing monad to obtain a purely intensional monoid. Lastly, to exhibit the kind of noninterference property of cost and behavior as discussed in \cref{subsubsec:phase-distinction}, we require that there is a purely extensional base predomain. These considerations lead us to the following axiomatization of our synthetic domain theory topos: 

\begin{definition}\label[defn]{def:axioms}
  An \emph{SDT model of the intension-extension phase distinction} is an elementary topos \CatIdent{E} equipped with a dominance $\Sigma$ satisfying Phoa's principle such that $\Sigma$ is complete, and a distinguished $\Sigma$-proposition $\P$ such that the type of booleans $2$ is an extensional predomain. 
\end{definition}

For the rest of the paper we assume a given SDT model of the intension-extension phase distinction \CatIdent{D}. All constructions are all carried out in the internal language of \CatIdent{E}.

\section{Properties of predomains}\label{sec:properties}

We now establish the expected characterizations of the intrinsic preorder (\cref{subsec:intrinsic}) on predomains:
\begin{enumerate}
  \item The intrinsic preorder is pointwise on products, functions, and liftings of predomains. 
  \item The intrinsic preorder on a predomain is a \emph{synthetic $\omega$-complete partial order}. 
  \item The synthetic $\omega$-complete partial order structure of predomains is defined componentwise for products and functions between predomains. 
\end{enumerate} 

The general properties of the intrinsic preorder and link relation in SDT have been investigated in several prior works~\cite{phoa:1991,reus:1995,longley-simpson:1997}. Here we recall just a few important properties that we will need.

\begin{restatable}{proposition}{PropPredomainProperties}\label[prop]{prop:predomain-properties}
  Any predomain $A$ enjoys the following properties:
  \begin{enumerate}
    \item{\emph{Completeness}: \(A\) is orthogonal to \(\omega   \hookrightarrow   \overline{\omega}\).}  
    \item{\emph{Anti-symmetry}: the intrinsic preorder on \(A\) is a partial order.}  
    \item{\emph{Boundary separation}: maps \(\Sigma   \to  A\) with equal boundary are equal.}  
    \item{\emph{Linkedness}: the intrinsic preorder and the link relation on \(A\) coincide.}
  \end{enumerate}    
\end{restatable}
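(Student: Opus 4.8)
The plan is to use repleteness (\cref{def:replete}) as a uniform \emph{transfer principle}: since a predomain \(A\) is orthogonal to every \(\Sigma\)-iso, any property of \(\Sigma\) that can be packaged as orthogonality to a single map \(f\) will hold of \(A\) as soon as we exhibit \(f\) as a \(\Sigma\)-iso. The facts about \(\Sigma\) I will feed into this machine are that \(\Sigma\) is complete (part of \cref{def:axioms}), boundary separated (\cref{prop:sigma-boundary-sep}), and linked with intrinsic order equal to implication (\cref{prop:dom-linked} and its corollary).

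Completeness (1) is immediate: \(\Sigma\) being complete says exactly that \(\Sigma\) is orthogonal to \(\omega \hookrightarrow \overline{\omega}\), \ie that \(\omega \hookrightarrow \overline{\omega}\) is a \(\Sigma\)-iso, so repleteness of \(A\) transfers the orthogonality directly. For boundary separation (3) I would first note that, for any type \(T\), boundary separation is equivalent to orthogonality against the fold map \(\nabla\colon \Sigma \sqcup_{2} \Sigma \to \Sigma\) out of the pushout of two copies of \(\Sigma\) glued along their endpoints \(2 = \{\bot,\top\} \hookrightarrow \Sigma\): a map \(\Sigma \sqcup_{2} \Sigma \to T\) is precisely a pair of paths sharing a boundary, and the unique extension along \(\nabla\) exists exactly when the two paths coincide. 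Instantiating \(T = \Sigma\), \cref{prop:sigma-boundary-sep} says \(\Sigma\) is orthogonal to \(\nabla\), \ie \(\nabla\) is a \(\Sigma\)-iso; repleteness then transfers boundary separation to \(A\). As a corollary the boundary map \(\boundary\colon A^\Sigma \to A \times A\) is monic, so paths in a predomain are determined by their endpoints.

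For antisymmetry (2) I would package \(\Sigma\)-separatedness as orthogonality. Let \(\approx\) be the relation of \(\Sigma\)-indistinguishability, \(x \approx y \iff \forall f\colon A \to \Sigma,\ f~x = f~y\), and let \(\pi\colon A \to A/{\approx}\) be the quotient. By construction of \(\approx\) every \(f\colon A \to \Sigma\) factors uniquely through \(\pi\), so \(\Sigma\) is orthogonal to \(\pi\), \ie \(\pi\) is a \(\Sigma\)-iso. By repleteness \(A\) is orthogonal to \(\pi\), and the unique lift of \(\mathrm{id}_A\) along \(\pi\) is a retraction of \(\pi\), which together with \(\pi\) being a regular epimorphism forces \(\pi\) to be an isomorphism; hence \(\approx\) is equality and \(A\) is \(\Sigma\)-separated. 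Antisymmetry then follows: if \(x \specle_A y\) and \(y \specle_A x\) then, since the intrinsic order on \(\Sigma\) is implication, \(f~x \leftrightarrow f~y\) and so \(f~x = f~y\) for all \(f\), whence \(x \approx y\) and \(x = y\).

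Linkedness (4) is where I expect the main obstacle. One inclusion is formal and holds for every type: if \(x \pathle_A y\) via a path \(p\), then for each \(f\colon A \to \Sigma\) the composite \(f \circ p\) is a path in \(\Sigma\) with boundary \((f~x, f~y)\), so \(f~x \to f~y\) by linkedness of \(\Sigma\), giving \(x \specle_A y\). The hard direction — manufacturing an actual path from the hypothesis \(x \specle_A y\) — is where repleteness must do genuine work. My approach would be to use \(\Sigma\)-separatedness to embed \(A \hookrightarrow \Sigma^{(A \to \Sigma)}\), and for each \(f\) take the explicit Phoa path \(\sigma \mapsto f~x \vee (\sigma \wedge f~y)\), which has boundary \((f~x, f~y)\) precisely because \(f~x \to f~y\); these assemble into a single path \(\Sigma \to \Sigma^{(A \to \Sigma)}\) with boundary \((x,y)\). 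The delicate point, and the crux of the proposition, is to show this path factors back through the embedding of \(A\) — \ie that the image of the separation map is closed under it — which is exactly the closure that repleteness (as reflectivity against \(\Sigma\)-isos) should supply, but which does not reduce to a single formal orthogonality transfer and must be argued with care.
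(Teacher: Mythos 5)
Your items (1)--(3) are correct and in the spirit of the paper's proof, which is a one-paragraph appeal to the same transfer principle: completeness and boundary separation are orthogonality conditions satisfied by $\Sigma$, hence inherited by every replete type. Your explicit presentation of boundary separation as orthogonality to the fold map $\nabla\colon \Sigma \sqcup_{2} \Sigma \to \Sigma$ is a correct way to cash out what the paper leaves implicit. For antisymmetry you take a genuinely different route: the paper observes that \emph{antisymmetry of the link relation} is itself an orthogonality condition, transfers it to replete types, and then invokes linkedness to conclude antisymmetry of $\specle$; you instead prove $\Sigma$-separatedness directly (the quotient by $\Sigma$-indistinguishability is a $\Sigma$-iso, repleteness splits it, so it is an isomorphism) and read off antisymmetry of $\specle$ without ever using linkedness. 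Your argument is sound (split mono plus epi is iso; $f\,x \leftrightarrow f\,y$ gives $f\,x = f\,y$ by propositional extensionality), and it has the advantage of not depending on item (4), whereas the paper's antisymmetry proof does.

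The genuine gap is item (4). The easy inclusion $\pathle\ \subseteq\ \specle$ is fine, but for the converse you only sketch the double-dualization strategy --- embed $A$ into $\Sigma^{A\to\Sigma}$, build the Phoa path $\sigma \mapsto f\,x \vee (\sigma \wedge f\,y)$ componentwise, and then try to show the resulting path $\Sigma \to \Sigma^{A\to\Sigma}$ factors through $A$ --- and you explicitly concede that you cannot close the final factorization step. That step is the entire content of the claim: linkedness, unlike (1)--(3), is not visibly an orthogonality condition, so repleteness does not transfer it by the uniform mechanism, and nothing in your write-up supplies the missing closure argument. The paper does not prove it internally either; it discharges this item by citation to Taylor (Corollary 2.10) and Reus (Corollary 6.1.16), where the result is established for replete types under Phoa's principle. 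As submitted, your proof establishes (1)--(3) and one direction of (4), but leaves the hard direction of linkedness open; to complete it you would either need to reproduce the Taylor--Reus argument or cite it as the paper does.
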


In addition, because predomains are defined in terms of orthogonality conditions, they are closed under (internal) limits and have all colimits, \ie{} limits of predomains are computed in the same way as limits of general types. The rest of the section is dedicated to proving the desired properties on the intrinsic order on predomains. Because the intrinsic order $\specle$ and $\pathle$ are the same for predomains, we may speak of the \emph{synthetic order} of predomains and write $\le$ for this relation in the rest of the paper. 

\subsection{Discrete predomains}\label{subsec:disc-predom}

We shall assume that the cost structure of \pcfc{} is given as a discrete type in the following sense: 
\begin{definition}\label[defn]{def:discrete}
A type \(A\) is called \emph{flat} or \emph{discrete} when \(x  \le^\circ  y\) implies \(x = y\).
\end{definition}

\begin{definition}
  A type \emph{has \(\Sigma\)-equality} when its equality relation is valued in \(\Sigma\)-propositions.
\end{definition}

\begin{proposition}\label[prop]{prop:sigma-dec-disc}
  Any type with $\Sigma$-equality is discrete. 
\end{proposition}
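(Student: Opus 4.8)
The plan is to exploit the defining observables of the intrinsic preorder by using the equality predicate itself as a distinguished observable. First I would recall that when $A$ has $\Sigma$-equality, fixing any point $a : A$ yields a map $\mathrm{eq}_a : A \to \Sigma$ given by $\mathrm{eq}_a(z) = (a = z)$. This is well-typed precisely because $\Sigma$-equality asserts that each proposition $a = z$ factors through $\Sigma \hookrightarrow \Omega$, so the equality predicate is a genuine $\Sigma$-valued map and hence one of the observables quantified over in the definition of the intrinsic preorder. Producing this observable is the crux of the argument.

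Next, suppose $x \specle_A y$, i.e.\ that $f\,x$ implies $f\,y$ for every $f : A \to \Sigma$. I would instantiate this hypothesis at $f := \mathrm{eq}_x$. By reflexivity, $\mathrm{eq}_x(x) = (x = x)$ holds, so the intrinsic-order hypothesis forces $\mathrm{eq}_x(y) = (x = y)$ to hold as well; that is, $x = y$. Since $x$ and $y$ were arbitrary, this shows that $x \specle_A y$ implies $x = y$, which is exactly the discreteness of $A$ as in \cref{def:discrete}.

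The argument is essentially immediate once the correct observable is chosen, so I do not anticipate any serious obstacle. The only point requiring care — and the only place the hypothesis is actually used — is verifying that $\mathrm{eq}_x$ is an admissible observable. Without $\Sigma$-equality the predicate $z \mapsto (x = z)$ is merely $\Omega$-valued and thus not among the maps $A \to \Sigma$ ranged over by the intrinsic preorder, so the required implication could not be extracted. This is precisely why $\Sigma$-equality is the relevant hypothesis: it supplies enough observable separating predicates to collapse the specialization preorder to equality, in direct analogy with how a $T_0$-type separation condition forces the specialization order of a space to be antisymmetric.
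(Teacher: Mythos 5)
Your proof is correct and is essentially the paper's own argument: both take the $\Sigma$-valued equality predicate at $x$ as the distinguished observable, note that it holds at $x$ by reflexivity, and conclude from $x \specle_A y$ that it holds at $y$, i.e.\ $x = y$. No further comment is needed.
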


\begin{proof}
  Let \(f : A  \to   \Sigma\) be the characteristic map that sends \(a\) to \(a = x\), which by assumption is a \(\Sigma\)-proposition. Since \(x  \sqsubseteq _A y\) on the specialization order and \(f(x)\) holds, we have that \(f(y)\) holds as well. 
\end{proof}

The category of predomains possesses a natural numbers type $\NatP$ with $\Sigma$-equality, which means it is also discrete. Note that it is not necessarily the same as the ambient natural numbers type $\Nat$, and we will not assume that it is the case in our constructions. From a logical perspective, this difference means that $\NatP$ has a universal mapping-out property whose motive is valued in predomains rather than arbitrary types. 

In \cref{sec:den-sem} we will require the cost structure of \pcfc{} to be both discrete and purely intensional in the sense of \cref{subsubsec:phase-distinction}; here we show that these are compatible requirements by giving sufficient conditions to obtain a purely intensional discrete type (for instance, $\P \vee \Nat$ will be discrete).  

\begin{restatable}{proposition}{PropSealingSigDecidable}\label[prop]{prop:sealing-sig-decidable}
  If $A$ has $\Sigma$-equality, then so does $\P \vee A$. 
\end{restatable}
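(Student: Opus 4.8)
The plan is to reduce the claim to a single closure property of the dominance and then to compute the identity types of $\P \vee A$ by a standard encode--decode argument.

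First I would isolate the crucial lemma: \emph{$\Sigma$ is closed under $\P \vee (-)$}, i.e.\ $\P \vee Q : \Sigma$ whenever $Q : \Sigma$. This is where Phoa's principle does the real work. Since $\P \to \top$, the pair $(\P, \top)$ lies in $\{(\phi,\psi) \mid \phi \to \psi\}$, so Phoa's principle supplies a unique endomap $h : \Sigma \to \Sigma$ with $\boundary h = (\P, \top)$; explicitly the inverse to $\boundary$ sends $(\phi,\psi)$ to $x \mapsto \phi \vee (x \wedge \psi)$, so $h(x) = \P \vee (x \wedge \top) = \P \vee x$. As $h$ is by construction valued in $\Sigma$, evaluating at $Q$ gives $\P \vee Q : \Sigma$. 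Note that the bare dominance axioms (closure under $\top$ and dependent sums) only yield closure under \emph{conjunctions} $\phi \wedge \psi$; it is precisely Phoa's principle that secures this \emph{disjunction} with the phase.

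Next I would characterise equality on $\P \vee A$. Observe that $\eta(a) = \eta(b)$ holds exactly when $a =_A b$ \emph{or} $\P$ holds, since the path constructor collapses all of $\P \vee A$ once $\P$ is inhabited, while any equation involving an $\ast$-point is automatically true because such a point witnesses $\P$. Accordingly I define $E : \P \vee A \to \P \vee A \to \Sigma$ by double recursion, setting $E(\eta(a), \eta(b)) := \P \vee (a =_A b)$ --- which lies in $\Sigma$ by the hypothesis that $A$ has $\Sigma$-equality together with the closure lemma --- and $E(z,w) := \top$ as soon as either $z$ or $w$ is an $\ast$-point. The coherence conditions for the recursor are immediate: whenever an $\ast$-argument is supplied we have a proof of $\P$, whence $\P \vee (a =_A b) = \top$, so the defining clauses agree.

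Finally I would verify $E(z,w) \leftrightarrow (z = w)$. The direction $(z = w) \to E(z,w)$ follows from reflexivity, since $E(z,z)$ holds for every $z$ (by induction: in the $\eta$-case $\P \vee (a =_A a)$ holds via $\mathrm{refl}$, and otherwise $E$ is $\top$). The converse $E(z,w) \to (z=w)$ is a routine induction on $z$ and $w$ using $\mathrm{ap}_\eta$, the path constructor $\eta(a) = \ast(u)$, and propositionality of $\P$. Because $\P \vee A$ is a set, both $E(z,w)$ and $z = w$ are propositions, so this logical equivalence upgrades to an identity $(z = w) = E(z,w)$ by propositional extensionality; as $E(z,w) : \Sigma$, the equality type $z = w$ is a $\Sigma$-proposition, which is exactly the assertion that $\P \vee A$ has $\Sigma$-equality. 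The main obstacle throughout is the closure lemma of the second paragraph; the encode--decode bookkeeping is entirely routine once it is in hand.
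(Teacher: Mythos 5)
Your construction is essentially the paper's own proof: the map you call $E$ is literally the composite $\sigma \circ f$ from the paper, where $f(\eta_{\P \vee -}(x), \eta_{\P \vee -}(y)) = \eta_{\P \vee -}(x =_A y)$ with $f(\star,-) = f(-,\star) = \star$, and $\sigma(\eta_{\P \vee -}(\phi)) = \P \vee \phi$, $\sigma(\star) = \top$; your verification that $E(z,w)$ is logically equivalent to $z = w$ is the same case analysis, and the coherence check for the recursor is handled the same way. So the decomposition and the encode--decode bookkeeping are fine.

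The step I would push back on is your justification of the closure lemma $\P \vee Q : \Sigma$. You invoke Phoa's principle in its \emph{explicit} form, asserting that the inverse of the boundary map sends $(\phi,\psi)$ to $x \mapsto \phi \vee (x \wedge \psi)$. But the paper states Phoa's principle only in the bijection form: for each pair with $\phi \to \psi$ there is a \emph{unique} endomap $h$ of $\Sigma$ with boundary $(\phi,\psi)$. To identify that unique $h$ with $x \mapsto \phi \vee (x \wedge \psi)$ you must first know that the latter formula is $\Sigma$-valued --- which is exactly the closure property you are trying to establish, so the argument as written is circular. (The implication $\phi \vee (x \wedge \psi) \to h(x)$ does follow from monotonicity of $h$ together with the observation that $x$ forces $h(x) = h(\top)$; it is the converse $h(x) \to \phi \vee (x \wedge \psi)$ that cannot be extracted from uniqueness alone without already knowing the join lies in $\Sigma$.) To be fair, the paper's proof relies on $\P \vee \phi : \Sigma$ just as silently when defining $\sigma$; in the intended model this closure is supplied not by Phoa's principle but by the separate fact that the dominance has finite joins preserved by the inclusion into $\Omega$ (\cref{thm:dominance-finite-join}), which is why it is verified independently in \cref{sec:model} rather than derived. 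The clean fix is to appeal to that finite-join property of $\Sigma$ (or to add it explicitly to \cref{def:axioms}) instead of deriving it from Phoa's principle.
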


\subsection{Characterization of the synthetic order} 

As mentioned at the beginning of this section, one of the primary motives of using replete types as predomains is to give a compositional characterization of the synthetic order. Roughly this means that the order relation on composite predomains can be defined in terms of the order on the constituent predomains.

\begin{restatable}{proposition}{PropFuncPointwise}
  When $A$ and $B$ are predomains, the synthetic orders on \(A  \times  B\) and \(A \to  B\) are pointwise.
\end{restatable}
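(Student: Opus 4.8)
The plan is to treat the two cases separately, and within each case to prove each direction of the biconditional using whichever of the two coincident descriptions of $\le$ --- the intrinsic preorder $\specle$ or the path relation $\pathle$ --- is the more convenient. Throughout I use that $A \times B$ and $A \to B$ are again predomains (predomains are closed under limits and function types, per \cref{subsubsec:orthogonality}), so that $\le$ on them is unambiguous and, by \cref{prop:predomain-properties}, agrees with both $\specle$ and $\pathle$.

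For the product $A \times B$ both directions are immediate. To see that the product order is implied by the pointwise order, I work with paths: if $a \le a'$ and $b \le b'$, then by linkedness there are paths $p : \Sigma \to A$ and $q : \Sigma \to B$ with $\boundary p = (a,a')$ and $\boundary q = (b,b')$, and the pairing $\langle p, q\rangle : \Sigma \to A \times B$ is a path with $\boundary\langle p,q\rangle = ((a,b),(a',b'))$, witnessing $(a,b) \le (a',b')$. Conversely, to recover the pointwise order from the product order I switch to the intrinsic preorder: given $f : A \to \Sigma$, precomposing with $\pi_1$ turns $(a,b) \specle (a',b')$ into $f~a \to f~a'$, so $a \le a'$, and symmetrically $b \le b'$.

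For the function space $A \to B$, the forward direction (function order implies pointwise order) again uses the intrinsic preorder: for fixed $x : A$ and $h : B \to \Sigma$, the map $k \mapsto h(k~x) : (A \to B) \to \Sigma$ transports $f \specle g$ to $h(f~x) \to h(g~x)$, whence $f~x \le g~x$. The backward direction --- assembling the pointwise comparisons into a single comparison of functions --- is the crux, and the step I expect to be the main obstacle, since the obvious path-pairing trick from the product case has no direct analogue: a comparison path must be chosen in $B$ for every $x$, and these choices must cohere into a single morphism on the function space. Assume $f~x \le g~x$ for every $x$. By linkedness each such comparison is witnessed by a path in $B$ from $f~x$ to $g~x$, and by boundary separation (\cref{prop:predomain-properties}) this path is \emph{unique}. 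This unique existence, via the internal principle of unique choice available in the extensional type theory, yields a genuine function $P : A \to (\Sigma \to B)$ sending $x$ to the unique path $p_x$ with $\boundary p_x = (f~x, g~x)$.

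Transposing $P$ produces $\widetilde P : \Sigma \to (A \to B)$ with $\widetilde P(s)~x = p_x(s)$, and by function extensionality its boundary is computed pointwise: $\widetilde P(\bot)~x = p_x(\bot) = f~x$ and $\widetilde P(\top)~x = p_x(\top) = g~x$, so $\boundary \widetilde P = (f,g)$ and hence $f \le g$. The essential point --- and the only place where the replete, boundary-separated structure does real work --- is that boundary separation upgrades the merely pointwise \emph{existence} of comparison paths to \emph{unique} existence; this is exactly what is needed to organize the family $(p_x)_x$ into one morphism on $A \to B$, since without uniqueness one could not rule out a choice obstruction.
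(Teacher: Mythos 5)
Your proof is correct and follows essentially the same route as the paper's: the crux of the function-space case is identical, using boundary separation of $B$ to upgrade pointwise existence of comparison paths to unique existence, then applying unique choice and exponential transposition to obtain a single path in $A \to B$. The extra detail you supply on the product case and the forward directions is routine and consistent with what the paper leaves implicit.
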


We may also give a similar characterization of the order relation on lifted predomains: 

\begin{restatable}{proposition}{PropLiftPointwise}\label{prop:lift-order}
  Given a predomain \(A\), we have that \(x  \sqsubseteq _{ \mathsf{L} A } y\) if and only if \(x{ \downarrow }\) implies \(y{ \downarrow }\) and whenever \(x{ \downarrow }\), we have \(x  \sqsubseteq _A y\). 
\end{restatable}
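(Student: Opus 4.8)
The plan is to exploit that the subuniverse of predomains is closed under lifting, so that $\mathsf{L}A$ is itself a predomain and hence, by \cref{prop:predomain-properties}, complete, anti-symmetric, boundary separated, and \emph{linked}. Linkedness is the crucial leverage: it lets me replace the intrinsic order $\sqsubseteq_{\mathsf{L}A}$ by the path relation $\pathle_{\mathsf{L}A}$ and argue entirely with paths $\Sigma \to \mathsf{L}A$. I will also use that $A$ is linked, that the intrinsic order on $\Sigma$ is the implication order (\cref{prop:dom-linked}), that $\Sigma$ is boundary separated (\cref{prop:sigma-boundary-sep}), and Phoa's principle in the explicit form $\psi(s) = \psi(\bot) \vee (s \wedge \psi(\top))$ for $\psi : \Sigma \to \Sigma$.

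For the forward direction I assume $x \sqsubseteq_{\mathsf{L}A} y$. By linkedness there is a path $p : \Sigma \to \mathsf{L}A$ with $\boundary p = (x,y)$. Composing with the support projection $-{\downarrow} : \mathsf{L}A \to \Sigma$ gives a path in $\Sigma$ with boundary $(x{\downarrow}, y{\downarrow})$, so $x{\downarrow} \sqsubseteq_\Sigma y{\downarrow}$, which is exactly the implication $x{\downarrow} \to y{\downarrow}$. For the second condition I work under the hypothesis $x{\downarrow}$: then $(p~s){\downarrow} = x{\downarrow} \vee (s \wedge y{\downarrow}) = \top$ for every $s$ by Phoa's principle, so $p$ is pointwise total; taking underlying values yields a genuine path $\Sigma \to A$ from the value of $x$ to the value of $y$, and linkedness of $A$ converts this into $x \sqsubseteq_A y$.

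For the backward direction I assume $x{\downarrow} \to y{\downarrow}$ together with $x \sqsubseteq_A y$ whenever $x{\downarrow}$, and I construct a path $p : \Sigma \to \mathsf{L}A$ with $\boundary p = (x,y)$; linkedness of $\mathsf{L}A$ then upgrades $\pathle_{\mathsf{L}A}$ to $\sqsubseteq_{\mathsf{L}A}$. Writing an element of $\mathsf{L}A = \Sigma_{\phi:\Sigma}.(\phi \to A)$ as a support together with a value, I take the support component to be the endomap $\psi : \Sigma \to \Sigma$ with $\boundary \psi = (x{\downarrow}, y{\downarrow})$, which exists by Phoa's principle precisely because $x{\downarrow} \to y{\downarrow}$ and is unique by boundary separation of $\Sigma$; explicitly $\psi(s) = x{\downarrow} \vee (s \wedge y{\downarrow})$. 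For the value component I use that, under $x{\downarrow}$, linkedness of $A$ provides a path $q : \Sigma \to A$ from the value of $x$ to the value of $y$, and I define the value of $p~s$ on the join $\psi(s) = x{\downarrow} \vee (s \wedge y{\downarrow})$ to be $q~s$ on the disjunct $x{\downarrow}$ and the value of $y$ on the disjunct $s \wedge y{\downarrow}$.

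The main obstacle is verifying that these two prescriptions agree on the overlap $x{\downarrow} \wedge s \wedge y{\downarrow}$, so that they glue to a well-defined map out of the join of $\Sigma$-propositions. Here I use that a witness of $s : \Sigma$ forces $s = \top$, whence $q~s = q~\top$ is exactly the value of $y$ and matches the second prescription. The boundary computation then falls out using $q~\bot$ = value of $x$ together with $\psi(\bot) = x{\downarrow}$ and $\psi(\top) = y{\downarrow}$. I expect the organisation of this gluing—recasting the case split on the undecidable proposition $x{\downarrow}$ as a single map out of a join of $\Sigma$-propositions—to be the only genuinely delicate step, with the remaining boundary identities being routine given boundary separation.
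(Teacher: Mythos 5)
Your proof is correct, and although it shares the paper's overall strategy---predomains are closed under lifting, so $\lift{A}$ is a predomain, hence linked, and everything reduces to constructing or destructing paths---it differs from the paper's argument in both directions in ways worth recording. In the forward direction the paper never totalizes the path: having read off $x{\downarrow} \to y{\downarrow}$ from the support, it tests the intrinsic preorder of $\lift{A}$ directly against predicates of the form $f'(\phi, u) = \phi \mathbin{\angle} (f \circ u)$ for arbitrary $f : A \to \Sigma$, using closure of $\Sigma$ under dependent sums; this sidesteps your appeal to the explicit Phoa formula and to linkedness of $A$. Your route also works (monotonicity of the support path already gives $x{\downarrow} \to (p~s){\downarrow}$ for all $s$), it is just a little longer. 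In the backward direction your construction is in fact the more careful one: the paper writes down $\beta(\phi) = (x{\downarrow}, \lambda p.~\alpha~p~\phi)$, a path of \emph{constant} support, whose value at $\top$ is the restriction of $y$ to $x{\downarrow}$ rather than $y$ itself, so as literally stated it only has boundary $(x,y)$ when $x{\downarrow} = y{\downarrow}$. Your path, with support $\psi(s)$ of boundary $(x{\downarrow}, y{\downarrow})$ and the value glued over the join $x{\downarrow} \vee (s \wedge y{\downarrow})$ together with the overlap check that a witness of $s$ forces $q~s = q~\top$, is what actually closes the argument. The one thing you should make explicit is that presenting $\psi(s)$ as the join $x{\downarrow} \vee (s \wedge y{\downarrow})$ and eliminating out of that join presupposes that $\Sigma$ is closed under binary joins (equivalently, the equational form of Phoa's principle); \cref{def:axioms} only states the isomorphism form, but the paper assumes and uses such joins elsewhere (\cref{thm:dominance-finite-join}, and the path $\alpha(\phi, n) = \phi \vee i(n)$ in the proof of \cref{prop:invariant-top}), so this is a presentational rather than a mathematical gap.
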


\subsection{The synthetic \texorpdfstring{$\omega$}{omega}-complete partial order structure}\label{subsec:synth-omega-cpo}

Every predomain $A$ contains the least upper bound of a synthetic $\omega$-chain; we will use this fact to interpret fixed points of \pcfc{} in \cref{sec:den-sem}. 

\begin{restatable}{proposition}{PropLub}\label[prop]{prop:lub}
  For every map \(f: \omega   \to  A\) into a complete type \(A\), there exists an element \(a_ \infty  : A\) such that \(a_ \infty\) is a least upper bound of \(f\) with respect to the intrinsic preorder. 
\end{restatable}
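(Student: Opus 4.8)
The plan is to take the least upper bound to be the value at the apex of the canonical extension of $f$ to the final lift coalgebra. Write $\iota : \omega \hookrightarrow \overline\omega$ for the inclusion and $\infty : \overline\omega$ for the apex not in its image. Since $A$ is complete it is orthogonal to $\iota$, so $f$ admits a unique extension $\overline f : \overline\omega \to A$ with $\overline f \circ \iota = f$, and I set $a_\infty := \overline f(\infty)$. Everything then reduces to the corresponding statement for the Sierpiński space $\Sigma$, which is itself complete by \cref{def:axioms}, by probing $A$ with observable predicates $p : A \to \Sigma$ and using uniqueness of extensions. Concretely, for any $p : A \to \Sigma$ both $p \circ \overline f$ and the extension $\overline{p \circ f}$ (which exists since $\Sigma$ is complete) extend $p \circ f$ along $\iota$, so by uniqueness they agree; in particular $p(a_\infty) = \overline{p \circ f}(\infty)$.

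Granting the key lemma that for every chain $q : \omega \to \Sigma$ the element $\overline q(\infty)$ is the least upper bound of $q$ in $(\Sigma, \to)$, both halves of the proposition follow. For the upper bound, the lemma applied to $q = p \circ f$ gives $p(f(n)) \to p(a_\infty)$ for every $n$ and every $p$, which is exactly $f(n) \sqsubseteq_A a_\infty$. For leastness, suppose $b : A$ is an upper bound, so $f(n) \sqsubseteq_A b$ for all $n$; then for each $p$ we have $p(f(n)) \to p(b)$ for all $n$, so $p(b)$ is an upper bound in $\Sigma$ of the chain $p \circ f$, whence $p(a_\infty) = \overline{p \circ f}(\infty) \to p(b)$ by the lemma. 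As this holds for every observable $p$, we conclude $a_\infty \sqsubseteq_A b$.

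It remains to prove the key lemma for $\Sigma$. For the upper-bound direction I would show $\iota(n) \sqsubseteq_{\overline\omega} \infty$ directly from the structure of $\overline\omega$: recalling that $\iota(n)$ is the $n$-fold successor of the least element $\bot$ while $\infty$ is the fixed point of the successor map, and that every map preserves the intrinsic preorder (so the successor is monotone), we get $\iota(n) \sqsubseteq_{\overline\omega} \infty$ and hence $q(n) = \overline q(\iota n) \to \overline q(\infty)$, using that on $\Sigma$ the intrinsic preorder is implication by \cref{prop:dom-linked} and Phoa's principle. The least-upper-bound direction is the crux, and is settled by a uniqueness-of-extension trick: given an upper bound $c : \Sigma$ of $q$, the map $x \mapsto \overline q(x) \wedge c$ again lands in $\Sigma$ (dominances are closed under binary meets) and restricts along $\iota$ to $n \mapsto q(n) \wedge c = q(n)$, since $q(n) \to c$. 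Thus both $x \mapsto \overline q(x) \wedge c$ and $\overline q$ extend $q$, so by uniqueness of extensions into the complete type $\Sigma$ they coincide; evaluating at $\infty$ yields $\overline q(\infty) \wedge c = \overline q(\infty)$, i.e.\ $\overline q(\infty) \to c$, as required.

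The main obstacle I anticipate is precisely this leastness direction, i.e.\ verifying that $a_\infty$ lies below every upper bound: it is the only place where the \emph{limiting} behaviour of $\infty$ is used, rather than it merely being an upper point, and the argument turns on the uniqueness clause of orthogonality together with closure of $\Sigma$ under meets. The reduction from $A$ to $\Sigma$ and the upper-bound direction are comparatively routine, modulo the bookkeeping with the concrete description of $\overline\omega$ needed to see $\iota(n) \sqsubseteq_{\overline\omega} \infty$.
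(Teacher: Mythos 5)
Your proof is correct, and for the crucial leastness direction it takes a genuinely different route from the paper. Both arguments define $a_\infty = \overline{f}(\infty)$ and obtain the upper-bound half from the fact that $\infty$ dominates the image of $\omega$ in the intrinsic preorder together with automatic monotonicity of maps. For leastness, the paper works at the level of $A$ itself: given an upper bound $\alpha$, it shows that the principal lower set ${\downarrow}(\alpha)$ is complete --- writing it as an intersection of pullbacks of the boundary map $\partial : \Sigma^\Sigma \to \Sigma \times \Sigma$ and invoking Phoa's principle together with closure of complete types under limits --- and then factors $\overline{f}$ through ${\downarrow}(\alpha)$ by uniqueness of extensions. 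You instead reduce everything to $\Sigma$ by post-composing with observables $p : A \to \Sigma$ (using uniqueness of extensions to identify $p \circ \overline{f}$ with $\overline{p \circ f}$) and settle leastness there by the meet-absorption trick: $\overline{q} \wedge c$ and $\overline{q}$ agree on $\omega$, hence everywhere, so $\overline{q}(\infty) \to c$. Your route is arguably more elementary: it needs only completeness of $\Sigma$, closure of the dominance under binary meets, and uniqueness of extensions, and it avoids both Phoa's principle and the verification that any subset of $A$ is complete; the paper's route, in exchange, produces the reusable fact that principal lower sets of complete types are complete. One small caution: your justification that $\iota(n) \sqsubseteq_{\overline{\omega}} \infty$ by describing $\iota(n)$ as an ``$n$-fold successor of $\bot$'' is loose, since internally $\omega$ is the initial lift algebra rather than the natural numbers object, so that description must be replaced by an appeal to its induction principle; the paper's \cref{prop:invariant-top} avoids this by exhibiting, for every $i : \overline{\omega}$, an explicit path from $i$ to $\infty$, giving the stronger statement that $\infty$ is the top element of $\overline{\omega}$, which also suffices for your argument.
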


Because the intrinsic preorder on a predomain is a partial order by \cref{prop:predomain-properties}, we write $\sup f$ for the (necessarily unique) element defined in \cref{prop:lub}. We note that suprema of synthetic $\omega$-chains in function spaces are computed pointwise.

\subsection{Domains and admissibility}

Semantically, recursive functions may be interpreted as the fixed points of endomaps of domains: 

\begin{definition}
  A \emph{domain} is a predomain equipped with a \(\mathsf{L}\)-algebra structure. Every domain \(D\) contains a least element given by postcomposing the algebra map with the unique undefined element \(( \bot , !) : 1  \to   \mathsf{L} D\). We write \(\bot _D : D\) for the least element of \(D\). 
\end{definition}

\begin{proposition}
Given a map of domains $f : D \to D$, there is an element $\kw{fix}(f) : D$ such that $\kw{fix}(f)$ is the least fixed-point of $f$. 
\end{proposition}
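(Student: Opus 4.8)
The plan is to carry out Kleene's construction internally, realizing the least fixed point as the supremum of the chain of iterates $\bot_D \sqsubseteq f(\bot_D) \sqsubseteq f(f(\bot_D)) \sqsubseteq \cdots$, but phrased entirely in terms of \emph{synthetic} $\omega$-chains so that \cref{prop:lub} applies on the nose. The first and most delicate step is to exhibit the iterates as a genuine map $h : \omega \to D$ rather than merely a $\Nat$-indexed family. I would obtain this from the fact that $\omega$ is the initial $\kw{L}$-algebra: writing $\alpha : \lift{D} \to D$ for the algebra structure witnessing that $D$ is a domain, the composite $\alpha \circ \kw{L}(f) : \lift{D} \to D$ equips $D$ with a \emph{new} $\kw{L}$-algebra structure, and initiality yields a unique algebra homomorphism $h : \omega \to D$. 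Unwinding the homomorphism equation $h \circ \sigma = (\alpha \circ \kw{L}(f)) \circ \kw{L}(h)$ against the zero and successor of $\omega$, and using the algebra unit law $\alpha \circ \eta = \mathrm{id}$, shows $h(0) = \alpha(\bot,!) = \bot_D$ and $h(s\,n) = f(h\,n)$, so that $h$ traces out exactly the iterates $f^n(\bot_D)$. Since $h$ is a map it is automatically monotone, hence a bona fide synthetic $\omega$-chain, and I set $\kw{fix}(f) \defeq \sup h$, which exists and is unique by \cref{prop:lub} together with antisymmetry (\cref{prop:predomain-properties}).

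To see that $\kw{fix}(f)$ is a fixed point I would exploit that every map between complete types automatically preserves suprema of synthetic $\omega$-chains: for a chain $g : \omega \to A$, the unique extension of $f \circ g$ along $\omega \hookrightarrow \overline{\omega}$ must coincide with $f$ postcomposed with the extension of $g$, by uniqueness of extensions, and evaluating at the top point of $\overline{\omega}$ gives $f(\sup g) = \sup(f \circ g)$. Applying this to $h$ and using $f \circ h = h \circ s$ yields $f(\sup h) = \sup(h \circ s)$; since shifting a chain by one leaves its set of upper bounds unchanged, $\sup(h \circ s) = \sup h$, so $f(\kw{fix}(f)) = \kw{fix}(f)$.

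It remains to show that $\kw{fix}(f)$ is least among fixed points, and in fact among pre-fixed points. Given $d$ with $f(d) \sqsubseteq d$, I would prove that $d$ bounds $h$ by induction over the initial $\kw{L}$-algebra $\omega$, i.e. by showing that $S = \{\,x : \omega \mid h(x) \sqsubseteq d\,\}$ is a sub-$\kw{L}$-algebra of $\omega$ and hence all of $\omega$. The base case is $\bot_D \sqsubseteq d$, immediate since $\bot_D$ is least; the induction step, after substituting $h \circ \sigma = (\alpha \circ \kw{L}(f)) \circ \kw{L}(h)$, reduces to the claim that the down-set ${\downarrow}d$ is closed under $\alpha$ applied to partial elements valued in ${\downarrow}d$. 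Concluding $S = \omega$ shows $d$ is an upper bound of $h$, so $\kw{fix}(f) = \sup h \sqsubseteq d$.

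The closure property in the last step is where the real work lies and is the part I expect to be the main obstacle, since it cannot branch on whether a $\Sigma$-proposition holds. I would establish it from the pointwise characterization of the lift order (\cref{prop:lift-order}): a partial element of $\lift{D}$ all of whose defined values lie below $d$ is precisely one lying below $\eta(d)$, so applying the monotone map $\alpha$, together with $\alpha(\eta(d)) = d$, lands below $d$; combined with monotonicity of $f$ and the hypothesis $f(d) \sqsubseteq d$, this gives the inductive step. This is exactly the admissibility of down-sets that the surrounding development is designed to supply, and it is the one place where Phoa's principle enters through the order characterizations rather than formal manipulation.
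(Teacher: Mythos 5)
Your construction is correct, and since the paper states this proposition without proof, it is worth saying that what you have written is precisely the standard Kleene-style argument that the surrounding infrastructure (\cref{prop:lub}, \cref{prop:invariant-top}, \cref{prop:monotone}, \cref{prop:lift-order}) is designed to support: iterate via initiality of $\omega$, take the synthetic $\omega$-join, and prove leastness by induction over $\omega$ using the admissibility of the principal lower set ${\downarrow}d$. Two points deserve a gloss. First, $\alpha \circ \kw{L}(f)$ is an algebra for the lift \emph{functor}, not for the lifting monad --- it violates the unit law unless $f = \mathrm{id}$ --- so the initiality you invoke must be initiality among functor-algebras; this is indeed what $\omega$ provides, but the phrasing should not suggest that $(D, \alpha \circ \kw{L}(f))$ is itself a domain. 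Second, the step $\sup(h \circ s) = \sup h$ does not follow from ``shifting leaves the upper bounds unchanged'' quite as stated: one cannot case on an element of $\omega$ being zero or a successor (the support of a partial element is a general $\Sigma$-proposition), so the nontrivial inclusion of upper bounds needs either the inflationarity $x \sqsubseteq s(x)$ on $\omega$ --- itself a small induction over the initial algebra --- or, more cleanly, the observation that the canonical extension $\bar{s} : \overline{\omega} \to \overline{\omega}$ fixes the invariant point, whence $\overline{h \circ s}(\infty) = \overline{h}(\bar{s}(\infty)) = \overline{h}(\infty)$ by uniqueness of extensions along $\omega \hookrightarrow \overline{\omega}$. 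With those two repairs the argument is complete; in particular your treatment of the inductive step for leastness --- reducing closure of ${\downarrow}d$ under the algebra map to $(\phi, w) \sqsubseteq \eta(d)$ via \cref{prop:lift-order} and then applying monotonicity of $\alpha$ together with $\alpha(\eta(d)) = d$ --- is exactly the point where the paper's characterization of the lift order earns its keep.
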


Similar to classical domains, we may introduce a notion of ``good subsets'' of domains for which fixed-point induction is valid.

\begin{definition}\label[defn]{def:admissible}
  A subset of a domain \(D\) is \emph{admissible} when it is complete and contains \(\bot _D\). 
\end{definition}

\begin{remark}
  It might be tempting to define admissible subsets as subsets of domains that are closed under the least element and synthetic $\omega$-joins, but it is unclear that this definition actually supports fixed-point induction in general. However this intuitive definition does suffice for all instances of fixed-point induction used in this paper. 
\end{remark}

\begin{restatable}{proposition}{BugAdmissible}\label[prop]{prop:bug-admissible}
  A \emph{lower} subset of a complete type is complete when it is closed under synthetic $\omega$-joins. 
\end{restatable}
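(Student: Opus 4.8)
The plan is to reduce completeness of the subset $S \subseteq A$ to completeness of the ambient type $A$, exploiting the fact that the value of a chain's extension at the top point $\infty \in \overline{\omega}$ is exactly the synthetic join. Reading ``lower'' as downward-closed for the intrinsic order $\sqsubseteq$ on $A$, I would first fix a map $g : \omega \to S$ and compose with the inclusion $i : S \hookrightarrow A$ to get $i \circ g : \omega \to A$. Since $A$ is complete, there is a unique extension $\overline{h} : \overline{\omega} \to A$ with $\overline{h} \circ \iota = i \circ g$, where $\iota : \omega \hookrightarrow \overline{\omega}$ is the figure shape. The entire argument then amounts to showing that $\overline{h}$ factors through $S$, i.e.\ that $\overline{h}(x) \in S$ for every $x : \overline{\omega}$.

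The key observation is that $\overline{h}(\infty) = \sup(i \circ g)$, which is precisely how the least upper bound of \cref{prop:lub} is extracted from the extension. Because $g$ factors through $S$, the chain $i \circ g$ is valued in $S$, so closure of $S$ under synthetic $\omega$-joins yields $\overline{h}(\infty) = \sup(i \circ g) \in S$. To reach the remaining points, I would use that every map is monotone for the intrinsic order---if $x \sqsubseteq y$ then $\overline{h}(x) \sqsubseteq \overline{h}(y)$, since any observation $A \to \Sigma$ precomposes with $\overline{h}$---together with the fact that $\infty$ is the greatest element of $\overline{\omega}$. Hence for arbitrary $x : \overline{\omega}$ we have $\overline{h}(x) \sqsubseteq \overline{h}(\infty) \in S$, and since $S$ is lower, $\overline{h}(x) \in S$. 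This is exactly the place where the hypothesis that $S$ is \emph{lower} does its work: it converts ``the limit lies in $S$'' into ``every point of the extension lies in $S$.''

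It then follows that $\overline{h}$ factors as $i \circ \overline{g}$ for a unique $\overline{g} : \overline{\omega} \to S$; composing with $i$ and cancelling the monomorphism recovers $\overline{g} \circ \iota = g$, giving existence of the extension into $S$. For uniqueness I would observe that any two extensions $\overline{g}_1, \overline{g}_2 : \overline{\omega} \to S$ of $g$ become, after postcomposition with $i$, two extensions of $i \circ g$ into the complete type $A$; these agree by the uniqueness clause of the orthogonality of $A$, and since $i$ is monic the originals agree. Together existence and uniqueness establish that $S$ is orthogonal to $\omega \hookrightarrow \overline{\omega}$, i.e.\ complete.

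I expect the main obstacle to be the two structural facts about $\overline{\omega}$ that glue the hypotheses together: that the join $\sup(i \circ g)$ is realized as $\overline{h}(\infty)$, and that $\infty$ dominates every point of $\overline{\omega}$ in the intrinsic order. The former is essentially the definition of the supremum underlying \cref{prop:lub} and should be cited rather than reproved. The latter is the genuinely SDT-specific point, and verifying it may require unfolding the concrete description of the final lift coalgebra $\overline{\omega}$ as (decreasing) chains of $\Sigma$-propositions, in which $\infty$ is the constantly-true chain and the intrinsic order is computed pointwise; everything else in the proof is formal manipulation of orthogonality and monomorphisms.
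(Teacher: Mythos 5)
Your proposal is correct and follows essentially the same route as the paper's own proof: extend the chain into the ambient complete type, observe that the value at $\infty$ is the synthetic join and hence lands in $S$, and then use monotonicity together with the fact that $\infty$ is the top of $\overline{\omega}$ (the paper's \cref{prop:invariant-top}, proved by exhibiting an explicit path) to pull every other point of the extension into $S$ via lower-closure. The only difference is that you spell out the uniqueness clause of orthogonality explicitly by cancelling the monomorphism, which the paper leaves implicit.
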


\begin{corollary}\label[cor]{cor:admissible}
  A lower subset of a domain is admissible when it is closed under the least element and synthetic $\omega$-joins.
\end{corollary}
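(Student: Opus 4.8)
The plan is to read admissibility straight off its definition, \cref{def:admissible}, which demands two things of a subset $S \hookrightarrow D$: that it be complete, and that it contain $\bot_D$. The second requirement is literally the hypothesis that $S$ is closed under the least element, so nothing is needed there; the entire content of the corollary is the completeness of $S$, and I would obtain this by invoking the preceding \cref{prop:bug-admissible}.

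To apply that proposition I first need to observe that $D$ itself is a complete type. This is immediate: by definition a domain is a predomain carrying an $\mathsf{L}$-algebra structure, and every predomain is complete---that is, orthogonal to $\omega \hookrightarrow \overline{\omega}$---by \cref{prop:predomain-properties}(1). Hence $S$ is a lower subset of a complete type, and by assumption it is closed under synthetic $\omega$-joins, so \cref{prop:bug-admissible} delivers that $S$ is complete.

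Combining the two observations, $S$ is complete and contains $\bot_D$, which is exactly what it means to be admissible. I expect no genuine obstacle at this stage: the corollary simply packages \cref{prop:bug-admissible} together with the least-element hypothesis, the only additional ingredient being the routine remark that domains, being predomains, are complete. All the real difficulty is absorbed into the proof of \cref{prop:bug-admissible} itself, where one must check that the orthogonal extension along $\omega \hookrightarrow \overline{\omega}$ of a synthetic $\omega$-chain lying in $S$ remains inside $S$---precisely the point at which lowerness and closure under joins do their work.
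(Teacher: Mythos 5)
Your proof is correct and follows exactly the route the paper intends: the corollary is stated without proof precisely because it is the immediate combination of \cref{prop:bug-admissible} (giving completeness of the lower, join-closed subset) with \cref{def:admissible}, the only extra observation being that a domain, as a predomain, is itself complete by \cref{prop:predomain-properties}. Nothing is missing.
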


\begin{proposition}[Fixed-point induction]
  Given an admissible subset $A \subseteq D$ of a domain $D$ and $f : D \to D$, to show that $\kw{fix}(f) \in A$ it suffices to show that $x \in A$ implies $f~x \in A$. 
\end{proposition}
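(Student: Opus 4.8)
The plan is to exploit the fact that the least fixed point $\kw{fix}(f)$ is the supremum of the synthetic $\omega$-chain of iterates of $f$ starting from $\bot_D$, and then to show that this entire chain lives inside $A$, so that its supremum—obtained from the completeness of $A$—lies in $A$ as well. Concretely I would first exhibit the chain of iterates as a genuine synthetic $\omega$-chain $\sigma : \omega \to D$, i.e.\ a map out of the \emph{initial lift algebra} $\omega$. Applying initiality of $\omega$ against the lift-algebra structure on $D$ twisted by $f$ (the structure map $f \circ \theta_D : \mathsf{L}D \to D$, where $\theta_D$ is the given $\mathsf{L}$-algebra map of the domain $D$), I obtain a unique algebra morphism $\sigma : \omega \to D$ whose bottom component is $\bot_D$ and whose successor component applies $f$; morally $\sigma$ enumerates $\bot_D, f(\bot_D), f^2(\bot_D), \dots$. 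By the construction of $\kw{fix}(f)$ as the least fixed point together with \cref{prop:lub}, the element $\kw{fix}(f)$ is precisely the supremum $\sup \sigma$, equivalently the value at $\infty$ of the unique extension $\overline\sigma : \overline\omega \to D$ guaranteed by the completeness of $D$.

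Next I would show that $\sigma$ factors through the subobject $\iota : A \hookrightarrow D$. This is the induction step proper, and it uses both halves of admissibility (\cref{def:admissible}): the base case is $\sigma(\bot) = \bot_D \in A$, which holds because an admissible subset contains $\bot_D$, and the successor case is exactly the hypothesis that $x \in A$ implies $f~x \in A$. Formally, $A$ inherits a sub-lift-algebra structure closed under these two operations, so by initiality of $\omega$ the map $\sigma$ lifts to a map $\sigma_A : \omega \to A$ satisfying $\iota \circ \sigma_A = \sigma$.

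Finally I would invoke completeness of $A$. Since $A$ is complete, $\sigma_A : \omega \to A$ extends uniquely to $\overline{\sigma_A} : \overline\omega \to A$; post-composing with $\iota$ yields an extension of $\sigma$ along $\omega \hookrightarrow \overline\omega$, which by the uniqueness of extensions into the complete type $D$ must coincide with $\overline\sigma$. Evaluating at $\infty$ then gives $\kw{fix}(f) = \overline\sigma(\infty) = \iota(\overline{\sigma_A}(\infty)) \in A$, as required.

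I expect the main obstacle to be the identifications in the first step: pinning down that the least fixed point $\kw{fix}(f)$ genuinely coincides with the supremum of the iterate chain $\sigma$, and that $\sigma$ is well-defined as a map out of the initial lift algebra $\omega$ rather than the ambient $\Nat$. Once the iterates are organized into a synthetic $\omega$-chain and shown to factor through $A$, the completeness of $A$ closes the argument mechanically through the uniqueness of orthogonal extensions.
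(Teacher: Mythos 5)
Your overall architecture---organize the iterates of $f$ into a synthetic $\omega$-chain $\sigma : \omega \to D$ via initiality of $\omega$, factor $\sigma$ through $A$, and then use completeness of $A$ together with uniqueness of extensions along $\omega \hookrightarrow \overline{\omega}$ to conclude $\kw{fix}(f) = \overline{\sigma}(\infty) \in A$---is the natural shape of the argument, and your first and third steps are sound. (The paper does not actually print a proof of this proposition, so there is nothing to compare line by line.) However, the middle step, which is where all of the content lives, has a genuine gap.

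The problem is the claim that ``$A$ inherits a sub-lift-algebra structure closed under these two operations.'' For $\sigma$ to lift through $\iota : A \hookrightarrow D$ by initiality, you need the twisted structure map $f \circ \theta_D : \kw{L}(D) \to D$ to carry $\kw{L}(A)$ into $A$. But $\kw{L}(A)$ does not consist only of the undefined element $(\bot, !)$ and the total elements $\eta(a)$: it contains partial elements $(\phi, u)$ with $\phi$ an arbitrary $\Sigma$-proposition and $u : \phi \to A$, and your two hypotheses ($\bot_D \in A$ and $f(A) \subseteq A$) say nothing about $f(\theta_D(\phi, u))$ for such $\phi$; since $\Sigma$ is not Boolean, you cannot case-split on whether $\phi$ holds. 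The subalgebra claim is in fact false in general: take $D = \Sigma$ with $\theta_D = \mu$, $f = \mathrm{id}$, and $A = 2 \hookrightarrow \Sigma$, which is admissible ($2$ is a predomain by \cref{def:axioms}, hence complete, and contains $\bot$) and closed under $f$; yet $\mu(\phi, \lambda u.\,\top) = \phi$ for every $\phi : \Sigma$, so $(f \circ \mu)(\kw{L}(2))$ contains every $\Sigma$-proposition, e.g.\ the non-decidable $\P$, and hence is not contained in $2$. (The conclusion of fixed-point induction still holds in this example, since $\kw{fix}(\mathrm{id}) = \bot \in 2$; what fails is your route to the factorization.) This is precisely the delicacy the paper flags in the remark following \cref{def:admissible}: closure under ``$\bot$ and total successor'' is strictly weaker than closure under the lift-algebra structure map on all partial elements, which is why the validity of fixed-point induction is sensitive to the exact closure conditions imposed on $A$. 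To repair the argument you need a different justification for why $\sigma$ (or at least $\overline{\sigma}(\infty)$) lands in $A$---one that brings the completeness of $A$ to bear already at this stage rather than only at the final extension step.
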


\subsection{Monotonicity and continuity}

As we discussed in \cref{subsubsec:SDT}, one of the main benefits of working in a synthetic domain theory is that maps are automatically compatible with the derived order structure: 

\begin{restatable}{proposition}{PropMonotone}\label[prop]{prop:monotone}
  Every map $f : A \to B$ between predomains is monotone in the synthetic order. 
\end{restatable}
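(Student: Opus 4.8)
The plan is to reduce the statement to the observation that the intrinsic preorder is functorial, and then transport that fact to the synthetic order via linkedness. First I would recall that, since $A$ and $B$ are predomains, \cref{prop:predomain-properties} tells us the synthetic order $\le$ agrees with both the intrinsic preorder $\specle$ and the path relation $\pathle$; this lets me argue against whichever characterization is most convenient and then read the conclusion back in terms of $\le$.

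Working with the intrinsic preorder is the quickest route. Assume $x \le_A y$, which under this reading means that $p(x)$ implies $p(y)$ for every observable property $p : A \to \Sigma$. To establish $f\,x \le_B f\,y$ I must show that $q(f\,x)$ implies $q(f\,y)$ for an arbitrary $q : B \to \Sigma$. But $q \circ f : A \to \Sigma$ is itself an observable property of $A$, so instantiating the hypothesis at $p := q \circ f$ yields exactly $q(f\,x) \Rightarrow q(f\,y)$. Hence $f\,x \le_B f\,y$, as required. The same conclusion also falls out directly from the path characterization: if $g : \Sigma \to A$ is a path with $\boundary g = (x,y)$, then $f \circ g : \Sigma \to B$ is a path with $\boundary(f \circ g) = (f\,x, f\,y)$, exhibiting $f\,x \pathle_B f\,y$. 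These are simply the ``precompose the observable'' and ``postcompose the path'' faces of one and the same fact.

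I expect there to be no genuine obstacle here: monotonicity with respect to the specialization/intrinsic order is a purely formal consequence of its definition and requires neither repleteness nor completeness of $A$ and $B$. All of the real content has already been discharged in \cref{prop:predomain-properties}, whose linkedness clause is precisely what licenses reading the synthetic order $\le$ as the intrinsic preorder (equivalently, the path relation) in the first place. The only point demanding care is bookkeeping: one should fix whichever of $\specle$ or $\pathle$ is taken as primitive, run the one-line argument there, and invoke linkedness to conclude the statement for $\le$.
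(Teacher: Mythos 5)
Your proposal is correct and, in its path-relation variant (postcompose the path $\Sigma \to A$ with $f$ to get a path with boundary $(f\,x, f\,y)$), is exactly the paper's proof; the intrinsic-preorder variant you lead with is the equally immediate dual formulation, and linkedness of predomains licenses passing between them just as you say. No gap.
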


\begin{restatable}{proposition}{PropCont}\label[prop]{prop:cont}
  Every map $f : A \to B$ between predomains is continuous in the sense that $f(\sup d) = \sup(f \circ d)$ for every synthetic $\omega$-chain $d : \omega \to A$. 
\end{restatable}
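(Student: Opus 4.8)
The plan is to exploit the orthogonality characterization of completeness directly, reducing continuity to the functoriality of the unique extension along $\omega \hookrightarrow \overline{\omega}$. First I would recall from the construction underlying \cref{prop:lub} that the least upper bound of a synthetic $\omega$-chain $d : \omega \to A$ into a complete type is computed by extending $d$ to the final lift coalgebra and evaluating at the point at infinity: writing $\overline{d} : \overline{\omega} \to A$ for the unique extension of $d$ guaranteed by completeness of $A$ (\cref{prop:predomain-properties}), and $\infty : \overline{\omega}$ for the top point, we have $\sup d = \overline{d}(\infty)$. By antisymmetry, also from \cref{prop:predomain-properties}, this least upper bound is the unique such element, so $\sup d$ is well-defined and equal to $\overline{d}(\infty)$.

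The central observation is that post-composition with $f$ commutes with this extension operation. Given $f : A \to B$ and a chain $d : \omega \to A$, I would consider the map $f \circ \overline{d} : \overline{\omega} \to B$. Restricting along $\omega \hookrightarrow \overline{\omega}$ yields $f \circ d$, so $f \circ \overline{d}$ is an extension of $f \circ d$ to $\overline{\omega}$. Since $B$ is a predomain and hence complete, the extension of $f \circ d$ is unique, whence $\overline{f \circ d} = f \circ \overline{d}$. Evaluating both sides at $\infty$ gives $\sup(f \circ d) = \overline{f \circ d}(\infty) = f(\overline{d}(\infty)) = f(\sup d)$, which is exactly the desired continuity equation.

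Notably, this argument requires no separate appeal to monotonicity (\cref{prop:monotone}): the entire content is carried by the uniqueness half of the orthogonality condition defining completeness, which forces the two candidate extensions of $f \circ d$ to coincide. I expect the only delicate point to be the first step, namely justifying that $\sup d = \overline{d}(\infty)$ rather than merely that $\overline{d}(\infty)$ is \emph{some} upper bound; this identification is precisely what the construction in \cref{prop:lub} delivers. Once that characterization is in hand, continuity is immediate, since it amounts to the naturality of evaluation-at-$\infty$ with respect to the orthogonal extension.
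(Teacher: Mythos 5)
Your proposal is correct and is essentially the paper's own proof: both arguments extend $d$ and $f\circ d$ along $\omega \hookrightarrow \overline{\omega}$, use uniqueness of extensions into complete types to conclude $f \circ \overline{d} = \overline{f\circ d}$, and evaluate at $\infty$, relying on the fact that \cref{prop:lub} constructs $\sup d$ precisely as $\overline{d}(\infty)$. Your observation that monotonicity is not needed is likewise consistent with the paper, which makes no appeal to \cref{prop:monotone} here.
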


\section{\texorpdfstring{\pcfc{}}{pcf/cost}: a language for cost-aware higher-order recursion}\label{sec:pcfc}

Our main technical result is the computational adequacy property for \pcfc{}, a version of Plotkin's \pcf{}~\cite{plotkin:1977} equipped with an abstract cost effect. The treatment of both cost structure and recursion as a call-by-push-value effect in \pcfc{} is inspired by~\cite{kavvos-morehouse-licata-danner:2019}. The syntax of \pcfc{} is parameterized in a monoid $\mathbb{C}$ representing the cost structure; in \cref{sec:den-sem} we will impose further properties on $\mathbb{C}$ when we define the denotational semantics of \pcfc{}. As mentioned in \cref{subsec:den-sem}, the type structure of \pcfc{} is generated by a pair of operators $\kw{F}, \kw{U}$ that corresponds semantically to the free-forgetful adjunction between plain sets and the category of \emph{partial cost algebras}, which are sets equipped with an action for the partial cost monad $\lift{(\mathbb{C} \times -)}$. The types and terms of \pcfc{} are summarized in \cref{fig:pcfc}, defined as inductive definitions in the SDT topos \CatIdent{E}. 

\begin{figure}
\begingroup
\setlength\columnsep{-6.5cm}
\begin{multicols}{2}
  \iblock{
    \mhang{
      \textbf{inductive}~\kw{Ty}^+ : \kw{Set}~\textbf{where}
    }{
      \mrow{\kw{ans} : \kw{Ty}^+}
      \mrow{\kw{nat} : \kw{Ty}^+}
      \mrow{\kw{U} : \kw{Ty}^\ominus \to \kw{Ty}^+}
    }
    \row
    \mhang{
      \textbf{inductive}~\kw{Ty}^\ominus : \tpv~\textbf{where}
    }{
      \mrow{\kw{F} : \kw{Ty}^+ \to \kw{Ty}^\ominus}
      \mrow{{\rightharpoonup} :\kw{Ty}^+ \to \kw{Ty}^\ominus \to \kw{Ty}^\ominus}
    }
    \row 
    \mrow{
      \kw{Tm}^\ominus : \kw{Con} \to \kw{Ty}^\ominus \to \tpv
    }
    \mrow{
      \kw{Tm}^\ominus(\Gamma, X) = \kw{Tm}^+(\Gamma, \kw{U}(X))
    }
  }
  \columnbreak 
  \iblock{
    \mhang{
      \textbf{inductive}~\kw{Tm}^+ : \kw{Con} \to \kw{Ty}^+ \to \kw{Set}~\textbf{where}
    }{
      \mrow{\kw{var} : \kw{Var}(\Gamma, A) \to \kw{Tm}(\Gamma, A)}
      \mrow{\kw{yes} : \kw{Tm}^+(\Gamma, \kw{ans})}
      \mrow{\kw{no} : \kw{Tm}^+(\Gamma, \kw{ans})}
      \mrow{\kw{zero} : \kw{Tm}^+(\Gamma, \kw{nat})}
      \mrow{\kw{succ} : \kw{Tm}^+(\Gamma, \kw{nat}) \to \kw{Tm}^+(\Gamma, \kw{nat})}
      \mrow{\kw{ap} : \kw{Tm}^\ominus(\Gamma, A \rightharpoonup X) \to \kw{Tm}^+(\Gamma, A) \to \kw{Tm}^\ominus(X)}
      \mrow{\kw{ret} : \kw{Tm}^+(\Gamma, A) \to \kw{Tm}^\ominus(\Gamma, \kw{F}(A))}
      \mrow{\kw{step} : \mathbb{C} \to \kw{Tm}^\ominus(\Gamma, X) \to \kw{Tm}^\ominus(\Gamma, X)}
      \mrow{\kw{bind} : \kw{Tm}^\ominus(\Gamma, \kw{F}(A)) \to \kw{Tm}^\ominus(A::\Gamma, X) \to \kw{Tm}^\ominus(\Gamma, X)}
      \mrow{\kw{ifz} : \kw{Tm}^+(\Gamma, \kw{nat}) \to \kw{Tm}^\ominus(\Gamma, X) \to \kw{Tm}^\ominus(\kw{nat}::\Gamma, X) \to \kw{Tm}^\ominus(\Gamma, X)}
      \mrow{\kw{fix} : \kw{Tm}^\ominus(\kw{U}(X)::\Gamma, X) \to \kw{Tm}^\ominus(\Gamma, X)}
      \mrow{\kw{lam} : \kw{Tm}^\ominus(A::\Gamma, X) \to \kw{Tm}^\ominus(\Gamma, A \rightharpoonup X)}
    }
  }
\end{multicols}
\endgroup
\caption{The grammar of types and terms in \pcfc. We will often omit $\kw{Tm}^{\{+,\ominus\}}$ in the case of closed terms and write $\Gamma \vdash \{A, X\}$ for the type $\kw{Tm}^{\{+, \ominus\}}(\Gamma, \{A, X\})$. Given $e : \U{\F{A}}$ and $f : \U{(A \to X)}$, we also write $e; f$ for $\kw{bind}(e, f) : \U{X}$. }
\label{fig:pcfc}
\end{figure}

\section{Cost-aware denotational semantics of \texorpdfstring{\pcfc{}}{pcf/cost}}\label{sec:den-sem}

Both the syntax of \pcfc{} and our model construction is parameterized in a monoid object \(( \mathbb{C} , +, 0)\) representing the cost structure. We require the following conditions to hold: 
\begin{enumerate}
  \item{\emph{Computational}: \(\mathbb{C}\) is a predomain with \(\Sigma\)-equality. }  
  \item{\emph{Phase separation}: \(\mathbb{C}\) is purely intensional, \ie $(\P \to \mathbb{C}) \cong 1$. }  
\end{enumerate}

That $\mathbb{C}$ is computational are used in two places in the computational adequacy proof: when reasoning about the \emph{computational semantics} of \pcfc{} (see \cref{sec:comp-sem}), we need $\mathbb{C}$ to be discrete (which follows from \cref{prop:sigma-dec-disc}) in order to prove the property that sequential composition of computations may be decomposed (\cref{prop:comp-ind-seq}), and we need $\mathbb{C}$ to have $\Sigma$-equality when showing that the \emph{formal approximation predicates} (see \cref{fig:formal-approx-rel}) associated to semantic domains are admissible. We require $\mathbb{C}$ to be purely intensional to ensure that the denotational semantics of \pcfc{} exhibits an \emph{intrinsic} noninterference property of cost and behavior as sketched in \cref{subsec:den-sem}. As discussed in \cref{subsec:predomains}, there is a canonical way of turning any monoid $\mathcal{M}$ into a purely intensional type by means of the sealing monad $\P \vee -$. 

\subsection{The partial cost monad}

To model partiality and cost as a single effect, the computation types of \pcfc{} are interpreted as algebras for the monad $(\mathbb{T}, \eta_\mathbb{T}, \mu_\mathbb{T})$ whose action on points is defined by composing the lift and the writer monad $\kw{T} A =  \mathsf{L} ( \mathbb{C}   \times  A)$. The distributive law for $\kw{T}$ and the resulting monad structure is displayed in \cref{fig:partial-cost-monad}, where we write $x \leftarrow_\mathbb{M} e; f(x)$ for the induced bind operation of a monad $\mathbb{M}$ where $f : A \to \kw{M}(B)$ is a map into a free $\mathbb{M}$-algebra. We will also write $f^\sharp(e)$ for sequencing $e : \kw{M}(A)$ and $f : A \to X$ for an $\mathbb{M}$-algebra $X$. 
\begin{figure}
  \begin{multicols}{2}
    \iblock{
    \mrow{\tau : \mathbb{C} \times \lift{A} \to \lift{(\mathbb{C} \times A)}}
    \mrow{\tau(c, ( \phi , f))  =  ( \phi ,  \lambda  u: \phi .~(c, f u))}
    }
  \columnbreak
  \iblock{
    \mrow{\eta_\mathbb{T}(a) = \eta_\kw{L}(0, a)}
    \mrow{\mu_\mathbb{T}(e) = (c, x) \leftarrow_\mathbb{L} e; (c', a) \leftarrow_\mathbb{L} x; (c + c', a)}
  }
  \end{multicols}
  \caption{Left: distributive law; right: monad structure of $\mathbb{T}$.}
  \label{fig:partial-cost-monad}
\end{figure}

\subsection{The derived cost algebra}

To model the cost effect $\kw{step} : \mathbb{C} \times X \to X$, we use the fact that every algebra for $\mathbb{T}$ is canonically an algebra for the writer monad $\mathbb{C} \times -$, which is a general property of composite monads defined from a distributive law~\cite[Section 2]{beck_distributive_1969}. In our case this means that every $\mathbb{T}$-algebra has an underlying cost algebra as well; we write $\costmap : \mathbb{C} \times X \to X$ for the cost algebra map. 

\begin{restatable}{proposition}{PropCostAlgLaws}\label[prop]{prop:cost-alg-laws}
  The action of the derived cost algebra satisfies the following equations for $e : \kw{T}(A)$ and $f : A \to X$ for some $\mathbb{T}$-algebra $X$: 
  \[\begin{aligned}   
    &c  \boxplus_{\kw{T}(A)} e = ( e { \downarrow } ,  \lambda  u.~ \mu _{ \mathbb{C}   \times  -}(c, e)) \\   
    &f^\sharp(c  \boxplus_{\kw{T}(A)}  e) = c  \boxplus_X  (f^\sharp~e) \\   
    &(c  \boxplus_{A \to X}  f)~a = c  \boxplus_{X}  (f~a) 
  \end{aligned}\]
\end{restatable}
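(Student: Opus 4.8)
The plan is to first extract an explicit formula for the derived cost algebra and then verify each equation by unfolding definitions. Recall from Beck's theory of composite monads~\cite{beck_distributive_1969} that, for a $\mathbb{T}$-algebra $(X, \alpha_X)$ with structure map $\alpha_X : \kw{T}(X) = \lift{(\mathbb{C} \times X)} \to X$, the underlying cost (writer) algebra is obtained by precomposing $\alpha_X$ with the lift unit; concretely, $c \boxplus_X x = \alpha_X(\eta_\mathbb{L}(c, x))$, where $\eta_\mathbb{L} : \mathbb{C} \times X \to \lift{(\mathbb{C} \times X)}$ is the unit of $\mathbb{L}$. The single fact I will use repeatedly is the naturality of $\eta_\mathbb{L}$, namely $\kw{L}(g)(\eta_\mathbb{L}(z)) = \eta_\mathbb{L}(g(z))$ for any $g$. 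With this formula in hand, all three equations become calculations.

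For the first equation, the free $\mathbb{T}$-algebra $\kw{T}(A)$ carries the structure map $\mu_\mathbb{T}$, so by the formula $c \boxplus_{\kw{T}(A)} e = \mu_\mathbb{T}(\eta_\mathbb{L}(c, e))$. I would unfold the definition of $\mu_\mathbb{T}$ from \cref{fig:partial-cost-monad}: the outer $\mathbb{L}$-bind is applied to $\eta_\mathbb{L}(c, e)$, which collapses by the left-unit law for the lift monad, substituting our pair $(c, e)$ for the bound variables and leaving the single bind $(c', a) \leftarrow_\mathbb{L} e;~(c + c', a)$. Since the continuation always returns a total (defined) element, this bind has support exactly $e{\downarrow}$, and whenever $e{\downarrow}$ holds its defined value is $(c + c', a)$ for $e = (c', a)$, i.e.\ $\mu_{\mathbb{C} \times -}(c, e)$; this is precisely the right-hand side $(e{\downarrow}, \lambda u.~\mu_{\mathbb{C} \times -}(c, e))$.

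The remaining two equations are instances of the general principle that the derived cost structure is preserved by $\mathbb{T}$-algebra homomorphisms and is computed pointwise on exponentials. For the second, $f^\sharp$ is by construction the $\mathbb{T}$-algebra homomorphism out of the free algebra $\kw{T}(A)$ extending $f$, so $f^\sharp \circ \mu_\mathbb{T} = \alpha_X \circ \kw{T}(f^\sharp)$; applying this square to $\eta_\mathbb{L}(c, e)$ and using naturality of $\eta_\mathbb{L}$ to rewrite $\kw{T}(f^\sharp)(\eta_\mathbb{L}(c,e)) = \eta_\mathbb{L}(c, f^\sharp(e))$ yields $f^\sharp(c \boxplus_{\kw{T}(A)} e) = \alpha_X(\eta_\mathbb{L}(c, f^\sharp~e)) = c \boxplus_X (f^\sharp~e)$. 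For the third, the $\mathbb{T}$-algebra structure on $A \to X$ is the pointwise one induced by $\alpha_X$, so that evaluation at each $a$ is a $\mathbb{T}$-homomorphism; unfolding $(c \boxplus_{A \to X} f)~a = \alpha_{A \to X}(\eta_\mathbb{L}(c, f))~a$ and pushing the evaluation through, again by naturality of $\eta_\mathbb{L}$, gives $\alpha_X(\eta_\mathbb{L}(c, f~a)) = c \boxplus_X (f~a)$.

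I expect the only real obstacle to be bookkeeping rather than mathematics: namely, matching the abstract derived-algebra construction of Beck to the concrete monad of \cref{fig:partial-cost-monad} so as to justify the formula $c \boxplus_X x = \alpha_X(\eta_\mathbb{L}(c, x))$, and confirming that the $\mathbb{T}$-algebra structure on the exponential $A \to X$ really is the pointwise one in the internal language. Once these identifications are fixed, each equation reduces to an application of the lift's unit law together with the naturality of $\eta_\mathbb{L}$, with no genuinely hard step.
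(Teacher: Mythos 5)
Your proposal is correct and matches the paper's approach: the paper dismisses this as a ``routine computation using the distributive law,'' and what you have written is precisely that routine computation carried out in full, starting from the Beck-style identification $c \boxplus_X x = \alpha_X(\eta_{\mathbb{L}}(c,x))$ of the derived writer algebra and then applying the lift monad's unit law, the homomorphism square for $f^\sharp$, and the pointwise algebra structure on exponentials. The only thing to make explicit in a final write-up is the support calculation in the first equation, namely that $e{\downarrow} \mathbin{\angle} (\lambda u.\,\top) = e{\downarrow}$ because the continuation is total, which you have already noted.
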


\subsection{Denotational semantics of \texorpdfstring{\pcfc{}}{pcf/cost}}

The semantics of \pcfc{} is based around the free-forgetful adjunction associated with the partial cost monad in which value types are predomains and computation types are $\mathbb{T}$-algebra valued in predomains.
The essential parts of the model is displayed in \cref{fig:model}. Most of the type structure of \pcfc{} is interpreted using the cartesian closed structure of predomains; note that the numerals type is interpreted as the natural numbers type of predomains $\NatP$, which is not the same as the ambient natural numbers $\Nat$.  

\begin{figure}
\begingroup
\setlength\columnsep{-6.5cm}
\begin{multicols}{2}
  \iblock{
    \mrow{\sem{-} : \tpv \to \mathcal{U}_\kw{predom}}
    \mrow{\sem{-} : \tpc \to \kw{Alg}_\kw{\mathbb{T}}(\mathcal{U}_\kw{predom})}
    \row
    \mrow{\sem{\F{A}} = \kw{T}(\sem{A})}
    \mrow{\sem{\U{X}} = U(\sem{X})}
    \mrow{\sem{1} = 1}
    \mrow{\sem{\kw{ans}} = 2}
    \mrow{\sem{\kw{nat}} = \NatP}
    \mrow{\sem{A \rightharpoonup X} = \sem{A} \to \sem{X}}
  }
  \columnbreak
  \iblock{
    \mrow{\sem{-} : \impl{\Gamma, A} (\Gamma \vdash A) \to \sem{\Gamma} \to \sem{A}}
    \mrow{\sem{-} : \impl{\Gamma, A} (\Gamma \vdash X) \to \sem{\Gamma} \to U(\sem{X})}
    \row
    \mrow{\sem{\kw{ret}(a)}(\gamma) = \eta_\mathbb{T}(\sem{a}(\gamma))}
    \mrow{\sem{\kw{step}(c, e)}(\gamma) = c \costmap \sem{e}(\gamma)}
    \mrow{\sem{\kw{bind}(e, f)}(\gamma) = \kw{bind}(\sem{e}(\gamma), \lambda a.~\sem{f}(a, \gamma))}
    \mrow{\sem{\kw{fix}(f)}(\gamma) = \kw{fix}(\lambda x.~\sem{f}(x, \gamma))}
  }
\end{multicols}
\endgroup
\caption{Selected clauses of the model; in the above we write $U(X)$ for the carrier of a $\mathbb{T}$-algebra $X$.}
\label{fig:model}
\end{figure}

\section{Computational semantics of \texorpdfstring{\pcfc{}}{pcf/cost}}\label{sec:comp-sem}

Computational adequacy is a property relating the denotational semantics of a language with its execution behavior as given by an \emph{operational semantics}. Commonly one employs a \emph{structural operational semantics}~\cite{plotkin:2004} in which the operational semantics of a language is defined as an inductive family of relations ${\mapsto_A} \subseteq A \times A$. Both termination and evaluation can be defined by means of the reflexive transitive closure of the family of relations $\mapsto_A$, which is the smallest reflexive, transitive family of relations containing $\mapsto_A$. 

\emph{Computational semantics.} We consider an alternative formulation of the dynamic semantics of \pcfc{} in which execution is modeled not as an inductive family but directly as a (partial) computation. This departure from traditional operational semantics is necessitated by our proof of computational adequacy. As we explain in \cref{subsec:adequacy}, termination must be a predicate valued in $\Sigma$-propositions; however, the reflexive-transitive closure of a decidable relation (such as $\mapsto_A$) can only be seen to be a $\Sigma$-predicate assuming that $\Sigma$ is closed under countable joins of decidable propositions that are \emph{preserved} by the inclusion into the ambient type theory, \ie{} $\some{n:\Nat} \phi(n) \in \Sigma$ for every countable family of $\Sigma$-propositions $\phi$. Our approach bypasses this question by de-emphasizing $\Nat$ in favor of the initial lifting algebra $\omega$, a lesson of Simpson~\cite{simpson:2004}. 
We discuss the relationship between this computational semantics and ordinary operational semantics in \cref{sec:conclusion}. 

\subsection{Computational semantics of \texorpdfstring{\pcfc{}}{pcf/cost}}\label{subsec:comp-sem}

\begin{figure}
\begin{mathpar}
  \inferrule{
  }{
    \kw{bind}(\kw{ret}(a), f) \mapsto 0, f(a)
  }

  \inferrule{
  }{
    \kw{ap}(\kw{lam}(e), e_1) \mapsto 0, e(e_1)
  }

  \inferrule{
  }{
    \kw{fix}(e) \mapsto 0, e(\kw{fix}(e))
  }

  \inferrule{
  }{
    \kw{ifz}(\kw{zero}, e_0, e_1) \mapsto 0, e_0
  }

  \inferrule{
  }{
    \kw{ifz}(\kw{succ}(v), e_0, e_1) \mapsto 0, e_1(v)
  }

  \inferrule{
  }{
    \kw{step}^c(e) \mapsto c, e
  }
\end{mathpar}
\caption{Rules for the small-step transition relation.}
\label{fig:transitions}
\end{figure}
We begin with a family of small-step transition relations \({ \mapsto_A }  \subseteq   \tmv{A}  \times   \mathbb{C}   \times   \tmv{A}\) that implements the \emph{cost effect} model in the sense of Hoffmann~\cite{hoffmann:2011:thesis}. The intuitive meaning of \(e  \mapsto  c, e'\) is that \(e\) transitions in one step to \(e'\) and incurs cost \(c\); the only place where cost is effected is at \(\mathsf{step}\): $\mathsf{step} ^c(e)  \mapsto  c, e$. The family of relations $\mapsto_A$ is defined as an inductive family whose generators are displayed in \cref{fig:transitions} (the expected congruence rules have been omitted). We iterate the ``one step'' relation to obtain the partial map implementing the computational semantics. Because $\mapsto_A$ is decidable, we have a characteristic map $\kw{out}  :  \impl{A :  \mathsf{tp} ^+ }.~\tmv{A}  \to  1 + (\mathbb{C}  \times \tmv{A})$. Consider the following functional: 
\[
  \begin{aligned}    
    \Phi _{ \mathsf{eval} }~f~(e, v) =  \begin{cases}   c  \boxplus  f(e', v)
       & \mathsf{out} (e) =  \mathsf{inr}   \cdot  (c, e')  \\   (e = v,  \lambda  -.~0) 
       &  \mathsf{out} (e) =  \mathsf{inl}   \cdot   \star  
      \end{cases} 
  \end{aligned}\]
Define \(\mathsf{eval} : \impl{A :  \mathsf{tp} ^+}.~\tmv{A} \times \tmv{A} \to \kw{T}(1)\) to be the fixed-point of \(\Phi _{ \mathsf{eval} }\) and \(\mathsf{profile}  :  \mathsf{tm} ^+( \mathsf{U} \mathsf{F} 1 )   \to   \mathsf{T} (1)\) as \(\mathsf{profile} (e) =  \mathsf{eval} (e,  \mathsf{ret} ( \star ))\). The meaning of $\kw{eval}_A(e, v) : \kw{T}(1)$ is that when it is defined, $e$ computes to a value $v$ incurring the defined cost. Similarly, $\kw{profile}(e)$ is the cost of computing $e$ when the former is defined. In the following, we will establish some expected properties of the computational semantics such as the uniqueness of evaluation and a ``big-step'' law for sequencing evaluations. 

\begin{restatable}{proposition}{PropEvalFunc}\label[prop]{prop:eval-func}
  The relation $\pi_1 \circ \kw{eval}$ is functional, \ie{} $\supp{\kw{eval}(e, v)}$ and $\supp{\kw{eval}(e, v')}$ implies $v = v'$.
\end{restatable}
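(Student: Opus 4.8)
The plan is to prove functionality by \emph{fixed-point induction} on the definition $\kw{eval} = \kw{fix}(\Phi_{\kw{eval}})$. Observe that $\kw{eval}$ is the least fixed point of the endomap $\Phi_{\kw{eval}}$ on the domain $D = \tmv{A} \times \tmv{A} \to \kw{T}(1)$, which is a function space into the domain $\kw{T}(1) = \lift{(\mathbb{C} \times 1)}$ and hence itself a domain. I would then consider the subset
\[
  \mathcal{A} = \{\, g : D \mid \forall e, v, v'.~ \supp{g(e,v)} \to \supp{g(e,v')} \to v = v' \,\}
\]
and show both that $\mathcal{A}$ is \emph{admissible} and that $\Phi_{\kw{eval}}$ preserves $\mathcal{A}$; the desired statement is then exactly $\kw{eval} \in \mathcal{A}$, by the fixed-point induction principle.

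First I would verify preservation. Assuming $g \in \mathcal{A}$ together with $\supp{\Phi_{\kw{eval}}(g)(e,v)}$ and $\supp{\Phi_{\kw{eval}}(g)(e,v')}$, I case on $\kw{out}(e)$. If $\kw{out}(e) = \mathsf{inr} \cdot (c, e')$, then $\Phi_{\kw{eval}}(g)(e,v) = c \boxplus_{\kw{T}(1)} g(e',v)$, whose support coincides with $\supp{g(e',v)}$ by the first equation of \cref{prop:cost-alg-laws}, and likewise for $v'$; hence $\supp{g(e',v)}$ and $\supp{g(e',v')}$ hold, so $g \in \mathcal{A}$ yields $v = v'$. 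If instead $\kw{out}(e) = \mathsf{inl} \cdot \star$, the two supports are the propositions $e = v$ and $e = v'$, both of which hold, so again $v = v'$. This step is routine.

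The main work is establishing admissibility of $\mathcal{A}$. By \cref{cor:admissible} it suffices to check that $\mathcal{A}$ is a lower set containing $\bot_D$ and closed under synthetic $\omega$-joins. Lowerness is immediate: if $g \le h$ with $h \in \mathcal{A}$, then since the order on $D$ is pointwise and, by \cref{prop:lift-order}, $\supp{g(e,v)}$ implies $\supp{h(e,v)}$, the supports of $g$ feed into $h \in \mathcal{A}$. The least element $\bot_D$ sends every pair to the undefined computation, whose support never holds, so $\bot_D \in \mathcal{A}$ vacuously. For closure under $\omega$-joins I take a chain $g_\bullet : \omega \to D$ landing in $\mathcal{A}$ with join $g_\infty = \sup g_\bullet$, fix $e, v, v'$, and assume $\supp{g_\infty(e,v)}$ and $\supp{g_\infty(e,v')}$. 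Since suprema in $D$ are pointwise (\cref{prop:lub} and the pointwise order on function domains) and the support projection $\supp{-} : \kw{T}(1) \to \Sigma$ is continuous (\cref{prop:cont}), we obtain $\supp{g_\infty(e,v)} = \sup_n \supp{g_n(e,v)}$ as a join in $\Sigma$, and similarly for $v'$.

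Here is the delicate point, which I expect to be the crux of the argument. For each $n$, membership $g_n \in \mathcal{A}$ gives $\supp{g_n(e,v)} \wedge \supp{g_n(e,v')} \to v = v'$; and since term equality is valued in $\Sigma$ — a hypothesis already forced by the well-definedness of $\kw{eval}$, as the value branch of $\Phi_{\kw{eval}}$ uses $e = v$ as a support — the proposition $v = v'$ lies in $\Sigma$, where the intrinsic order is the implication order (\cref{prop:dom-linked} and its corollary). Thus $v = v'$ is an upper bound in $\Sigma$ of the chain $n \mapsto \supp{g_n(e,v)} \wedge \supp{g_n(e,v')}$. Using continuity of the conjunction $\wedge : \Sigma \times \Sigma \to \Sigma$ along the diagonal chain (\cref{prop:cont}, with suprema on products taken componentwise), this join equals $\supp{g_\infty(e,v)} \wedge \supp{g_\infty(e,v')}$, which holds by assumption; the least-upper-bound property then forces $v = v'$. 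This closes $\mathcal{A}$ under $\omega$-joins, completing the admissibility check and hence the proof. The reconciliation of the $\Sigma$-join of supports with the conclusion $v = v'$ — resting on $\Sigma$-valued term equality together with continuity of meets over synthetic $\omega$-joins — is the step I expect to require the most care.
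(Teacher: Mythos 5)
Your proposal is correct and follows essentially the same route as the paper: fixed-point induction on the subset $P = \{\alpha \mid \forall e,v,v'.~\alpha(e,v){\downarrow} \land \alpha(e,v'){\downarrow} \to v = v'\}$, with the same case split on $\kw{out}(e)$ for the preservation step. The only difference is that you spell out the admissibility check (lowerness, $\bot$, and closure under synthetic $\omega$-joins via the $\Sigma$-valuedness of term equality and continuity of supports), which the paper delegates to \cref{cor:admissible} without further comment.
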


\begin{restatable}{proposition}{PropEvalSeq}\label[prop]{prop:eval-seq}
  If \(\kw{eval} (e,  \kw{ret} (v)) = c_1\) and \(\kw{eval} (g~v,  \kw{ret} (w)) = c_2\), then \(\mathsf{eval} (e; g,  \kw{ret} (w)) = c_1 + c_2\). 
\end{restatable}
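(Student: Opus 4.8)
The plan is to reduce the sequencing law to a single \emph{bind-prefix} identity for $\kw{eval}$, proved by fixed-point induction on the defining functional $\Phi_{\kw{eval}}$, and then to read off the stated cost equation from the laws of the derived cost algebra (\cref{prop:cost-alg-laws}). The dynamic intuition is that running $\kw{bind}(e,g)$ first runs $e$ — via the (omitted) congruence rule $e \mapsto c, e' \Rightarrow \kw{bind}(e,g) \mapsto c, \kw{bind}(e',g)$ — until $e$ reaches $\kw{ret}(v)$, then takes the single reduction $\kw{bind}(\kw{ret}(v),g) \mapsto 0, g~v$ at cost $0$, and finally runs $g~v$. Since $\kw{eval} = \kw{fix}(\Phi_{\kw{eval}})$, I first record the one-step \emph{unfolding} facts obtained directly from the fixed-point equation $\kw{eval} = \Phi_{\kw{eval}}(\kw{eval})$ together with the rules of \cref{fig:transitions}: if $e \mapsto c, e'$ then $\kw{eval}(e, v') = c \boxplus \kw{eval}(e', v')$, and $\kw{eval}(\kw{bind}(\kw{ret}(v), g), v') = 0 \boxplus \kw{eval}(g~v, v')$; moreover a terminal $e$ makes $\kw{eval}(e,\kw{ret}(v))$ defined only when $e = \kw{ret}(v)$, in which case its cost is $0$.

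The central lemma is: for all $e, g, v, w$, if $\kw{eval}(e, \kw{ret}(v)) = c_1$ then $\kw{eval}(\kw{bind}(e,g), \kw{ret}(w)) = c_1 \boxplus \kw{eval}(g~v, \kw{ret}(w))$. I would prove it by fixed-point induction with the predicate
\[
  P(h) \;\equiv\; \forall e,g,v,w,c_1.\; h(e,\kw{ret}(v)) = c_1 \;\Rightarrow\; \kw{eval}(\kw{bind}(e,g),\kw{ret}(w)) = c_1 \boxplus \kw{eval}(g~v,\kw{ret}(w)),
\]
in which the recursion variable $h$ occurs \emph{only} in the hypothesis, while the conclusion mentions the actual fixed point $\kw{eval}$ and may therefore be unfolded freely. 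The base case $P(\bot)$ is vacuous, since $\bot(e,\kw{ret}(v))$ has empty support. For the step, assuming $P(h)$ and writing $f = \Phi_{\kw{eval}}(h)$, I case on $\kw{out}(e)$: if $e \mapsto c, e'$, then $f(e,\kw{ret}(v)) = c \boxplus h(e',\kw{ret}(v))$, so the hypothesis forces $h(e',\kw{ret}(v)) = c_1'$ with $c_1 = c + c_1'$, and unfolding $\kw{eval}$ once on $\kw{bind}(e,g)$ and applying $P(h)$ at $e'$ gives $\kw{eval}(\kw{bind}(e,g),\kw{ret}(w)) = c \boxplus (c_1' \boxplus \kw{eval}(g~v,\kw{ret}(w))) = c_1 \boxplus \kw{eval}(g~v,\kw{ret}(w))$ by associativity of the cost action (\cref{prop:cost-alg-laws}); if instead $e$ is terminal, the hypothesis forces $e = \kw{ret}(v)$ and $c_1 = 0$, and the unfolding facts give $\kw{eval}(\kw{bind}(\kw{ret}(v),g),\kw{ret}(w)) = 0 \boxplus \kw{eval}(g~v,\kw{ret}(w))$, which is exactly the claim. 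The delicate choice here — and the main obstacle — is keeping $\kw{eval}$ rather than $h$ in the conclusion: retaining $h$ there would create an off-by-one mismatch in the terminal case, since the single $\kw{bind}$--$\kw{ret}$ reduction consumes one iterate level before the continuation $g~v$ is run.

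It remains to check that $\{h \mid P(h)\}$ is admissible so that fixed-point induction (via \cref{cor:admissible}) applies. Because the conclusion of $P$ refers only to $\kw{eval}$ and never to $h$, the set is automatically a \emph{lower} subset: if $h' \sqsubseteq h$ and $h'(e,\kw{ret}(v)) = c_1$, then by the pointwise order on lifts (\cref{prop:lift-order}) and discreteness of $\mathbb{C}$ (\cref{prop:sigma-dec-disc}) the value $h(e,\kw{ret}(v))$ is again $c_1$, so $P(h)$ delivers the required equality. For closure under synthetic $\omega$-joins I use that $\mathbb{C}$ has $\Sigma$-equality, so that the cost equalities appearing in the hypothesis are $\Sigma$-propositions and stabilize along the chain in $\kw{T}(1) \cong \lift{\mathbb{C}}$; combined with \cref{prop:bug-admissible} this yields completeness. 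With the lemma established, the proposition follows at once: from $\kw{eval}(e,\kw{ret}(v)) = c_1$ the lemma gives $\kw{eval}(e;g,\kw{ret}(w)) = c_1 \boxplus \kw{eval}(g~v,\kw{ret}(w))$, and substituting $\kw{eval}(g~v,\kw{ret}(w)) = c_2$ and evaluating the cost action (\cref{prop:cost-alg-laws}) yields $c_1 + c_2$, with uniqueness of the reached value guaranteed by \cref{prop:eval-func}.
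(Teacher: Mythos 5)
Your proposal is correct and follows essentially the same route as the paper: a fixed-point induction on $\Phi_{\kw{eval}}$ over a predicate in which the recursion variable occurs only as the cost of the prefix $e$ while the conclusion's left-hand side refers to the genuine fixed point $\kw{eval}$, followed by a case split on $\kw{out}(e)$ and the cost-algebra laws. The only differences are cosmetic — you name the cost $c_1$ rather than writing $\alpha(e,v)$, drop the (inessential) definedness hypothesis on $\kw{eval}(g~v,\kw{ret}(w))$, and spell out the admissibility argument that the paper leaves as an observation.
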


\begin{corollary}\label[cor]{prop:prof-seq}
If \(\mathsf{eval} (e,  \mathsf{ret} (v)) = c_1\) and \(\mathsf{profile} (g~v) = c_2\), then \(\mathsf{profile} (e; g) = c_1 + c_2\). 
\end{corollary}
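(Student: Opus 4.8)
The plan is to obtain this as an immediate specialization of the sequencing law for evaluation, \cref{prop:eval-seq}, taking the output type to be the terminal type $1$. First I would unfold the definition of $\mathsf{profile}$, namely $\mathsf{profile}(h) = \mathsf{eval}(h, \mathsf{ret}(\star))$, where $\star$ is the unique point of $1$. Under this unfolding the second hypothesis $\mathsf{profile}(g~v) = c_2$ is literally the statement $\mathsf{eval}(g~v, \mathsf{ret}(\star)) = c_2$, and the goal $\mathsf{profile}(e; g) = c_1 + c_2$ is $\mathsf{eval}(e; g, \mathsf{ret}(\star)) = c_1 + c_2$.

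With both hypotheses now phrased in terms of $\mathsf{eval}$, I would instantiate \cref{prop:eval-seq} at the value $w := \star : 1$: from $\mathsf{eval}(e, \mathsf{ret}(v)) = c_1$ (the first hypothesis, taken verbatim) and $\mathsf{eval}(g~v, \mathsf{ret}(\star)) = c_2$, the proposition yields $\mathsf{eval}(e; g, \mathsf{ret}(\star)) = c_1 + c_2$. Re-folding the definition of $\mathsf{profile}$ on the left-hand side then gives exactly $\mathsf{profile}(e; g) = c_1 + c_2$.

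Because the argument is a pure specialization of the preceding proposition, I do not expect any genuine obstacle; the only point worth flagging is that since the result type is $1$, the value produced by the second evaluation is forced to be the unique element $\star$, so the $w$ appearing in \cref{prop:eval-seq} is pinned down automatically and no separate case analysis or uniqueness argument (beyond the functionality already established in \cref{prop:eval-func}) is required.
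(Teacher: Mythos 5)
Your proposal is correct and matches the paper's intent exactly: the corollary is stated without proof precisely because it is the instance of \cref{prop:eval-seq} at $w := \star : 1$ obtained by unfolding $\mathsf{profile}(h) = \mathsf{eval}(h, \mathsf{ret}(\star))$. Your remark that $w$ is pinned down by the terminal type is a fine observation, though not needed since you simply choose $w = \star$ when instantiating.
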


In ordinary operational semantics, one has the property that $e; g \Downarrow v$ implies there exists a value $w$ such that $e \Downarrow w$ and $g(w) \Downarrow v$. An analogous rule also holds for the computational semantics, except the value $w$ is not existentially quantified: 

\begin{restatable}{proposition}{PropCompIndSeq}\label[prop]{prop:comp-ind-seq}
  The following inference rule is valid for any $\Sigma$-predicate $\varphi$: 
  \begin{mathpar}
    \inferrule{
      \all{v : A} \kw{eval}(e, v)\supp{} \land \kw{eval}(g~v, \kw{ret}(w))\supp{} \to \varphi(\kw{eval}(e, v) + \kw{eval}(g~v, \kw{ret}(w)))
    }{
      \kw{eval}(e; g, \kw{ret}(w))\supp{} \to \varphi(\kw{eval}(e; g, \kw{ret}(w)))
    }
   \end{mathpar}
\end{restatable}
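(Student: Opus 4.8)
The plan is to reduce the inference rule to a single decomposition lemma about the evaluation of a $\kw{bind}$, and then to feed the (unique) intermediate value into the premise. Concretely, I would assume the premise together with $\kw{eval}(e; g, \kw{ret}(w))\supp{}$, and first show that there is an intermediate value $v_0 : A$ such that both $\kw{eval}(e, \kw{ret}(v_0))\supp{}$ and $\kw{eval}(g~v_0, \kw{ret}(w))\supp{}$ hold, and moreover that the accumulated costs split as $\kw{eval}(e; g, \kw{ret}(w)) = \kw{eval}(e, \kw{ret}(v_0)) + \kw{eval}(g~v_0, \kw{ret}(w))$. Granting this, the argument concludes immediately: instantiating the premise at $v_0$ yields $\varphi\bigl(\kw{eval}(e, \kw{ret}(v_0)) + \kw{eval}(g~v_0, \kw{ret}(w))\bigr)$, and transporting along the cost equation gives $\varphi(\kw{eval}(e; g, \kw{ret}(w)))$, as required.

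The cost equation itself is exactly the content of \cref{prop:eval-seq} once $v_0$ is in hand, and uniqueness of $v_0$ is guaranteed by functionality of evaluation (\cref{prop:eval-func}); so the real work is the extraction of $v_0$ together with the two supports from the single assumption $\kw{eval}(e; g, \kw{ret}(w))\supp{}$. For this I would analyse the reduction behaviour of $\kw{bind}(e, g)$ dictated by \cref{fig:transitions}: while $e$ has not yet returned, $\kw{bind}(e, g)$ steps only by the (omitted) congruence rule, incurring exactly the cost of the corresponding step of $e$ and never reaching $\kw{ret}(w)$; as soon as $e$ becomes a return $\kw{ret}(v_0)$, the rule $\kw{bind}(\kw{ret}(v_0), g) \mapsto 0, g(v_0)$ transfers control to $g~v_0$ at zero additional cost. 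Thus the evaluation trace of $e; g$ necessarily passes through a state $\kw{bind}(\kw{ret}(v_0), g)$ that identifies $v_0$, its prefix being the evaluation of $e$ to $\kw{ret}(v_0)$ and its suffix the evaluation of $g~v_0$ to $\kw{ret}(w)$.

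To make this extraction rigorous without appealing to $\Nat$-indexed recursion, I would exploit that $\kw{eval} = \kw{fix}(\Phi_{\kw{eval}})$ is the supremum of the synthetic $\omega$-chain of approximants $\Phi_{\kw{eval}}^{(n)}(\bot)$: the support $\kw{eval}(e; g, \kw{ret}(w))\supp{}$ is witnessed at some approximant, and the splitting above is then established by fixed-point induction over the evaluation of $e$ (equivalently, over the approximant stage at which the $e$-phase terminates). Here is where the hypothesis that $\mathbb{C}$ is discrete—which follows from its $\Sigma$-equality by \cref{prop:sigma-dec-disc}—is used: the accumulated costs form a monotone chain along the approximants, and discreteness collapses its intrinsic order to equality, so the cost observed at the finite stage where $e$ returns coincides with its limiting value and we may compute with the concrete finite-stage sum.

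I expect the main obstacle to be organising this extraction as an admissible fixed-point induction while remaining inside the $\Sigma$-propositional discipline: the intermediate value must be produced from the $\Sigma$-valued support $\kw{eval}(e; g, \kw{ret}(w))\supp{}$ rather than from an unrestricted existential, which is precisely the reason the dynamics are formulated over the initial lift algebra $\omega$ rather than over $\Nat$. The secondary delicacy is the cost bookkeeping, since the identity $\kw{eval}(e; g, \kw{ret}(w)) = \kw{eval}(e, \kw{ret}(v_0)) + \kw{eval}(g~v_0, \kw{ret}(w))$ only holds as a genuine equation in $\mathbb{C}$—rather than merely up to the intrinsic order supplied by monotonicity of the approximants—once discreteness of $\mathbb{C}$ is invoked.
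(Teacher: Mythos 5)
Your reduction hinges on a decomposition lemma that is not available in this setting, and the statement is formulated the way it is precisely to avoid it. You propose to extract, from the single $\Sigma$-proposition $\kw{eval}(e; g, \kw{ret}(w))\supp{}$, an intermediate value $v_0$ together with the two supports $\kw{eval}(e, \kw{ret}(v_0))\supp{}$ and $\kw{eval}(g~v_0, \kw{ret}(w))\supp{}$. That is exactly the existentially quantified sequencing principle which \cref{sec:comp-sem} flags as unavailable (``the value $w$ is not existentially quantified''), and it is the reason the rule carries a universally quantified premise in the first place. The obstruction is not merely organisational, and naming it does not discharge it: to prove your lemma one would run a fixed-point induction whose motive has the form $d\supp{} \to \exists v_0.\,(\dots)$, and admissibility of such a subset fails because one cannot pass from the join of a synthetic $\omega$-chain of $\Sigma$-propositions being inhabited to a stage or witness at which it is inhabited --- this mismatch between $\exists$ and joins of $\Sigma$-propositions is exactly the error in the earlier adequacy proof recalled in \cref{sec:discussion}. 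For the same reason, your appeal to the support being ``witnessed at some approximant $\Phi_{\kw{eval}}^{(n)}(\bot)$'' is an existential over $\Nat$ that the $\Sigma$-discipline does not license without Axiom~N, which the paper explicitly does not assume.

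The paper's proof instead performs a single fixed-point induction on $\kw{eval}$ with the invariant $P = Q \cap R$, where $Q$ asserts $\alpha \sqsubseteq \kw{eval}$ and $R$ quantifies over partial traces $e \mapsto^n c', e'$ of the first component, asserting $\varphi(c' + \alpha(e'; g, \kw{ret}(w)))$ whenever the residual composite is defined. No witness is ever extracted from a support: the intermediate value is read off syntactically by the \emph{decidable} case analysis on $\kw{out}(e')$ --- when $e' = \kw{ret}(v)$ the universally quantified premise is instantiated at that concrete $v$, and otherwise the congruence step folds the incurred cost $c''$ into the accumulated prefix and the induction hypothesis applies. Discreteness of $\mathbb{C}$ enters only to upgrade $\alpha(g~v, \kw{ret}(w)) \sqsubseteq \kw{eval}(g~v, \kw{ret}(w))$ (from $\alpha \in Q$) to an equality once both sides are defined, not to collapse a chain of approximant costs. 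If you reorganise the argument around such a trace-indexed admissible invariant and drop the witness extraction, the remaining ingredients you cite (\cref{prop:eval-seq}, \cref{prop:eval-func}, discreteness of $\mathbb{C}$) are the right ones.
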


Using this rule we may derive the following law for profiling compound sequences: 

\begin{restatable}{proposition}{PropProfAssoc}\label[prop]{prop:prof-assoc}
  We have \(\mathsf{profile} ((e; g); i) =  \mathsf{profile} (e; ( \lambda  v.~g~v; i))\).
\end{restatable}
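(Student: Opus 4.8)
The plan is to read the statement as an \emph{associativity law for} $\kw{profile}$ and to prove it by decomposing both sides into one common three-fold cost sum. Recall that $\kw{profile}$ lands in $\kw{T}(1)$, which under the interpretation of $\kw{T}$ is the lift $\lift{\mathbb{C}}$ up to the evident isomorphism. Hence, to establish the equality it suffices, by \cref{prop:lift-order} together with the discreteness of $\mathbb{C}$ (which holds by \cref{prop:sigma-dec-disc}, since $\mathbb{C}$ has $\Sigma$-equality), to show that the two partial elements have equivalent supports and that, whenever defined, they carry equal costs. Both obligations will fall out of a single decomposition argument, so I would prove the ``defined $\Rightarrow$ equal'' comparisons and read off the support equivalence from the same reasoning.

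The workhorse is \cref{prop:comp-ind-seq}: it lets me analyze the profile of a sequence $p; q$ in terms of $\kw{eval}(p, v)$ and $\kw{eval}(q\,v, \kw{ret}(w))$ without ever existentially quantifying the intermediate value $v$ --- a move that is not directly available because $\Sigma$ need not be closed under the relevant existentials. Using this (in the $\kw{profile}$ form of \cref{prop:prof-seq}), I would peel the right-hand side $\kw{profile}(e; (\lambda v.\, g\,v; i))$ apart in two stages: first split off $e$ to rewrite it as $\kw{eval}(e, \kw{ret}(v)) + \kw{profile}(g\,v; i)$, and then split off $g\,v$ inside $\kw{profile}(g\,v; i)$ to obtain $\kw{eval}(g\,v,\kw{ret}(w)) + \kw{profile}(i\,w)$. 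Functionality of evaluation (\cref{prop:eval-func}) guarantees that the intermediate values $v, w$ named here are the unique ones, so the resulting cost $\kw{eval}(e, \kw{ret}(v)) + \bigl(\kw{eval}(g\,v, \kw{ret}(w)) + \kw{profile}(i\,w)\bigr)$ is well defined.

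For the left-hand side $\kw{profile}((e; g); i)$ I would first split off the inner computation $e; g$ to get $\kw{eval}(e; g, \kw{ret}(w)) + \kw{profile}(i\,w)$, and then apply the big-step sequencing law \cref{prop:eval-seq} --- again routed through the $\Sigma$-predicate rule of \cref{prop:comp-ind-seq} --- to rewrite $\kw{eval}(e; g, \kw{ret}(w))$ as $\kw{eval}(e, \kw{ret}(v)) + \kw{eval}(g\,v, \kw{ret}(w))$. This yields the cost $\bigl(\kw{eval}(e, \kw{ret}(v)) + \kw{eval}(g\,v, \kw{ret}(w))\bigr) + \kw{profile}(i\,w)$. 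The two expressions now coincide by associativity of the cost monoid $(\mathbb{C}, +, 0)$, and their supports agree because each decomposition is defined under exactly the same hypotheses: that $e$ evaluates to some $v$, that $g\,v$ evaluates to some $w$, and that $i\,w$ terminates. Concretely, I would package the comparison as a $\Sigma$-predicate $\varphi(c) \mathrel{:=} (c = \text{the common triple sum})$ and feed it to \cref{prop:comp-ind-seq} to discharge the defined-implies-equal obligations, using the forward laws \cref{prop:prof-seq} and \cref{prop:eval-seq} for the reverse (parts-defined-imply-whole-defined) inclusion of supports.

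The main obstacle is bookkeeping rather than conceptual: every place where a naive proof would assert ``there exists an intermediate value $v$ with $e \Downarrow v$'' must instead be threaded through \cref{prop:comp-ind-seq}, whose conclusion is phrased so that the intermediate value appears only under the hypotheses of a $\Sigma$-predicate. Getting the nesting of these applications right --- peeling the right-hand side in two separate stages while peeling the left-hand side as one outer split followed by an inner use of \cref{prop:eval-seq} --- is the delicate part, and it is precisely why the $\Sigma$-predicate formulation of sequencing was isolated beforehand. Once the two decompositions are aligned, the remaining content is the associativity of $+$ and the order-theoretic characterization of equality in $\lift{\mathbb{C}}$.
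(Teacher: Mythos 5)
Your proposal is correct and follows essentially the same route as the paper's proof: both sides are peeled apart by nested applications of \cref{prop:comp-ind-seq} (with the comparison packaged as a $\Sigma$-predicate) into the common sum $\kw{eval}(e,\kw{ret}(v)) + \kw{eval}(g\,v,\kw{ret}(w)) + \kw{profile}(i\,w)$, with \cref{prop:eval-func} handling uniqueness of the intermediate values and \cref{prop:prof-seq}/\cref{prop:eval-seq} supplying the definedness direction. The only cosmetic difference is that you route the final comparison through \cref{prop:lift-order} and discreteness of $\mathbb{C}$, whereas the paper simply proves the two Kleene-inequality directions directly.
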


\section{Computational adequacy and noninterference}\label{sec:adequacy}

In this section we show that the computational and denotational semantics satisfy a tight correspondence at the type of \emph{observations}: for every $e : \U{\F{1}}$, we have that $\sem{e}$ is Kleene equivalent to $\kw{profile}(e)$ in the sense that the cost specified computationally and denotationally are equal whenever one of them is defined.%

\subsection{Soundness}

In one direction, it is not too difficult to show \emph{soundness}, which means that the computational steps are respected by the denotational semantics: 

\begin{restatable}{proposition}{PropSoundStep}\label[prop]{prop:sound-step}
  If \(e  \mapsto  c,e'\), then \(\sem{e} = c  \boxplus   \sem{e'}\). 
\end{restatable}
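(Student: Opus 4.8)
The plan is to proceed by induction on the derivation of $e \mapsto c, e'$, treating first the six axioms of \cref{fig:transitions} and then the omitted congruence rules. For every axiom except $\kw{step}$ the incurred cost is $0$, so the monoid unit law collapses the goal $\sem{e} = 0 \boxplus \sem{e'}$ to the plain semantic equation $\sem{e} = \sem{e'}$, and each such case is a standard $\beta$-style computation rule unfolded against \cref{fig:model}. Concretely, for $\kw{bind}(\kw{ret}(a), f) \mapsto 0, f(a)$ I would unfold $\sem{\kw{bind}(\kw{ret}(a), f)}(\gamma)$ to $(\lambda x.~\sem{f}(x,\gamma))^\sharp(\eta_\mathbb{T}(\sem{a}(\gamma)))$ and apply the left-unit law of $\mathbb{T}$; for $\kw{ap}(\kw{lam}(e), e_1)$ I would use the $\beta$-law supplied by the exponential structure of predomains; for the two $\kw{ifz}$ rules I would compute by cases on the interpretation of $\kw{zero}$ and $\kw{succ}$ in $\NatP$; and for $\kw{fix}(e) \mapsto 0, e(\kw{fix}(e))$ I would invoke the unrolling identity $\kw{fix}(g) = g(\kw{fix}(g))$ guaranteed by the least-fixed-point characterization established for domains.

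The single axiom carrying a nonzero cost is $\kw{step}^c(e) \mapsto c, e$, which holds essentially definitionally: the model sets $\sem{\kw{step}(c,e)}(\gamma) = c \costmap \sem{e}(\gamma)$, so the goal is immediate once one identifies the derived cost-algebra action $\costmap$ with the notation $\boxplus$ used in the statement.

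The substantive content lies in the congruence rules, and here the key device is \cref{prop:cost-alg-laws}. The characteristic congruence is the one for sequencing --- if $e \mapsto c, e'$ then $\kw{bind}(e, f) \mapsto c, \kw{bind}(e', f)$ --- where the induction hypothesis gives $\sem{e} = c \boxplus \sem{e'}$ and the goal unfolds to $(\lambda a.~\sem{f}(a))^\sharp(c \boxplus \sem{e'}) = c \boxplus (\lambda a.~\sem{f}(a))^\sharp(\sem{e'})$. This is exactly the homomorphism law $f^\sharp(c \boxplus_{\kw{T}(A)} e) = c \boxplus_X (f^\sharp e)$ of \cref{prop:cost-alg-laws}; the analogous congruence inside an application context is discharged by the pointwise law $(c \boxplus_{A \to X} f)~a = c \boxplus_X (f~a)$ from the same proposition.

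I expect the main obstacle to be organizational rather than conceptual: one must check that every evaluation frame in which a step can fire is assembled from precisely the algebra-homomorphism and pointwise operators covered by \cref{prop:cost-alg-laws}, so that an accumulated cost always commutes to the front without residue. Since the only elimination forms that propagate reduction are sequencing and application, both of which are governed by those two laws, the induction closes uniformly and the cost-algebra equations absorb all of the cost bookkeeping.
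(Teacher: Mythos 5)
Your proposal is correct and follows exactly the route the paper takes: the paper's own proof is simply ``by induction on the derivation of $e \mapsto c, e'$,'' and your case analysis (unit-law collapse for the zero-cost axioms, the definitional clause for $\kw{step}$, and \cref{prop:cost-alg-laws} to push the cost past the sequencing and application congruences) is precisely the elaboration that induction requires.
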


\begin{restatable}{theorem}{ThmSound}\label[thm]{thm:sound}
  If \(\mathsf{eval} (e, v) { \downarrow }\), then $\sem{e} = \kw{eval}(e, v) \costmap \sem{v}$. 
\end{restatable}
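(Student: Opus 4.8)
The plan is to prove the statement by fixed-point induction along the defining equation $\kw{eval} = \kw{fix}(\Phi_{\kw{eval}})$. Since the conclusion is an equation in the $\mathbb{T}$-algebra $\sem{A}$ guarded by the $\Sigma$-proposition $\kw{eval}(e,v){\downarrow}$, I first put it in a manifestly domain-valued form. Writing $\overline{g}(e,v) = (\lambda -.~\sem{v})^\sharp(g(e,v)) : \sem{A}$ for the strict algebra-sequencing of $g(e,v) : \kw{T}(1)$, the conclusion $\kw{eval}(e,v){\downarrow} \to \sem{e} = \kw{eval}(e,v) \costmap \sem{v}$ is equivalent to $\kw{eval}(e,v){\downarrow} \to \sem{e} = \overline{\kw{eval}}(e,v)$, because the sequencing map carries a defined partial element of cost $c$ to $c \costmap \sem{v}$ and the undefined element to $\bot$. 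I then consider the subset
\[
  \mathcal{A} = \bigl\{\, g \;\bigm|\; \forall A, e, v.\; g(e,v){\downarrow} \to \sem{e} = \overline{g}(e,v) \,\bigr\}
\]
of the domain of evaluators and aim to show $\kw{eval} \in \mathcal{A}$.

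For the inductive step, assume $g \in \mathcal{A}$ and show $\Phi_{\kw{eval}}(g) \in \mathcal{A}$ by case analysis on $\mathsf{out}(e)$. If $\mathsf{out}(e) = \mathsf{inr} \cdot (c, e')$, \ie{} $e \mapsto c, e'$, then $\Phi_{\kw{eval}}(g)(e,v) = c \boxplus g(e', v)$, so by \cref{prop:cost-alg-laws} its support coincides with that of $g(e',v)$ and $\overline{\Phi_{\kw{eval}}(g)}(e,v) = c \costmap \overline{g}(e', v)$. Under the hypothesis $\Phi_{\kw{eval}}(g)(e,v){\downarrow}$ we therefore have $g(e',v){\downarrow}$, so the induction hypothesis gives $\overline{g}(e',v) = \sem{e'}$ and hence $\overline{\Phi_{\kw{eval}}(g)}(e,v) = c \costmap \sem{e'} = \sem{e}$ by \cref{prop:sound-step} (identifying $\boxplus$ with the cost action $\costmap$). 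If $\mathsf{out}(e) = \mathsf{inl} \cdot \star$, then $\Phi_{\kw{eval}}(g)(e,v) = (e = v, \lambda -.~0)$, whose support forces $e = v$ and cost $0$; the unit law of the cost action then yields $\overline{\Phi_{\kw{eval}}(g)}(e,v) = 0 \costmap \sem{v} = \sem{v} = \sem{e}$. The base case is vacuous, as $\bot(e,v){\downarrow}$ is false.

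The main obstacle is showing that $\mathcal{A}$ is admissible, so that \cref{cor:admissible} and the fixed-point induction principle apply. Because the orders on the evaluator domain and on $\kw{T}(1) = \lift{(\mathbb{C} \times 1)}$ are pointwise and characterized by \cref{prop:lift-order}, and because $\mathbb{C}$ is discrete by \cref{prop:sigma-dec-disc} (so comparable defined elements of $\kw{T}(1)$ carry equal cost), $\mathcal{A}$ is readily seen to be a lower set containing the bottom evaluator. Closure under synthetic $\omega$-joins is the delicate point: given a chain $g_n \in \mathcal{A}$ with supremum $g_\infty$, continuity of the strict sequencing map (\cref{prop:cont}) gives $\overline{g_\infty}(e,v) = \sup_n \overline{g_n}(e,v)$; strictness together with $g_n \in \mathcal{A}$ shows that each $\overline{g_n}(e,v)$ is $\sem{e}$ guarded by its support $g_n(e,v){\downarrow}$, and since the support map preserves joins, the join of these guarded copies is $\sem{e}$ guarded by $\sup_n(g_n(e,v){\downarrow}) = g_\infty(e,v){\downarrow}$, which equals $\sem{e}$ on its support. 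This establishes $g_\infty \in \mathcal{A}$, and fixed-point induction then delivers $\kw{eval} \in \mathcal{A}$, which is the theorem.
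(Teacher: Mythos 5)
Your proof is correct and shares the paper's skeleton: fixed-point induction over the defining functional $\Phi_{\kw{eval}}$ followed by a case split on $\kw{out}(e)$, with the transition case discharged by \cref{prop:sound-step} and the value case by the unit law. The genuine difference is the choice of induction predicate, which relocates the difficulty. The paper uses $P = \{\alpha \mid \forall e, v.~\alpha(e,v){\downarrow} \to \sem{e} = \kw{eval}(e,v) \boxplus \sem{v}\}$, in which the approximant occurs only in the $\Sigma$-valued guard while the conclusion is a fixed $\Sigma$-proposition about $\kw{eval}$ itself; admissibility is then immediate from the lemma that an exponential $Q^P$ of $\Sigma$-subsets of a domain is admissible, and the only price is that the step case must unfold $\kw{eval} = \Phi_{\kw{eval}}(\kw{eval})$ once to see $\kw{eval}(e,v) = c \boxplus \kw{eval}(e',v)$. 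Your predicate $\mathcal{A}$ lets the conclusion depend on the approximant through $\overline{g}(e,v)$, which yields a cleaner, self-contained step case (no unfolding of $\kw{eval}$) but forfeits the cheap admissibility argument: you must prove lower-closure and join-closure by hand, and you correctly identify the two ingredients that make this go through, namely discreteness of $\mathbb{C}$ (\cref{prop:sigma-dec-disc}), so that comparable defined elements of $\kw{T}(1)$ carry the same cost, and continuity of $g \mapsto \overline{g}(e,v)$ and of the support map (instances of \cref{prop:cont}). If you write the join-closure step out in full, note that passing from ``$\overline{g_n}(e,v) = \sem{e}$ under $g_n(e,v){\downarrow}$ for each $n$'' to the same statement under $g_\infty(e,v){\downarrow} = \sup_n (g_n(e,v){\downarrow})$ uses the universal property of the synthetic join of $\Sigma$-propositions --- the same move the paper makes in its own admissibility lemmas --- so nothing is missing, but it deserves to be said explicitly. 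Both routes are sound; the paper's predicate is the lower-effort one precisely because it keeps the conclusion approximant-free and $\Sigma$-valued.
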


\begin{corollary}\label[cor]{coro:sound}
  Given $e : \U{\F{1}}$, we have $\kw{profile}(e) \le \sem{e}$. 
\end{corollary}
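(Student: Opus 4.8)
The plan is to read this off the soundness theorem (\cref{thm:sound}) specialized to the value $v = \kw{ret}(\star)$, combined with the characterization of the intrinsic order on lifts from \cref{prop:lift-order}. First I would record that $\kw{profile}(e) = \kw{eval}(e, \kw{ret}(\star))$ by definition, and that both $\kw{profile}(e)$ and $\sem{e}$ inhabit the same domain $\sem{\F{1}} = \kw{T}(1) = \lift{(\mathbb{C} \times 1)}$; thus the claimed relation $\kw{profile}(e) \le \sem{e}$ is the synthetic (intrinsic) order on this lift. By \cref{prop:lift-order} it suffices to prove two things: (i) $\supp{\kw{profile}(e)}$ implies $\supp{\sem{e}}$, and (ii) whenever $\supp{\kw{profile}(e)}$ holds, the defined values are ordered, $\kw{profile}(e) \sqsubseteq_{\mathbb{C} \times 1} \sem{e}$.

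Both obligations follow from a single computation. Assuming $\supp{\kw{eval}(e, \kw{ret}(\star))}$, I would apply \cref{thm:sound} at $v = \kw{ret}(\star)$ to obtain $\sem{e} = \kw{eval}(e, \kw{ret}(\star)) \costmap \sem{\kw{ret}(\star)}$, where the first factor is read as the defined cost in $\mathbb{C}$ per the support convention. Since $\sem{\kw{ret}(\star)} = \eta_\mathbb{T}(\star) = \eta_\kw{L}(0, \star)$, the cost algebra laws of \cref{prop:cost-alg-laws} collapse the right-hand side to $\eta_\kw{L}(\kw{profile}(e), \star)$, the defined element of $\kw{T}(1)$ carrying cost $\kw{profile}(e)$ and value $\star$. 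This at once yields (i), because $\eta_\kw{L}(\kw{profile}(e), \star)$ is manifestly defined, and (ii), because the defined value of $\sem{e}$ is exactly $(\kw{profile}(e), \star)$, i.e. the defined value of $\kw{profile}(e)$ itself; here I would invoke that $\mathbb{C}$ is discrete by \cref{prop:sigma-dec-disc}, so its intrinsic order is equality, and that the order on $\mathbb{C} \times 1$ is pointwise, so equal defined values are in particular ordered. Feeding (i) and (ii) back into \cref{prop:lift-order} gives $\kw{profile}(e) \le \sem{e}$.

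I do not expect a genuine obstacle, since the corollary is essentially a repackaging of soundness through the lift-order characterization. The only point requiring care is the bookkeeping that identifies the defined value of $\kw{eval}(e, \kw{ret}(\star))$ — an element of $\mathbb{C}$ once its support holds — with the cost component appearing in $\sem{e}$, and the observation that this conditional equality on the support of $\kw{profile}(e)$ is precisely what \cref{prop:lift-order} demands. It also explains why the conclusion is an inequality rather than an equality: soundness constrains $\sem{e}$ only where $\kw{profile}(e)$ is defined, leaving $\sem{e}$ free to remain defined (hence strictly larger in the lift order) on the locus where evaluation diverges.
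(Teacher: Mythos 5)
Your proposal is correct and is exactly the argument the paper intends: the corollary is stated without explicit proof as an immediate consequence of \cref{thm:sound} specialized to $v = \kw{ret}(\star)$, read through the characterization of the lift order in \cref{prop:lift-order}, which is precisely what you carry out. The only minor remark is that discreteness of $\mathbb{C}$ is not actually needed for step (ii) — the defined values are equal, hence ordered by reflexivity — but this does not affect correctness.
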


\subsection{Adequacy}\label{subsec:adequacy}

Adequacy proper usually refers to the converse direction of the property stated in \cref{coro:sound}: definedness of the denotational semantics implies termination under the computational semantics. Our proof is based on a standard binary logical relations construction between the syntax and semantics of \pcfc{} (\cf{} Plotkin~\cite{plotkin:1977}). The logical relation consists of a family of relations ${\lhd_A} \subseteq \sem{A} \times \tmv{A}$ indexed in the syntactic types $A$ of \pcfc{} such that $\sem{e} \lhd_{\U{\F{1}}} e$ implies the computational adequacy property. The purpose of considering a family of relations is to provide a sufficient strengthening of the desired property to all types so that one may proceed by an inductive proof on the derivation of terms to show that $\sem{e} \lhd_A e$ holds for every term $e : A$. Due to the presence of fixed-point computations, we must show that $- \lhd_{\U{X}} e$ is always an admissible subset of the domain $\sem{X}$ in the sense of \cref{def:admissible}. We define a family of relations called the \emph{formal approximation relations} in \cref{fig:formal-approx-rel} by induction on the structure of syntactic types and show that they satisfy the properties in the preceding discussion.

\begin{figure}
  
  \iblock{
    \begin{multicols}{2}
    \mrow{e  \lhd _{ \kw{1} } e' =  \top}
    \mrow{e  \lhd _{ \kw{ans} } e' = (\overline{e} = e')}
    \mrow{e  \lhd _{ \mathsf{nat} } e' = (e =  \llbracket   e'   \rrbracket )}
    \mrow{e  \lhd _{ \mathsf{U} \mathsf{F} A } e' =  \forall   { \left [ f  \mathrel{( \lhd _A  \Rightarrow   \mathsf{adq} )}  f' \right ]} ~  \mathsf{adq} (f^ \sharp (e), e';f')}
    \mrow{e  \lhd _{ \mathsf{U} (A  \to  X) } e' = e  \mathrel{( \lhd _{A}  \Rightarrow   \lhd _{ \mathsf{U} X })}  e'}
    \columnbreak
    \mrow{\mathsf{adq} (e, e') = (e \le e')}
    \mrow{e  \mathrel{(R  \Rightarrow  S)}  e' =  \forall   { \left [ a  \mathrel{R}  a' \right ]} ~  (e~a)  \mathrel{S}  (e'~a') }
    \end{multicols}
  }
  
   \medskip

   \caption{\emph{Formal approximation relations.} We write $\overline{-} : 2 \to \kw{ans}$ for the function sending $0$ to \kw{no} and $1$ to \kw{yes}. The relation \(\mathsf{adq} \subseteq \kw{T}(1)  \times \tmv{\U{\F{1}}}\) is the ``ground relation'' that generates the formal approximation relations at higher types.}
   \label{fig:formal-approx-rel}
\end{figure}

Formal approximation relations may be extended to open terms as usual. We write $\Gamma \vdash e \lhd_A e'$ when for all closing substitutions $\sigma : \Gamma$, we have that $e(\sem{\sigma}) \lhd_A e'[\sigma]$ holds. The computational adequacy result may be deduced from the \emph{fundamental lemma}: 

\begin{restatable}{theorem}{ThmFLLR} 
  For every closed term $e : \Gamma \vdash A$, the approximation $\Gamma \vdash \sem{e} \lhd_A e$ holds.
\end{restatable}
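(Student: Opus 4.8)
The plan is to argue by induction on the typing derivation of $e$, proving the open statement $\Gamma \vdash \sem{e} \lhd_A e$ directly. By the definition of the open relation this means fixing an arbitrary denotational environment and syntactic substitution that are related componentwise and then verifying the relation term-former by term-former; since $\kw{Tm}^\ominus(\Gamma, X) = \kw{Tm}^+(\Gamma, \U X)$, the computation formers are handled uniformly through their $\U X$ incarnations. Before running the induction I would isolate the structural fact flagged in \cref{subsec:adequacy}: for each computation type $X$ and closed term $e' : \U X$, the subset $\{\, d : \sem{X} \mid d \lhd_{\U X} e' \,\}$ is \emph{admissible} in the domain $\sem X$ in the sense of \cref{def:admissible}. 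I would establish this by induction on $X$. The crux is the ground relation $\mathsf{adq}$ on $\kw{T}(1)$, which on unfolding compares a denotational cost with the computational cost $\kw{profile}$ of the term; the $\Sigma$-equality of $\mathbb{C}$ forces $\mathbb{C}$ to be discrete (\cref{prop:sigma-dec-disc}), so by the pointwise characterisation of the lift order (\cref{prop:lift-order}) the comparison is a $\Sigma$-predicate and cuts out a lower subset. \cref{prop:bug-admissible} and \cref{cor:admissible} then reduce admissibility to closure under $\bot$ (immediate) and under synthetic $\omega$-joins (immediate, since the computational cost is an upper bound). At $\U(A \to X)$ and $\U\F A$ admissibility lifts through the quantifiers, using that $f^\sharp$ is monotone and continuous (\cref{prop:monotone}, \cref{prop:cont}) and that suprema in function spaces are pointwise (\cref{prop:lub}).

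With this lemma in hand I would dispatch the non-recursive formers. The variable, $\kw{yes}$, $\kw{no}$, $\kw{zero}$, and $\kw{succ}$ cases fall straight out of the denotational clauses and the definitions of $\lhd$ at $\kw{ans}$ and $\kw{nat}$. For $\kw{lam}$ and $\kw{ap}$ the relation at $\U(A \to X)$ is the logical implication $\lhd_A \Rightarrow \lhd_{\U X}$, so these follow by feeding the argument's inductive hypothesis to the body's, with the $\beta$-transition $\kw{ap}(\kw{lam}(e), e_1) \mapsto 0, e(e_1)$ and soundness (\cref{prop:sound-step}) absorbing the cost-$0$ step. The $\kw{ret}$ and $\kw{step}$ cases reduce to the ground relation $\mathsf{adq}$: for $\kw{ret}(a)$ I would use the monad law $f^\sharp(\eta_{\mathbb{T}}\sem{a}) = f(\sem{a})$ together with the cost-$0$ transition $\kw{bind}(\kw{ret}(a), f') \mapsto 0, f'(a)$ and the hypothesis $\sem{a} \lhd_A a$; for $\kw{step}$ I would use the transition $\kw{step}^c(e) \mapsto c, e$ and the cost-algebra laws of \cref{prop:cost-alg-laws}, noting that prefixing the cost $c$ on both sides preserves the order and hence $\mathsf{adq}$. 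The $\kw{ifz}$ case is a case split on the numeral, using the hypothesis at $\kw{nat}$ and the two $\kw{ifz}$ transitions.

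The genuinely delicate cases are $\kw{bind}$ and $\kw{fix}$. The relation $\lhd_{\U\F A}$ is engineered so that sequencing composes: given $\sem{e} \lhd_{\U\F A} e$ and $\sem{f} \lhd_{\U(A \to X)} f$, a continuation for $\kw{bind}(e, f)$ assembles with $f$ into a continuation for $e$, and the $e$-relation then closes the argument. The obstacle here is that the computational semantics does not existentially produce the intermediate value, so I would route this through \cref{prop:comp-ind-seq}, whose $\Sigma$-predicate inference rule is precisely tailored to conclude a statement about $\kw{eval}(e;g, \kw{ret}(w))$ from the decomposed evaluations; the associativity of profiling (\cref{prop:prof-assoc}) reconciles the two groupings. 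For $\kw{fix}(f)$ I would invoke fixed-point induction: since $\sem{\kw{fix}(f)}$ is the least fixed point of the semantic functional, realised as the supremum of its iterates, and since the subset $\{\, d \mid d \lhd_{\U X} \kw{fix}(f) \,\}$ is admissible by the lemma above, it suffices to show that $d \lhd_{\U X} \kw{fix}(f)$ implies $\sem{f}(d) \lhd_{\U X} \kw{fix}(f)$ — and this is exactly the inductive hypothesis for the body of $\kw{fix}$, read along the unfolding transition $\kw{fix}(e) \mapsto 0, e(\kw{fix}(e))$ and using soundness once more to absorb the cost-$0$ step.

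I expect the main obstacle to be the admissibility lemma underpinning the $\kw{fix}$ case, and specifically the need for the ground relation $\mathsf{adq}$ to be a $\Sigma$-predicate closed under synthetic $\omega$-joins: this is exactly where the computational hypothesis ($\Sigma$-equality) on $\mathbb{C}$ is consumed, via the pointwise lift order of \cref{prop:lift-order} and continuity of the semantics (\cref{prop:cont}). The $\kw{bind}$ case is the secondary hazard, since the non-existential formulation of sequencing means one cannot simply name the intermediate value and must discharge the argument entirely through the $\Sigma$-predicate rule of \cref{prop:comp-ind-seq}.
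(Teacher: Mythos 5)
Your proposal follows essentially the same route as the paper: admissibility of the formal approximation predicates established by induction on computation types (via discreteness/$\Sigma$-equality of $\mathbb{C}$, the pointwise lift order, and \cref{cor:admissible}), head expansion for the $\kw{ret}$/$\kw{step}$/$\kw{fix}$ cases, a sequencing-composition lemma for $\kw{bind}$ discharged through \cref{prop:comp-ind-seq} and \cref{prop:prof-assoc}, and fixed-point induction for $\kw{fix}$. The only slip is a citation: absorbing a transition $e' \mapsto c, e$ into the relation is the logical-relation head-expansion lemma (\cref{prop:head-exp}), a fact about $\kw{profile}$ on the syntactic side, rather than denotational soundness (\cref{prop:sound-step}).
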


The proof of the fundamental lemma proceeds by induction on the derivation of terms. Details and the proof that formal approximation predicates $- \lhd_{\U{X}} e$ are admissible can be found in the appendix~\cref{app:adequacy}; crucially we rely on the fact that $\kw{eval}(e, v)$ is a $\Sigma$-proposition. 

\begin{corollary}\label[cor]{coro:adequacy}
  Given $e : \U{\F{1}}$, we have that $\sem{e} = \kw{profile}(e)$. 
\end{corollary}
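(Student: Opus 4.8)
The plan is to obtain \cref{coro:adequacy} by combining the soundness bound of \cref{coro:sound} with the ground instance of the fundamental lemma and appealing to antisymmetry. Recall that $\sem{e}$ and $\kw{profile}(e)$ both live in the predomain $\kw{T}(1) = \lift{(\mathbb{C} \times 1)}$, whose intrinsic order is a partial order by \cref{prop:predomain-properties}, and which by \cref{prop:lift-order} orders two partial costs $x \le y$ exactly when definedness of $x$ forces definedness and equality of $y$ (using that $\mathbb{C}$ is discrete, \cref{prop:sigma-dec-disc}). Thus it suffices to prove the two inequalities $\kw{profile}(e) \le \sem{e}$ and $\sem{e} \le \kw{profile}(e)$; antisymmetry then upgrades them to the desired equality, which is precisely Kleene equivalence with matching cost.

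The first inequality is immediate from soundness (\cref{coro:sound}). For the converse I would feed the canonical continuation into the fundamental lemma. By the fundamental lemma the approximation $\sem{e} \lhd_{\U{\F{1}}} e$ holds, which by definition of the formal approximation relation at $\U{\F{1}}$ (taking $A = 1$) asserts that $\kw{adq}(f^\sharp(\sem{e}), e; f')$ holds for every pair of continuations $f : \sem{1} \to \kw{T}(1)$ and $f' : \tmv{1} \to \tmv{\U{\F{1}}}$ related by $f \mathrel{(\lhd_1 \Rightarrow \kw{adq})} f'$. I would instantiate this with $f = \eta_\mathbb{T}$ and $f' = \lambda \_.~\kw{ret}(\star)$. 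Checking that these are related is routine: since $\lhd_1 = \top$ the hypothesis is vacuous, and the obligation reduces to $\kw{adq}(\eta_\mathbb{T}(\star), \kw{ret}(\star))$, that is $\eta_\mathbb{T}(\star) \le \kw{profile}(\kw{ret}(\star))$; but $\kw{ret}(\star)$ is already terminal, so $\kw{profile}(\kw{ret}(\star)) = \eta_\mathbb{T}(\star)$ and the inequality holds reflexively.

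Plugging this continuation into the fundamental lemma yields $\kw{adq}(\eta_\mathbb{T}^\sharp(\sem{e}), e; f')$, i.e. $\sem{e} \le \kw{profile}(e; f')$, where I have simplified $\eta_\mathbb{T}^\sharp = \mathrm{id}$ by the monad unit law. It then remains to identify $\kw{profile}(e; f')$ with $\kw{profile}(e)$: since the bound value has the trivial type $1$, the continuation $f'$ returns $\star$ at cost $0$, so the sequencing laws of the computational semantics (\cref{prop:prof-seq} for definedness of $\kw{profile}(e)$ forcing that of $\kw{profile}(e; f')$, and \cref{prop:comp-ind-seq} for the converse decomposition) give $\kw{profile}(e; f') = \kw{profile}(e)$ as partial costs. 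Substituting, we obtain $\sem{e} \le \kw{profile}(e)$, completing the two inequalities.

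I expect the only real friction here to be bookkeeping: correctly reading off the biorthogonal clause of $\lhd_{\U{\F{1}}}$, verifying that the identity continuation lies in the relation, and discharging the cost-invariance of the trivial bind $\kw{profile}(e; \lambda\_.~\kw{ret}(\star)) = \kw{profile}(e)$ through the sequencing lemmas rather than a (possibly unavailable) syntactic monad law. The genuinely hard content --- admissibility of the relations $- \lhd_{\U{X}} e$ and the inductive proof of the fundamental lemma --- sits upstream in the appendix, so at this point the corollary is essentially a matter of selecting the right instance and invoking antisymmetry on $\kw{T}(1)$.
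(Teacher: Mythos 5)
Your proof is correct and follows the route the paper intends: \cref{coro:sound} gives $\kw{profile}(e) \le \sem{e}$, the fundamental lemma instantiated at the trivial related continuation pair $(\eta_\mathbb{T}, \lambda\_.~\kw{ret}(\star))$ gives $\sem{e} \le \kw{profile}(e)$ after identifying $\kw{profile}(e; \lambda\_.~\kw{ret}(\star))$ with $\kw{profile}(e)$ via \cref{prop:prof-seq} and \cref{prop:comp-ind-seq}, and antisymmetry of the intrinsic order on the predomain $\kw{T}(1)$ (\cref{prop:predomain-properties}) closes the argument. The bookkeeping you flag --- reading off the clause of $\lhd_{\U{\F{1}}}$, checking the identity continuation is in the relation, and the cost-invariance of the trivial bind --- is exactly the residual content, and you discharge it correctly.
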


Extensionally, both the denotational and computational semantics of $e$ are simply partial computations of type $\lift{1}$, so one may view \cref{coro:adequacy} as a cost-sensitive (and internal) version of Ploktin's original adequacy theorem for \pcf{}. Lastly, we see that our semantics of \pcfc{} provides a rigorous proof of the intuitive fact that computations may not observe the cost effect: 

\begin{restatable}{theorem}{ThmNoninterference}
  Any $e : \U{\F{1}} \rightharpoonup \F{2}$ is weakly, extensionally constant in the sense that for all $x, y : \U{\F{1}}$, if $\kw{profile}(x)\supp$ and $\kw{profile}(y)\supp$, then $\kw{eval}(e~x, \kw{ret}(v))\supp$ and $\kw{eval}(e~y, \kw{ret}(u))\supp$ imply $v = u$. 
\end{restatable}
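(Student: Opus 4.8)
The strategy is to reduce the operationally-flavored statement to the purely denotational security property (the unnamed Proposition in \cref{subsec:den-sem}, which asserts that every $f : \mathbb{C} \to \lift{B}$ into a purely extensional $B$ is weakly constant) by transporting everything across the adequacy corollary \cref{coro:adequacy}. First I would observe that by adequacy, $\kw{profile}(x) = \sem{x}$ and $\kw{profile}(y) = \sem{y}$, so the hypotheses $\kw{profile}(x)\supp$ and $\kw{profile}(y)\supp$ are equivalent to $\sem{x}\supp$ and $\sem{y}\supp$; that is, $\sem{x}, \sem{y} : \kw{T}(1) = \lift{(\mathbb{C} \times 1)} \cong \lift{\mathbb{C}}$ are both defined. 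Under the isomorphism $\kw{T}(1) \cong \lift{\mathbb{C}}$, a defined element of $\sem{x}$ is exactly a cost value in $\mathbb{C}$, so the data of $x$ and $y$ (up to definedness) is just a pair of costs $c_x, c_y : \mathbb{C}$.

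Next I would analyze the denotation of $e$. Since $e : \U{\F{1}} \rightharpoonup \F{2}$, its semantics $\sem{e}$ is a map $\sem{\U{\F{1}}} \to \sem{\F{2}}$, i.e.\ a map $U(\kw{T}(1)) \to \kw{T}(2) \cong \lift{(\mathbb{C} \times 2)}$. Composing with $x$ (respectively $y$) and projecting away the cost component of the output, the plan is to exhibit the behavioral part of $e~x$ as the image of a map $\mathbb{C} \to \lift{2}$ applied to the cost $c_x$: concretely, because $\mathbb{C}$ is purely intensional and $2$ is purely extensional (both are guaranteed by \cref{def:axioms} and the standing assumptions on $\mathbb{C}$ in \cref{sec:den-sem}), the composite sending a cost $c$ to the boolean outcome of $\sem{e}$ applied to the computation with cost $c$ is a map of the form $\mathbb{C} \to \lift{2}$ into an extensional codomain. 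I would then invoke the denotational weak-constancy Proposition to conclude that this map takes equal defined values on $c_x$ and $c_y$.

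Finally I would translate back. By soundness and adequacy again, the hypotheses $\kw{eval}(e~x, \kw{ret}(v))\supp$ and $\kw{eval}(e~y, \kw{ret}(u))\supp$ give that the denotational outcomes $\sem{e~x}$ and $\sem{e~y}$ are defined with behavioral components equal to $\overline{v}$ and $\overline{u}$ respectively (using \cref{thm:sound} to match the computed value against the denotation). The weak-constancy step then forces $\overline{v} = \overline{u}$ in $2$, and since $\overline{-} : 2 \to \kw{ans}$ is injective this yields $v = u$. The main obstacle I anticipate is the bookkeeping in the second paragraph: one must carefully factor $\sem{e}$ so that the cost dependence is genuinely exposed as a map out of $\mathbb{C}$ into the extensional lift $\lift{2}$, separating the intensional cost input from the extensional behavioral output, and then check that definedness of the profiles is exactly what supplies the two cost arguments $c_x, c_y$ at which the denotational proposition applies. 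Getting the phase bookkeeping right---verifying that $2$ is extensional while $\mathbb{C}$ is intensional so that the noninterference Proposition is applicable to the right composite---is the crux; once the composite is correctly identified, the conclusion is immediate.
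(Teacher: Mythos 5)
Your proposal is correct, and it is essentially the modular packaging of the argument that the paper's introduction advertises (``we can immediately transfer this intrinsic denotational security property to functions of \pcfc{}''), whereas the proof actually given in the appendix inlines the phase argument rather than citing the weak-constancy Proposition. Concretely, the paper's proof uses soundness to write $\sem{e~x} = \eta_{\mathbb{L}}(c, \sem{v})$ and $\sem{e~y} = \eta_{\mathbb{L}}(d, \sem{u})$, observes that the goal $\sem{v} =_2 \sem{u}$ is an equality in the purely extensional type $2$ and so may be proved under the assumption $\P$, and then notes that under $\P$ the defined elements $\sem{x}$ and $\sem{y}$ of $\lift{(\mathbb{C}\times 1)}$ coincide (their values live in the collapsed type $\mathbb{C}\times 1$), whence $\sem{e~x} = \sem{e~y}$ and $\sem{v}=\sem{u}$. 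Your route instead exposes the cost dependence as a single map $c \mapsto \lift{\pi_2}\bigl(\sem{e}(\eta_{\mathbb{L}}(c,\star))\bigr) : \mathbb{C} \to \lift{2}$ and applies the Proposition of \cref{subsec:den-sem}; this is sound because a defined $\sem{x}$ is exactly $\eta_{\mathbb{L}}(c_x,\star)$, so the composite evaluated at $c_x$ is $\lift{\pi_2}(\sem{e~x})$, and definedness of the two inputs to the Proposition is supplied by soundness. What your version buys is reuse of a stated lemma and a cleaner separation of the denotational security property from the operational transfer; what the paper's version buys is that it never has to construct the factoring (the ``bookkeeping'' you correctly flag as the crux) and makes the role of the two phase hypotheses explicit in one place. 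Two minor remarks: soundness (\cref{coro:sound}) already gives $\sem{x}\supp$ from $\kw{profile}(x)\supp$, so full adequacy is not needed for that step; and your final inference from $\sem{v}=\sem{u}$ to $v=u$ needs (as does the paper's, implicitly) that closed values of type $\kw{ans}$ are exactly $\kw{yes}$ and $\kw{no}$, so that $\overline{-}$ is a bijection onto them.
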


\section{An SDT model of the intension-extension phase distinction}\label{sec:model}

To obtain a model for the constructions of the preceding sections, we instantiate the sheaf model of Sterling and Harper~\cite{sterling-harper:2022} at the poset \(\I  =  { \left \{ \kw{ext}  \le  \kw{int} \right \} }\) representing the intension-extension security order. The basic idea is to first develop domain theory \emph{internal} to the presheaf topos $\PSH{\I}$, from which we may obtain an appropriate \emph{internal} domain-theoretic site that embeds into a sheaf topos model of SDT in the sense of \cref{def:axioms}. The reason to consider internal sites is that we may build into the base category the intension-extension phase distinction that is preserved through the embedding. 

In \opcit{} the internal domain theory of $\PSH{\I}$ is developed in terms of constructive dcpos following the work de Jong~\cite{dejong:2023:thesis}. These internal dcpos are similar to ordinary dcpos; for example, one may use them to give the denotational semantics of \textbf{PCF}~\cite{dejong:2023:thesis}. The difference lies in the dominance $\Sigma$ of the category of internal dcpos given by the subobject classifier $\Omega_{\PSH{\I}}$: while a partial element of ordinary domains in $\textbf{Set}$ is either defined or not, a partial element of a domain internal to $\PSH{\I}$ may have the phase proposition $\P$ as its support, where $\P$ is the intermediate proposition in $\Omega_{\PSH{\I}}$. Recalling the interpretation of a map $A \to \Sigma$ as a computational predicate, this also means that predicates can be phase-dependent in the sense of holding only at the extensional phase. 

We recall some definitions from Sterling and Harper~\cite{sterling-harper:2022} (SH22). 

\begin{definition}
  A \emph{Scott-open immersion} of a dcpo is any mono $U \mono A$ arising from a predicate $A \to \Sigma$. 
\end{definition}

\begin{definition}
  In a category, a \emph{sink} on an object $A$ is a set of morphisms into $A$.  
\end{definition}

\begin{definition}
  In a category with pullbacks, a \emph{Cartesian coverage} is an assignment of objects $A$ to set of sinks on $A$ that is stable under pullback. 
\end{definition}

\begin{definition}
  The \emph{finite open cover topology} is generated by the Cartesian coverage assigning to each dcpo $A$ the set of sinks $\{U_i \mono A\}_i$ on $A$ with every $U_i \mono A$ a Scott-open immersion and $\sup_i U_i \cong A$. 
\end{definition}

Our domain-theoretic site is given by an internal category \(\CatIdent{C}\) of small dcpos in $\PSH{\I}$. We embed \(\CatIdent{C}\) into a Grothendieck topos \Sh{\CatIdent{C}}, obtained by localizing $\PSH{\CatIdent{C}}$ at the finite open cover topology. The purpose of this localization is to ensure that the finite joins of the dominance in \(\CatIdent{C}\) are preserved by the embedding into \Sh{\CatIdent{C}}. This property was notably used by \opcit to implement the semantics of termination declassification; here we use the finite join structure of \(\Sigma\) to show that \(2\) is an extensional predomain. The phase distinction in \Sh{\CatIdent{C}} is inherited from the ambient presheaf topos, where it is represented by $\kw{ext}$. 

\begin{theorem}[SH22, Corollary 90]\label{thm:representable-predomains}
  Every representable presheaf is well-complete.  
\end{theorem}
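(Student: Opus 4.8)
The plan is to reduce the orthogonality condition defining completeness to an intrinsic property of the small dcpos comprising the site $\CatIdent{C}$, exploiting the fact that the relevant figure shape is itself assembled from representables. Recall that $\y{C}$ is well-complete precisely when $\lift{\y{C}}$ is orthogonal to $\omega \hookrightarrow \overline{\omega}$. First I would identify this figure shape internally: the final lift coalgebra $\overline{\omega}$ is the small dcpo $\overline{\Nat} = \{0 \le 1 \le \cdots \le \infty\}$, hence is itself representable, $\overline{\omega} \cong \y{\overline{\Nat}}$, while the initial lift algebra $\omega$ arises as the colimit $\operatorname{colim}_n \y{N_n}$ of the finite chains $N_n = \{0 \le \cdots \le n\}$, with $\omega \hookrightarrow \overline{\omega}$ the evident comparison. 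This identification already leans on the localization at the finite open cover topology, whose purpose is to preserve the finite joins of $\Sigma = \Omega_{\PSH{\I}}$ through the embedding into $\Sh{\CatIdent{C}}$.

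Second, I would show that lifting preserves representability on the nose, $\lift{\y{C}} \cong \y{(\lift{C})}$, where $\lift{C}$ is the lift dcpo computed inside $\CatIdent{C}$. Since the support of a partial element is a Scott-open immersion $U \mono C$, this isomorphism is exactly the assertion that the partial-map classifier on representables survives sheafification---again the role of the finite open cover topology. It therefore suffices to prove that every representable $\y{D}$, taking $D = \lift{C}$, is complete, i.e. orthogonal to $\omega \hookrightarrow \overline{\omega}$.

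Third, I would verify this orthogonality by a Yoneda computation of the two sections. Because $\overline{\omega}$ is representable and products of representables are representable, $\y{D}^{\overline{\omega}}(E) \cong \mathrm{Hom}_{\CatIdent{C}}(E \times \overline{\Nat}, D)$, the monotone $\overline{\Nat}$-diagrams in $D$; dually, since products preserve colimits in a topos, $\y{D}^{\omega}(E) \cong \lim_n \mathrm{Hom}_{\CatIdent{C}}(E \times N_n, D)$, the coherent families over the $N_n$, that is the $\omega$-chains in $D$, and the restriction map forgets the value at $\infty$. The required bijection then says that every $\omega$-chain in $D$ extends uniquely to a monotone map on $\overline{\Nat}$, equivalently has a unique supremum: existence is precisely the definition of $D$ being a dcpo (closure under $\omega$-chain suprema), and uniqueness follows from antisymmetry of the order. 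One checks this equivalence is natural in the stage $E$ and respects the phase structure inherited from $\PSH{\I}$.

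The step I expect to be the main obstacle is the second one: confirming that the finite open cover topology is calibrated exactly right so that lifting, hence the partial-map classifier, is preserved on representables, while $\omega$ and $\overline{\omega}$ still receive their expected internal descriptions after sheafification. Because the base is $\PSH{\I}$ rather than $\mathbf{Set}$, the dominance $\Sigma = \Omega_{\PSH{\I}}$ admits partial elements supported on the phase proposition $\P$, so $\lift{C}$ is genuinely richer than the classical lift; the delicate point is that the chosen topology preserves this richer lifting and the finite joins of $\Sigma$ used above, yet remains subcanonical so that representables are sheaves and the Yoneda computations of the third step are valid. Once this calibration is in hand, the remaining manipulations are the routine dcpo-theoretic facts indicated above.
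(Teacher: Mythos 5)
First, a point of comparison: the paper does not prove this theorem at all --- it is imported verbatim from Sterling and Harper (SH22, Corollary~90) and used as a black box, so there is no in-paper argument to measure your proposal against. Judged on its own terms, your outline follows the classical Fiore--Plotkin strategy for sheaf models of SDT (identify $\overline{\omega}$ as representable, present $\omega$ as a colimit of finite chains, and reduce the orthogonality condition to existence and uniqueness of suprema in the site), which is the right family of ideas; but two of your steps are genuinely problematic as written.

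The first is the uniqueness argument. You claim that every $\omega$-chain in $D$ ``extends uniquely to a monotone map on $\overline{\Nat}$'' and that uniqueness ``follows from antisymmetry of the order.'' This is false: a merely monotone extension may send $\infty$ to \emph{any} upper bound of the chain, so monotone extensions are far from unique and antisymmetry is beside the point. What saves the step is that the hom-sets of $\CatIdent{C}$ consist of Scott-continuous maps and that $\infty$ is the directed supremum of $\{0,1,2,\dots\}$ inside $\overline{\Nat}$, so continuity --- jointly in the stage $E$, which needs its own exchange-of-suprema check --- forces the value at $\infty$ to be the supremum of the chain. The second problem is that the two identifications on which everything rests, namely $\omega \cong \operatorname{colim}_n \y{N_n}$ as an object of $\Sh{\CatIdent{C}}$ and $\lift{\y{C}} \cong \y{(\lift{C})}$, are asserted rather than proved, and they carry most of the content of the theorem: the former requires knowing that the initial lift algebra of the sheaf topos is computed as the colimit of the finite chains (which needs $\kw{L}$ to preserve that colimit, and already presupposes the latter isomorphism), while the latter is precisely the statement that the dominance $y(\Omega_{\PSH{\I}})$ classifies the same partial maps after sheafification. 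You correctly flag this calibration as the main obstacle, but note also that your route would establish that every representable is \emph{complete}, strictly stronger than the stated well-completeness; Longley and Simpson introduced well-completeness precisely because the stronger property can fail where the weaker one holds, so the stronger claim should not be expected for free and would need to be checked against the internal dcpos of $\CatIdent{C}$ rather than inferred from the shape of the statement.
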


Thus we obtain a functor $y :  \CatIdent{C} \hookrightarrow \Sh{\CatIdent{C}}$ restricting the Yoneda embedding $\y{\CatIdent{C}} : \CatIdent{C} \hookrightarrow \PSH{\CatIdent{C}}$ onto sheaves. 

\begin{theorem}[SH22, B.1.4.3]
  The representable $\Sigma = y(\Sigma)$ is a dominance in \Sh{\CatIdent{C}}.  
\end{theorem}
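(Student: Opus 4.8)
The plan is to verify the two internal requirements that characterise a dominance for $\Sigma = y(\Sigma)$: that it is a subobject of $\Omega_{\Sh{\CatIdent{C}}}$ containing $\top$, and that the class of monos it classifies is closed under composition (equivalently, that $y(\Sigma)$ is closed under the dependent sum $\phi \mathbin{\angle} f$), pullback-stability being automatic. The whole argument rests on two facts about the embedding $y = \y{\CatIdent{C}}$ corestricted to sheaves. First, it is fully faithful. Second, since finite limits of sheaves are computed as in $\PSH{\CatIdent{C}}$, where the Yoneda embedding preserves all limits, $y$ is left exact; in particular $y(1) \cong 1$ and $y$ carries the pullback squares of $\CatIdent{C}$ to pullback squares in $\Sh{\CatIdent{C}}$. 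Throughout I would use that $\Sigma$ is already a dominance internal to $\CatIdent{C}$, classifying the Scott-open immersions among internal dcpos.

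First I would produce the subobject $y(\Sigma) \hookrightarrow \Omega_{\Sh{\CatIdent{C}}}$ together with the point $\top$. The top element of the dcpo $\Sigma$ is a global point $\top : 1 \to \Sigma$ whose associated mono $1 \hookrightarrow \Sigma$ is a Scott-open immersion, hence classified by $\Sigma$ in $\CatIdent{C}$; applying $y$ and using $y(1) \cong 1$ yields a generic point $1 \to y(\Sigma)$. The canonical map $y(\Sigma) \to \Omega_{\Sh{\CatIdent{C}}}$ interpreting each support value as its truth value (equivalently, the characteristic map of this generic point) exhibits $y(\Sigma)$ as a subobject once one checks it is monic, which reduces to seeing that the three support values $\bot \le \P \le \top$ of $\Sigma$ are sent to distinct propositions $\bot$, $\kw{ext}$, $\top$ of $\Sh{\CatIdent{C}}$ — underwritten by the nontriviality of the phase proposition $\kw{ext}$. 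By construction $\top$ lands in $y(\Sigma)$, so the identities are $\Sigma$-monos, and pullback-stability is immediate, since the defining condition that a characteristic map factors through $y(\Sigma) \hookrightarrow \Omega$ is preserved under precomposition.

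The real content, and the step I expect to be the main obstacle, is closure under composition, i.e. closure of $y(\Sigma)$ under $\phi \mathbin{\angle} f$. Because $\Sigma$ is a dominance in $\CatIdent{C}$, composites of Scott-open immersions are again Scott-open immersions there; the difficulty is to show that $y$ reflects this composition data, namely that the support of a dependent sum and its underlying join are computed the same way in $\Sh{\CatIdent{C}}$ as in $\CatIdent{C}$. Left-exactness alone does not deliver this, and this is precisely where the finite open cover topology enters: it is designed so that the finite joins of the dominance — the covers $\sup_i U_i \cong A$ built from Scott-open immersions — are preserved by the embedding, the same property \opcit{} exploit for termination declassification. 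Assembling the preserved pullback squares with this cover-preservation shows that the composite of two $y(\Sigma)$-monos again has characteristic map factoring through $y(\Sigma)$, completing the dominion conditions. In the write-up it then suffices to marshal these ingredients and to defer the detailed coverage computation to SH22, B.1.4.3.
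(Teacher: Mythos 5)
The paper does not actually prove this statement: it is imported verbatim from Sterling and Harper (SH22, B.1.4.3) as one of several black-box ingredients feeding into the proof of \cref{thm:model}, so there is no in-paper argument to compare yours against. Judged on its own terms, your sketch has the right overall shape --- generic mono, subobject of $\Omega$, closure under dependent sum --- but two steps are genuinely problematic.

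First, the monicity of the classifying map $y(\Sigma) \to \Omega_{\Sh{\CatIdent{C}}}$ does not reduce to the three global points $\bot \le \P \le \top$ being sent to distinct propositions. A map into $\Omega$ is monic precisely when the subobjects it classifies determine it on \emph{all} generalized elements: for every sheaf $X$, two maps $X \to y(\Sigma)$ whose pullbacks of the generic point agree as subobjects of $X$ must be equal. Distinctness of global points is far weaker than this uniqueness-of-classification property, and the latter is exactly the content one must establish (in SH22 it comes from the fact that $\Sigma = \Omega_{\PSH{\I}}$ already classifies its monos uniquely in $\CatIdent{C}$, transported along the fully faithful, pullback-preserving $y$, together with the reduction of arbitrary $X$ to representables via the colimit presentation of sheaves).

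Second, you lean on the finite open cover topology for closure under composition, but that topology governs \emph{joins} of Scott-opens (it is what makes \cref{cor:coproducts-sheaves} and \cref{thm:dominance-finite-join} true, and what the paper uses to show $2$ is replete), not dependent sums. Closure under $\phi \mathbin{\angle} f$ is inherited from $\CatIdent{C}$ by a different mechanism: full faithfulness identifies maps $y(A) \to y(\Sigma)$ with maps $A \to \Sigma$, left exactness identifies their extents with images of Scott-open immersions, and the dominion axiom in $\CatIdent{C}$ does the rest; no coverage computation is involved. Relatedly, your closing move of ``deferring the detailed coverage computation to SH22, B.1.4.3'' is circular, since that is the very statement under proof. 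If you want a self-contained argument you must carry out the uniqueness-of-classification step and the representable-reduction step explicitly; otherwise the honest course --- and the one the paper takes --- is to cite the result outright.
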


\begin{theorem}[SH22, Corollary 79]
  Coproducts in \CatIdent{C} are given by unions of families of Scott-opens.  
\end{theorem}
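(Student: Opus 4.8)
The plan is to exhibit the coproduct of a family $\{A_i\}_i$ in $\CatIdent{C}$ concretely and then recognize its injections as a covering family of Scott-open immersions. First I would compute the coproduct: its underlying order is the disjoint union of the $A_i$ ordered componentwise, with elements of distinct summands incomparable. This is a dcpo because any directed subset is confined to a single summand (two elements of different summands share no upper bound), so its join is computed inside that summand; copairing is then Scott-continuous, since directed suprema are taken summand-wise, and one checks routinely that this dcpo with the evident injections $\iota_i$ satisfies the universal property of the coproduct in $\CatIdent{C}$.

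Next I would show that each injection $\iota_j : A_j \mono \coprod_i A_i$ is a Scott-open immersion, i.e.\ arises from a predicate into $\Sigma$. The candidate characteristic map $\coprod_i A_i \to \Sigma$ is the predicate that holds exactly on the $j$-th summand, with corresponding subobject $U_j$. Two facts must be verified. For Scott-openness of $U_j$: it is upward closed, since if $x$ lies in the $j$-th summand and $x \le y$ then $y$ lies in the same summand by the componentwise order; and it is inaccessible by directed joins, since those joins are computed within a single summand. For classification by $\Sigma$: summand-membership must land in the dominance rather than in the full subobject classifier, after which $\iota_j$ is a Scott-open immersion by definition.

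Finally I would show the family $\{U_i \mono \coprod_i A_i\}_i$ covers, i.e.\ $\sup_i U_i \cong \coprod_i A_i$: every element of the disjoint union lies in exactly one summand, so the join of the $U_i$ in the subobject lattice is the whole coproduct, while pairwise intersections are empty. Hence the injections form a covering sink for the finite open cover topology, and the coproduct is recovered as the union of the pairwise-disjoint Scott-opens $U_i$, which is precisely the asserted description.

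The main obstacle is the interaction with the dominance $\Sigma$ internal to $\PSH{\I}$, whose partial elements may carry the phase $\P$ as support. I must confirm that summand-membership is genuinely $\Sigma$-classified and that the phase does not manufacture ``partial'' elements straddling summands, so that order-disjointness of the summands is respected by $\Sigma$-partiality; equivalently, the real work is checking that the subobject-lattice union of the $U_i$ recovers the \emph{categorical} coproduct and not merely its underlying set. This hinges on the finite-join structure of $\Sigma$ furnished by the finite open cover topology, the same structure that forces $2 \cong 1 + 1$ to be an extensional predomain. By contrast, the order-theoretic verifications in the first and third steps are routine.
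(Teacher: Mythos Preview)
The paper does not prove this statement at all: it is quoted verbatim as ``SH22, Corollary 79'' and deferred entirely to Sterling and Harper~\cite{sterling-harper:2022}, so there is no in-paper argument to compare against. Your sketch is therefore an independent attempt rather than a reconstruction.

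On its merits, your outline is the standard one and is essentially sound. A few remarks. First, your worry about $\Sigma$-classification of summand membership is lighter than you fear: in this setting the dominance of $\CatIdent{C}$ is the full subobject classifier $\Omega_{\PSH{\I}}$ of the base, so \emph{every} subobject in $\PSH{\I}$ is $\Sigma$-classified; the only content is that the characteristic map is a morphism of internal dcpos, which is exactly your upward-closure and directed-join--inaccessibility check. Second, the argument that a directed subset of $\coprod_i A_i$ lies in a single summand must be carried out in the internal logic of $\PSH{\I}$; this goes through because coproducts in a presheaf topos are computed pointwise and summand membership is stagewise decidable, so no ``phase-straddling'' elements arise. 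Third, your final paragraph conflates two distinct uses of finite joins in $\Sigma$: the fact that the $U_i$ cover the coproduct in $\CatIdent{C}$ is elementary and does not use the finite open cover topology --- that topology is only invoked later (in \cref{cor:coproducts-sheaves}) to ensure the embedding into sheaves \emph{preserves} this coproduct.
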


\begin{corollary}\label[cor]{cor:coproducts-sheaves}
  Finite coproducts of \CatIdent{C} are preserved by the embedding into sheaves. 
\end{corollary}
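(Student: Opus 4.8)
The plan is to recognise a finite coproduct decomposition in $\CatIdent{C}$ as a covering family for the finite open cover topology whose members are pairwise disjoint, and then to invoke the disjointness and universality of coproducts that hold in any Grothendieck topos. Since every finite coproduct is built from the empty one by iterated binary coproducts, I would first reduce to the nullary and binary cases.

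For the nullary case I would show $y(\emptyset) \cong 0$. The empty family of Scott-opens of the initial dcpo $\emptyset$ has join the empty open, which is isomorphic to $\emptyset$ itself, so the empty sink lies in the coverage on $\emptyset$. The sheaf condition for this empty cover then forces $F(\emptyset) \cong 1$ for every sheaf $F$; since $y(\emptyset)$ is a sheaf, $\mathrm{Hom}_{\Sh{\CatIdent{C}}}(y(\emptyset), F) \cong F(\emptyset) \cong 1$ for all $F$, which characterises $y(\emptyset)$ as the initial sheaf $0$.

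For the binary case, let $A = A_1 + A_2$ in $\CatIdent{C}$. By [SH22, Corollary 79] this coproduct is realised as a union of Scott-opens, so the inclusions $A_1 \mono A$ and $A_2 \mono A$ are disjoint Scott-open immersions with join $A$; equivalently $\{A_1 \mono A, A_2 \mono A\}$ is a covering family and $A_1 \times_A A_2 \cong \emptyset$. I would then show that the canonical comparison map $c : y(A_1) + y(A_2) \to y(A)$ in $\Sh{\CatIdent{C}}$ is an isomorphism using three observations: (i) because $\{A_i \mono A\}$ covers $A$, the induced family $\{y(A_i) \to y(A)\}$ is epimorphic, so $c$ is an epimorphism; (ii) since $y$ preserves finite limits --- the Yoneda embedding preserves all limits and sheafification is left exact --- each $y(A_i) \to y(A)$ is a monomorphism and $y(A_1) \times_{y(A)} y(A_2) \cong y(A_1 \times_A A_2) = y(\emptyset) \cong 0$ by the nullary case; (iii) coproducts in the Grothendieck topos $\Sh{\CatIdent{C}}$ are universal, so by distributivity the kernel pair of $c$ is $\coprod_{i,j} \bigl(y(A_i) \times_{y(A)} y(A_j)\bigr)$, which by (ii) collapses to the identity pair on $y(A_1) + y(A_2)$. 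Hence $c$ is monic as well, and a map in a topos that is both monic and epic is an isomorphism.

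The step I expect to be the main obstacle is the disjointness identity $y(A_1) \times_{y(A)} y(A_2) \cong 0$: it depends both on the coproduct inclusions in $\CatIdent{C}$ having a genuinely initial pullback --- which is exactly the disjointness supplied by the Scott-open union description of [SH22, Corollary 79], the two opens being disjoint --- and on this initial dcpo being carried to the initial sheaf, which routes back through the nullary case. With these in hand, the joint epimorphism coming from the covering family and the universality of coproducts are formal consequences of Giraud's axioms for $\Sh{\CatIdent{C}}$, and the passage from binary to arbitrary finite coproducts is routine.
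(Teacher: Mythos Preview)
Your proposal is correct. The paper states this corollary without proof, treating it as immediate from [SH22, Corollary~79]: once a finite coproduct in $\CatIdent{C}$ is recognised as a finite disjoint union of Scott-opens, it is by definition a covering family for the finite open cover topology with pairwise initial intersections, and your argument via joint epimorphicity of covers, left-exactness of $y$, and Giraud's axioms is exactly the standard way to cash this out as a coproduct in $\Sh{\CatIdent{C}}$. A marginally more direct variant avoids the mono--epi factorisation by computing matching families: for any sheaf $F$, the sheaf condition on the cover $\{A_i \mono A\}$ gives $F(A) \cong \prod_i F(A_i)$ because the compatibility conditions over $A_i \times_A A_j \cong \emptyset$ are vacuous by your nullary case, and then $\mathrm{Hom}(y(A),F) \cong \prod_i F(A_i) \cong \mathrm{Hom}(\coprod_i y(A_i),F)$ naturally in $F$; but this is essentially your argument rearranged.
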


\begin{theorem}[SH22, Axiom-SDT-1]\label{thm:dominance-finite-join}
  The dominance $\Sigma$ has finite joins that are preserved by the inclusion $\Sigma \hookrightarrow \Omega$.
\end{theorem}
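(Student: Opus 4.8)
The plan is to reduce the statement to the two generating cases of finite joins --- the nullary join (the least element $\bot$) and binary joins $\phi \vee \psi$ --- and to verify each directly against the generating covers of the finite open cover topology. Recall that a $\Sigma$-proposition on a dcpo $D$ is exactly a Scott-open immersion $U \mono D$, and that finite joins of such propositions are computed in $\CatIdent{C}$ as finite unions of Scott-opens: the empty union is $\emptyset \mono D$ and the binary union is $U \cup V \mono D$. As these unions are again Scott-open, $\Sigma$ already carries finite joins internally, so the entire content of the statement is that these unions agree with the joins computed in the subobject classifier $\Omega$ of $\Sh{\CatIdent{C}}$ after transport along $y$.

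For the nullary case, I would note that the empty family of Scott-opens has supremum $\emptyset$, so the empty sink is a cover of the initial dcpo; equivalently, $\emptyset$ is the empty coproduct, which by \cref{cor:coproducts-sheaves} is preserved by the embedding into sheaves. Hence $y(\emptyset)$ is the initial sheaf, and the $\Sigma$-proposition $\bot$ classifying $\emptyset \mono 1$ is carried to the false proposition of $\Omega$. For the binary case, given Scott-opens $U, V \mono D$ with union $U \cup V$, the two-element sink $\{U \mono U \cup V,\ V \mono U \cup V\}$ is a generating cover, since its supremum is $U \cup V$. Because representables are sheaves by \cref{thm:representable-predomains}, the sieve this family generates is covering, and a covering family of subobjects of a representable is precisely one whose join in $\Omega$ exhausts the ambient object. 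Thus $y(U) \vee y(V) = y(U \cup V)$ as subobjects of $y(D)$, which is exactly the claim that the disjunction of $\Sigma$-propositions agrees with the join taken in $\Omega$.

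Combining the two cases shows that $\Sigma \hookrightarrow \Omega$ sends the internally computed finite joins of $\Sigma$ to the corresponding joins of $\Omega$, as required. The hard part will be the identification invoked in the binary case: that a covering family of subobjects of a representable is the same thing as a family whose join in $\Omega$ is the whole object. This rests on the description of joins in the subobject classifier of a sheaf topos as the sheafification of the presheaf-level union of sieves, together with subcanonicity of the finite open cover topology to ensure that $y(U \cup V)$ is itself a sheaf onto which this join lands. Once this correspondence is secured, preservation of finite joins follows immediately, because the defining covers of the topology are precisely the finite suprema of Scott-opens.
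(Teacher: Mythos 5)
This theorem carries no proof in the paper at all: it is imported verbatim from Sterling--Harper (cited as ``SH22, Axiom-SDT-1''), so there is nothing to compare your argument against line by line. That said, your reconstruction is sound and matches the paper's own stated rationale for the finite open cover topology (``the purpose of this localization is to ensure that the finite joins of the dominance in $\CatIdent{C}$ are preserved by the embedding into $\Sh{\CatIdent{C}}$''). The decomposition into the nullary and binary cases is the right one, and the key mechanism --- that a generating cover $\{U \mono U \cup V,\ V \mono U \cup V\}$ becomes a jointly epimorphic family after sheafification, so that $y(U) \vee y(V) = y(U \cup V)$ in $\mathrm{Sub}(y(D))$ --- is exactly how such preservation results are proved. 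Two small corrections. First, you cite \cref{thm:representable-predomains} for subcanonicity, but that theorem asserts well-completeness of representables, not that they are sheaves; subcanonicity is a separate fact, implicit in the paper's construction of the restricted embedding $y : \CatIdent{C} \to \Sh{\CatIdent{C}}$, and you do genuinely need it (both to have $y$ land in sheaves and for your identification of covering families of subobjects with families whose join in $\Omega$ is the whole object). Second, your argument is phrased for Scott-opens of a fixed dcpo $D$, whereas the internal statement quantifies over all generalized elements $\phi, \psi : X \to \Sigma$ for arbitrary sheaves $X$; to close this gap you should run the argument at the generic case $D = \Sigma \times \Sigma$ with its two universal Scott-opens and appeal to Yoneda (using that $y$ preserves the product $\Sigma \times \Sigma$), which is implicit but worth stating. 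With those repairs the sketch is a correct proof of the imported result.
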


\begin{restatable}{theorem}{ThmModel}\label[thm]{thm:model}
  Setting $\Sigma = y(\Omega_{\PSH{\I}})$ and $\P = y(\y{\I}(\kw{ext}))$, we have that $(\Sh{\CatIdent{C}}, \Sigma, \P)$ is an SDT model of the intension-extension phase distinction in the sense of \cref{def:axioms}. 
\end{restatable}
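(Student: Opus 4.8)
The plan is to unfold \cref{def:axioms} into its separate requirements and discharge each, reserving the bulk of the effort for the last clause. First I would observe that $\Sh{\CatIdent{C}}$, being the category of sheaves for the finite open cover topology on $\CatIdent{C}$, is a Grothendieck topos and hence a fortiori an elementary topos. That $\Sigma = y(\Omega_{\PSH{\I}})$ is a dominance is exactly (SH22, B.1.4.3), recalling that the dominance of $\CatIdent{C}$ is the internal subobject classifier $\Omega_{\PSH{\I}}$, so $y(\Omega_{\PSH{\I}})$ is the representable dominance there. Completeness of $\Sigma$ I would derive from \cref{thm:representable-predomains}: since $\Sigma \cong \lift{1}$ and the terminal dcpo $1$ is representable, it is well-complete, which is precisely the statement that $\lift{1} \cong \Sigma$ is complete. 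The distinguished proposition $\P = y(\y{\I}(\kw{ext}))$ is a $\Sigma$-proposition essentially by construction: $\y{\I}(\kw{ext})$ is the intermediate proposition, a point of $\Omega_{\PSH{\I}}$, so its image under $y$ is a point of $\Sigma = y(\Omega_{\PSH{\I}})$, i.e. a proposition factoring through $\Sigma \hookrightarrow \Omega$. Phoa's principle for $\Sigma$ I would obtain from the explicit order structure of $\Omega_{\PSH{\I}}$ recorded in the SDT axioms verified by Sterling and Harper for this site, computing the path space $\Sigma^\Sigma$ and checking that it is the poset of implications $\{(\phi,\psi) \mid \phi \to \psi\}$.

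The substantive remaining obligation is that the booleans $2$ form an extensional predomain, and this is where I expect the real work to lie. For \emph{repleteness} of $2$, I would exploit the two preservation results for this model. By \cref{cor:coproducts-sheaves} the coproduct $2 = 1 + 1$ computed in the ambient topos agrees with the image of the corresponding coproduct in $\CatIdent{C}$, which by (SH22, Corollary 79) is a union of two complementary Scott-open immersions $1 \hookrightarrow 2$. Writing $\chi_0, \chi_1 : 2 \to \Sigma$ for their classifying $\Sigma$-predicates, the cover condition and disjointness become $\chi_0 \vee \chi_1 = \top$ and $\chi_0 \wedge \chi_1 = \bot$, where $\vee$ is the internal finite-join operation on $\Sigma$ furnished (and preserved into $\Omega$) by \cref{thm:dominance-finite-join} and $\wedge$ the meet coming from the dominance. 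This presents $2$ as the finite limit $\{(\phi,\psi) : \Sigma \times \Sigma \mid \phi \vee \psi = \top \wedge \phi \wedge \psi = \bot\}$ of copies of $\Sigma$. Since $\Sigma$ is replete — it is orthogonal to every $\Sigma$-iso by the very definition of a $\Sigma$-iso (\cref{def:replete}) — and predomains are closed under limits, $2$ is replete; the identification of this limit with $2$ is exactly the statement that $1+1$ is the disjoint union of the two complementary Scott-opens, the content of \cref{cor:coproducts-sheaves}.

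For \emph{extensionality} — the condition that the open-modal unit $2 \to (\P \to 2)$ is an isomorphism — I would argue that the finite-join presentation above is stable under the open modality $\P \to (-)$. Because this modality is left exact it preserves the finite-limit presentation of $2$, and because finite joins and finite coproducts are preserved (\cref{thm:dominance-finite-join,cor:coproducts-sheaves}) it sends the cover $\chi_0 \vee \chi_1 = \top$ to a cover, so that $(\P \to 2) \cong (\P \to 1) + (\P \to 1) \cong 1 + 1 = 2$ with the unit realizing the isomorphism; concretely this amounts to the fact that $2$, being the constant/discrete object $1+1$, carries no intensional structure that restriction along $\P$ could redact. The main obstacle is precisely this last step: reconciling the left-exact open modality with the coproduct $1+1$, which would fail for a generic finite coproduct and succeeds here only because the complementary Scott-open cover is preserved through the embedding into $\Sh{\CatIdent{C}}$. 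Once extensionality and repleteness are in hand, all clauses of \cref{def:axioms} are verified and the theorem follows.
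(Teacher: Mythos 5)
Most of your outline tracks the paper's proof: the dominance and its completeness via well-completeness of the representable $1$, the classification of $\P$ by $\Sigma$, and the repleteness of $2$ all go through. Your repleteness argument is in fact a pleasant variant of the paper's: where the paper presents $2$ as its type of singletons $\{\phi : \Sigma^2 \mid \dots\}$, you present it as the object of complementary pairs $\{(\phi,\psi) : \Sigma \times \Sigma \mid \phi \vee \psi = \top \wedge \phi \wedge \psi = \bot\}$; both presentations exhibit $2$ as a finite limit of replete types and both hinge on \cref{thm:dominance-finite-join} and \cref{cor:coproducts-sheaves}, so this is a legitimate alternative. (Your treatment of Phoa's principle is thinner than the paper's, which transfers it from $\CatIdent{C}$ by presenting $\{(\phi,\psi) \mid \phi \to \psi\}$ as an equalizer built from the cartesian closed structure preserved by $y$; but the idea is the same and the gap is minor.)

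The genuine problem is the extensionality of $2$. Your key step is $(\P \to 2) \cong (\P \to 1) + (\P \to 1)$, which asserts that the open modality $\P \to (-)$ (that is, $j_* j^*$ for the open subtopos at $\P$) preserves the coproduct $1 + 1$. Nothing you cite supports this: \cref{cor:coproducts-sheaves} and \cref{thm:dominance-finite-join} say that the \emph{embedding} $y : \CatIdent{C} \to \Sh{\CatIdent{C}}$ preserves coproducts and joins, which is a statement about a different functor and has no bearing on how $\P \to (-)$ interacts with colimits; $j_*$ is a right adjoint and does not preserve coproducts in general. Nor does left exactness rescue the argument via your limit presentation: applying $\P \to (-)$ to $\{(\phi,\psi) : \Sigma^2 \mid \dots\}$ yields complementary pairs valued in $\P \to \Sigma$, and $\Sigma$ is emphatically \emph{not} extensional (indeed $\P$ itself is a non-decidable $\Sigma$-proposition), so showing that such a pair descends to a point of $2$ is exactly the orthogonality you are trying to prove --- the argument is circular. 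The paper closes this gap differently: it verifies that $2_{\CatIdent{C}} = 1 + 1$ is internally orthogonal to $\P$ already in $\CatIdent{C}$ (a direct computation with the constant presheaf $1+1$ over $\I$, where $(1+1)^{\y{\I}(\kw{ext})} \cong 1+1$), and then transfers this orthogonality along $y$ using that $y$ is full, faithful, and preserves products --- so that homs out of representables into $2^{\P}$ and into $2$ agree, and representables generate $\Sh{\CatIdent{C}}$. Some such reduction to the site, rather than an abstract preservation property of the open modality, is what your proof is missing.
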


\begin{proof}
  By \cref{thm:representable-predomains} we know that $\Sigma$ is a well-complete dominance in $\Sh{\CatIdent{C}}$. To show that \Sh{\CatIdent{C}} also models the intension-extension phase distinction, we observe that the presheaf model of the intension-extension phase distinction of Niu~\etal~\cite{niu-sterling-grodin-harper:2022} restricts to a smaller model in \(\CatIdent{C}\): every subterminal object in \PSH{\I} is an internal dcpo, and so we may take the subterminal \(\P  =  \y{\I}(\kw{ext})\) to be the phase proposition in \(\CatIdent{C}\). The phase proposition \(\P =  y(\P) \) in \(\Sh{\CatIdent{C}}\) is classified by $\Sigma$: since $y$ is fully faithful, every $\phi : \Sigma$ arise from a unique map $1_{\PSH{\I}} \to \Omega_{\PSH{\I}}$. 
  
  Moreover, we can directly verify that \(2_{\CatIdent{C}} = 2_{\PSH{\I}} = 1_{\PSH{\I}} + 1_{\PSH{1}}\) is internally orthogonal to \(\P\) in $\CatIdent{C}$. By \cref{cor:coproducts-sheaves}, $y$ preserves finite coproducts, so we have that \(2_{ \Sh{\CatIdent{C}} } =  y(2_\CatIdent{C})\). We observe that 2 is extensional because the restricted embedding \(y :  \CatIdent{C}   \to   \Sh{\CatIdent{C}}\) is both full and faithful and preserves products.
  To see that \(2\) is replete, we observe that 2 is isomorphic to its type of singletons:
  \[\begin{aligned}     
    2  \cong   { \left \{ \phi  :  \Sigma ^2  \mid   { \left ( \all{a, b : 2}   \phi ~a  \land   \phi ~b  \to  a = b \right )}   \land   { \left ( \phi ( \kw{inl} \cdot   \star )  \lor   \phi ( \kw{inr}   \cdot   \star ) \right )} \right \} }    
  \end{aligned}\]
  Because \(\Sigma\) is closed under finite joins (by \cref{thm:dominance-finite-join}), the type of singletons of 2 can be defined as the limit of a diagram of replete types, and so it is replete as well. 
  
  Lastly, to see that Phoa's principle is satisfied, we note that it holds in \(\CatIdent{C}\). Since the subobject \({ \left \{ ( \phi ,  \psi ) :  \Sigma   \times   \Sigma   \mid   \phi   \to   \psi \right \} }\) can be defined as the equalizer of \(\pi _1 :  \Sigma   \times   \Sigma   \to   \Sigma\) and \(\land  :  \Sigma   \times   \Sigma   \to   \Sigma\), every object in the diagram is defined using the cartesian closed structure of \(\CatIdent{C}\), it is preserved by any cartesian closed embedding, and so Phoa's principle also holds in \Sh{\CatIdent{C}}.
\end{proof}

\section{Discussion of related work}\label{sec:discussion}

\emph{Cost analysis in type theory.}
The original motivation for proving internal, cost-sensitive computational adequacy results grew out of the work of Niu \etal{}~\cite{niu-sterling-grodin-harper:2022} on formalizing cost analysis of functional programs in dependent type theory. Niu and Harper~\cite{niu-harper:2023} prove such an adequacy theorem for a variant of the Algol language featuring a notion of first-order recursion in the form of while loops. The purpose of the present paper is to generalize that result to account for higher-order recursion. In contrast to both prior works, we have de-emphasized the role of the call-by-push-value language and instead work directly with the internal language of the SDT topos. Because the model we construct in \cref{sec:model} can also be seen as a model for the type theories in both prior works, we expect that it would be routine to formalize our results in a call-by-push-value version of type theory as well. 

\emph{Cost-sensitive computational adequacy.}
The kind of cost-sensitive adequacy theorem we prove in this paper has been proved in a classic domain-theoretic setting by Kavvos \etal{}~\cite{kavvos-morehouse-licata-danner:2019}, and the general theory of computational adequacy for languages with algebraic effects has been developed by Plotkin and Power~\cite{plotkin-power:2002}. The main difference between our work and those mentioned is that we aim to prove adequacy results internally to a type theory equipped with a phase distinction (such as the one in Niu \etal{}~\cite{niu-sterling-grodin-harper:2022}). As argued in Niu and Harper~\cite{niu-harper:2023}, such internal adequacy theorems can be used as a basis for the validity of axiomatic cost analysis in these type theories. 

\emph{Relative sheaf models of SDT.}
As we have explained in \cref{subsubsec:SDT}, the main sheaf models of synthetic domain theory take the form of Grothendieck topoi over the category of sets. Meanwhile, the logic of phase distinctions finds its home in presheaf topoi in which one finds many distinct subterminal objects that are neither globally true nor globally false; because the category of sets is boolean and two-valued, it can have no non-trivial phase distinctions.
For this reason, Sterling and Harper~\cite{sterling-harper:2022} have proposed to combine synthetic domain theory with phase distinctions by developing models in \emph{relative} Grothendieck topoi~\cite{johnstone:2002} over a presheaf topos that exhibits a phase distinction. In other words, rather than building a site out of predomains in the category of sets, \opcit built an \emph{internal} site based on \emph{internal} predomains in a category of presheaves.
Our model of cost-sensitive synthetic domain theory is similar to that of Sterling and Harper~\cite{sterling-harper:2022}. On the other hand, our proof of computational adequacy is different from that of \opcit, as the latter contains a subtle error~\cite{sterling:2023:noninterference-erratum} involving a mismatch between the existential quantifier and the join of a family of $\Sigma$-propositions in the lifting of free algebras to formal approximation relations.

\emph{Computational adequacy in SDT.} 
Our approach to internal denotational semantics and computational adequacy of \pcfc{} builds on the pioneering work of Simpson~\cite{simpson:1999} on proving the computational adequacy property of \pcf{} in elementary topoi models of SDT. One difference between our work and that of \opcit{} is the addition of the phase distinction, which we use to give an intrinsic denotational account of the interaction between cost and behavior in \pcfc{}. Another difference is in the SDT axioms used and the ensuing definition of the dynamic semantics of the object programming language. Simpson~\cite{simpson:1999} assumes a property called \emph{Axiom N} that closes the dominance $\Sigma$ under countable joins of decidable families in the ambient logic, a fact that we do not rely on in our constructions. The benefit of this axiom is that it enables \opcit{} to give an internal definition of \pcf{} whose dynamic semantics can be characterized by means of existentially quantified statements of the form $\some{n : \Nat} \phi(n)$ where $\phi$ is a primitive recursive predicate. This is used to show that a general property of the internal logic of topoi called \emph{1-consistency}\footnote{A topos $\mathcal{E}$ is \emph{1-consistent} when a closed formula $\some{n : \Nat} \phi(n)$ of the form described above holding in the internal logic of $\CatIdent{E}$ implies that it holds externally.} is both necessary and sufficient to \emph{externalize} the internal adequacy proof into a corresponding proof in ordinary mathematics. In a follow-up paper, Simpson~\cite{simpson:2004} gave a different logical criterion for the equivalence of internal and external adequacy called \emph{computational 1-consistency} that does not rely on Axiom N. Roughly the idea is to define the programming language and its operational semantics in terms of the \emph{computational natural numbers} (analogous to the predomain $\NatP$ in this paper); computational 1-consistency is just the property needed to ensure that internal computational observations hold externally as well.

By contrast, the dynamic semantics of \pcfc{} in this paper is defined \emph{computationally} and is not known to be equivalent to the operational semantics of Simpson~\cite{simpson:1999} in the absence of Axiom N. However, we expect that a version of our computational semantics using the computational natural numbers will be equivalent to the semantics given in Simpson~\cite{simpson:2004}. On the other hand, we find the computational semantics developed in \cref{sec:comp-sem} both philosophically and mathematically compelling and deserving of further investigation in its own right. Moreover, although the computational semantics of \pcfc{} does not appear to be definable in terms of countable joins, it can be defined using \emph{synthetic} $\omega$-joins of decidable families. Therefore, we conjecture that one may externalize the internal adequacy proof of \pcfc{} in the manner of Simpson~\cite{simpson:1999} by developing a Kripke-Joyal semantics for the sheaf model defined in \cref{sec:model} that unfolds an internal statement involving synthetic $\omega$-joins to an external statement in the metatheory. 

\section{Conclusion \& future work}\label{sec:conclusion}

In this paper we study a language for higher-order recursion \pcfc{} in the setting of synthetic domain theory. Our main contribution is an internal, cost-sensitive version of Plotkin's computational adequacy theorem for \pcfc{}. In particular, we define and relate a denotational model of \pcfc{} to a new dynamic semantics for \pcfc{} defined directly in terms of computation that is both natural and mathematically appealing. Here we suggest some ideas for future investigations. 

\emph{Internal \vs{} external adequacy.} 
In the same vein as the work of Simpson~\cite{simpson:1999,simpson:2004}, we are also interested in giving a logical characterization of when internal computational adequacy (with respect to the computational semantics of \pcfc{}) implies external adequacy. However it is not clear to us what would be an analogous condition to 1-consistency: internal notions such as the initial lift algebra $\omega$ and synthetic $\omega$-chains do not have natural external counterparts. As mentioned in \cref{sec:discussion}, a first step would be to develop a systematic understanding of the logical aspects of the initial lift algebra from an external point of view.

As mentioned in \cref{sec:discussion}, one way to obtain external adequacy would be to follow the approach of~\cite{simpson:2004} and define \pcfc{} and its computational semantics purely in terms of computational natural numbers. Alternatively, we may decide to assume Axiom N (see \cref{sec:discussion}), which would imply that the computational semantics coincides with ordinary operational semantics in the internal logic of the SDT topos. We do not expect Axiom N to hold in the model we construct in this paper (\cf{} van Oosten and Simpson~\cite{van-oosten-simpson:2000}), but it does not appear to be a limitation of the general approach to the model construction; indeed we believe it should be possible to start with a different domain-theoretic site such that the embedding into the resulting sheaf topos preserves countable coproducts, which would be enough to validate Axiom N. 

\emph{Cost and information order.}
As discussed in \cref{subsubsec:intrinsic-preorder}, we would like to combine and develop a practical theory for the interaction of the domain-theoretic information order with a cost preorder in the sense of~Grodin \etal{}, who developed a ``preorder'' version of SDT in which ``predomains'' are types equipped with a preorder. Following the approach of relative sheaf models of SDT, we conjecture that one may build a model of SDT that further incorporates an intrinsic preorder structure by starting with a domain-theoretic site internal to the category of \emph{simplicial sets}.%

\emph{Recursive types.} 
We have emphasized recursion at the term level, but synthetic domain theory is also compatible with having recursive types. Simpson~\cite{simpson:2004} has developed in a very general setting the theory and existence of algebraically compact categories of predomains in SDT, and we hope to instantiate the ideas of \opcit{} at a \emph {relative} sheaf model of SDT similar to the one presented in this paper.

\section*{Acknowledgement}
This work was funded by the United States Air Force Office of Scientific Research under grants FA9550-23-1-0728\footnote{\label{PM}Dr.\ Tristan Nguyen, Program Manager}, MURI FA9550-21-0009\cref{PM}, FA9550-21-1-0385\cref{PM} and the National Science Foundation under grant CCF-1901381. We thank Tristan Nguyen at AFOSR for support. Yue Niu was supported by the Air Force Research Laboratory through the NDSEG fellowship. Views and opinions expressed are however those of the authors only and do not necessarily reflect those of AFOSR, AFRL, or NSF.

\bibliographystyle{entics}
\bibliography{bibtex-references/refs-bibtex,bib}

\begin{thebibliography}{10}
\providecommand{\url}[1]{\texttt{#1}}
\providecommand{\urlprefix}{ }
\providecommand{\eprint}[2][]{\url{#2}}

\bibitem{beck_distributive_1969}
Beck, J., \emph{Distributive laws}, in: B.~Eckmann, editor, \emph{Seminar on
  {Triples} and {Categorical} {Homology} {Theory}}, pages 119--140, Springer
  Berlin Heidelberg, Berlin, Heidelberg (1969), ISBN 978-3-540-36091-9.\newline
  \url{https://doi.org/10.1007/BFb0083084}

\bibitem{dejong:2023:thesis}
de~Jong, T., \emph{Domain theory in constructive and predicative univalent
  foundations} (2023).
\newline\urlprefix\url{https://doi.org/10.48550/ARXIV.2301.12405}

\bibitem{fiore:1995}
Fiore, M.~P., \emph{Lifting as a {KZ}-doctrine}, in: D.~Pitt, D.~E. Rydeheard
  and P.~Johnstone, editors, \emph{Category Theory and Computer Science}, pages
  146--158, Springer Berlin Heidelberg, Berlin, Heidelberg (1995), ISBN
  978-3-540-44661-3.\newline\url{https://doi.org/10.1007/3-540-60164-3_24}

\bibitem{fiore-pitts-steenkamp:2021}
Fiore, M.~P., A.~M. Pitts and S.~C. Steenkamp, \emph{{Quotients, inductive
  types, and quotient inductive types}}, Logical Methods in Computer Science
  \textbf{{Volume 18, Issue 2}} (2022).
\newline\urlprefix\url{https://doi.org/10.46298/lmcs-18(2:15)2022}

\bibitem{fiore-plotkin:1996}
Fiore, M.~P. and G.~D. Plotkin, \emph{An extension of models of axiomatic
  domain theory to models of synthetic domain theory}, in: D.~van Dalen and
  M.~Bezem, editors, \emph{Computer Science Logic, 10th International Workshop,
  {CSL} '96, Annual Conference of the EACSL, Utrecht, The Netherlands,
  September 21-27, 1996, Selected Papers}, volume 1258 of \emph{Lecture Notes
  in Computer Science}, pages 129--149, Springer (1996).
\newline\urlprefix\url{https://doi.org/10.1007/3-540-63172-0\_36}

\bibitem{fiore-rosolini:1997:cpos}
Fiore, M.~P. and G.~Rosolini, \emph{The category of cpos from a synthetic
  viewpoint}, in: S.~D. Brookes and M.~W. Mislove, editors, \emph{Thirteenth
  Annual Conference on Mathematical Foundations of Progamming Semantics, {MFPS}
  1997, Carnegie Mellon University, Pittsburgh, PA, USA, March 23-26, 1997},
  volume~6 of \emph{Electronic Notes in Theoretical Computer Science}, pages
  133--150, Elsevier (1997).
\newline\urlprefix\url{https://doi.org/10.1016/S1571-0661(05)80165-3}

\bibitem{fiore-rosolini:1997}
Fiore, M.~P. and G.~Rosolini, \emph{Two models of synthetic domain theory},
  Journal of Pure and Applied Algebra \textbf{116}, pages 151--162 (1997), ISSN
  0022-4049.
\newline\urlprefix\url{https://doi.org/10.1016/S0022-4049(96)00164-8}

\bibitem{grodin-niu-sterling-harper:2024}
Grodin, H., Y.~Niu, J.~Sterling and R.~Harper, \emph{Decalf: A directed,
  effectful cost-aware logical framework}, Proceedings of the ACM on
  Programming Languages \textbf{8} (2024).
\newline\urlprefix\url{https://doi.org/10.1145/3632852}

\bibitem{harper-mitchell-moggi:1990}
Harper, R., J.~C. Mitchell and E.~Moggi, \emph{Higher-order modules and the
  phase distinction}, in: \emph{Proceedings of the 17th ACM SIGPLAN-SIGACT
  Symposium on Principles of Programming Languages}, pages 341--354,
  Association for Computing Machinery, San Francisco, California, USA (1990),
  ISBN 0-89791-343-4.
\newline\urlprefix\url{https://doi.org/10.1145/96709.96744}

\bibitem{hoffmann:2011:thesis}
Hoffmann, J., \emph{Types with Potential: Polynomial Resource Bounds via
  Automatic Amortized Analysis}, Ph.D. thesis,
  Ludwig-Maximilians-Universit\"{a}t M\"{u}nchen (2011).
\newline\urlprefix\url{https://www.cs.cmu.edu/~janh/assets/pdf/Hoffmann11.pdf}

\bibitem{hyland:1991}
Hyland, J. M.~E., \emph{First steps in synthetic domain theory}, in:
  A.~Carboni, M.~C. Pedicchio and G.~Rosolini, editors, \emph{Category Theory},
  pages 131--156, Springer Berlin Heidelberg, Berlin, Heidelberg (1991), ISBN
  978-3-540-46435-8.\newline\url{https://doi.org/10.1007/BFb0084217}

\bibitem{johnstone:2002}
Johnstone, P.~T., \emph{Sketches of an Elephant: A Topos Theory Compendium:
  Volumes 1 and 2}, number~43 in Oxford Logical Guides, Oxford Science
  Publications (2002). ISBN: 9780198534259 and ISBN: 9780198515982

\bibitem{kavvos-morehouse-licata-danner:2019}
Kavvos, G.~A., E.~Morehouse, D.~R. Licata and N.~Danner, \emph{Recurrence
  extraction for functional programs through call-by-push-value}, Proceedings
  of the ACM on Programming Languages \textbf{4} (2019).
\newline\urlprefix\url{https://doi.org/10.1145/3371083}

\bibitem{levy:2003:book}
Levy, P.~B., \emph{Call-by-Push-Value: A Functional/Imperative Synthesis},
  Kluwer, Semantic Structures in Computation, 2 (2003), ISBN 1-4020-1730-8.

\bibitem{longley-simpson:1997}
Longley, J.~R. and A.~K. Simpson, \emph{A uniform approach to domain theory in
  realizability models}, Mathematical Structures in Computer Science
  \textbf{7}, page 469–505 (1997).
\newline\urlprefix\url{https://doi.org/10.1017/S0960129597002387}

\bibitem{matache-moss-staton:2021}
Matache, C., S.~Moss and S.~Staton, \emph{{Recursion and Sequentiality in
  Categories of Sheaves}}, in: N.~Kobayashi, editor, \emph{6th International
  Conference on Formal Structures for Computation and Deduction (FSCD 2021)},
  volume 195 of \emph{Leibniz International Proceedings in Informatics
  (LIPIcs)}, pages 25:1--25:22, Schloss Dagstuhl -- Leibniz-Zentrum f\"{u}r
  Informatik, Dagstuhl, Germany (2021), ISBN 978-3-95977-191-7, ISSN 1868-8969.
\newline\urlprefix\url{https://doi.org/10.4230/LIPIcs.FSCD.2021.25}

\bibitem{niu-harper:2023}
Niu, Y. and R.~Harper, \emph{A metalanguage for cost-aware denotational
  semantics}, in: \emph{2023 38th Annual ACM/IEEE Symposium on Logic in
  Computer Science (LICS)}, pages 1--14 (2023).
\newline\urlprefix\url{https://doi.org/10.1109/LICS56636.2023.10175777}

\bibitem{niu-sterling-grodin-harper:2022}
Niu, Y., J.~Sterling, H.~Grodin and R.~Harper, \emph{A cost-aware logical
  framework}, Proceedings of the ACM on Programming Languages \textbf{6}
  (2022). \eprint{2107.04663}.
\newline\urlprefix\url{https://doi.org/10.1145/3498670}

\bibitem{paviotti-mogelberg-birkedal:2015}
Paviotti, M., R.~E. M\o{}gelberg and L.~Birkedal, \emph{A model of {PCF} in
  {Guarded Type Theory}}, Electronic Notes in Theoretical Computer Science
  \textbf{319}, pages 333--349 (2015), ISSN 1571-0661. The 31st Conference on
  the Mathematical Foundations of Programming Semantics (MFPS XXXI).
\newline\urlprefix\url{https://doi.org/10.1016/j.entcs.2015.12.020}

\bibitem{phoa:1991}
Phoa, W., \emph{Domain Theory in Realizability Toposes}, Ph.D. thesis,
  University of Edinburgh (1991).\newline\url{https://www.lfcs.inf.ed.ac.uk/reports/91/ECS-LFCS-91-171/}

\bibitem{plotkin:2004}
Plotkin, G., \emph{A structural approach to operational semantics}, J. Log.
  Algebr. Program. \textbf{60-61}, pages 17--139 (2004).
\newline\urlprefix\url{https://doi.org/10.1016/j.jlap.2004.05.001}

\bibitem{plotkin:1977}
Plotkin, G.~D., \emph{{LCF} considered as a programming language}, Theoretical
  Computer Science \textbf{5}, pages 223--255 (1977), ISSN 0304-3975.
\newline\urlprefix\url{https://doi.org/10.1016/0304-3975(77)90044-5}

\bibitem{plotkin-power:2002}
Plotkin, G.~D. and J.~Power, \emph{Notions of computation determine monads},
  in: \emph{Proceedings of the 5th International Conference on Foundations of
  Software Science and Computation Structures}, pages 342--356,
  Springer-Verlag, Berlin, Heidelberg (2002), ISBN 3-540-43366-X.\newline
  Available online at \url{https://homepages.inf.ed.ac.uk/gdp/publications/Comp_Eff_Monads.pdf}

\bibitem{reus:1995}
Reus, B., \emph{Program Verification in Synthetic Domain Theory}, Ph.D. thesis,
  Ludwig-Maximilians-Universit\"{a}t M\"{u}nchen, M\"{u}nchen (1995). Available online at \url{https://www2.mathematik.tu-darmstadt.de/~streicher/THESES/reus.pdf}

\bibitem{reus-streicher:1999}
Reus, B. and T.~Streicher, \emph{General synthetic domain theory --- a logical
  approach}, Mathematical Structures in Computer Science \textbf{9}, pages
  177--223 (1999).
\newline\urlprefix\url{https://doi.org/10.1017/S096012959900273X}

\bibitem{rijke-shulman-spitters:2020}
Rijke, E., M.~Shulman and B.~Spitters, \emph{Modalities in homotopy type
  theory}, Logical Methods in Computer Science \textbf{16} (2020).
\newline\urlprefix\url{https://doi.org/10.23638/LMCS-16(1:2)2020}

\bibitem{rosolini:1986}
Rosolini, G., \emph{Continuity and effectiveness in topoi}, Ph.D. thesis,
  University of Oxford (1986). Available online at \url{https://www.researchgate.net/profile/Giuseppe-Rosolini/publication/35103849_Continuity_and_effectiveness_in_topoi/links/57ef75b908ae886b89741067/Continuity-and-effectiveness-in-topoi.pdf}

\bibitem{simpson:2004}
Simpson, A., \emph{Computational adequacy for recursive types in models of
  intuitionistic set theory}, Annals of Pure and Applied Logic \textbf{130},
  pages 207--275 (2004), ISSN 0168-0072. Papers presented at the 2002 IEEE
  Symposium on Logic in Computer Science (LICS).
\newline\urlprefix\url{https://doi.org/10.1016/j.apal.2003.12.005}

\bibitem{simpson:1999}
Simpson, A.~K., \emph{Computational {{Adequacy}} in an {{Elementary Topos}}},
  in: G.~Gottlob, E.~Grandjean and K.~Seyr, editors, \emph{Computer {{Science
  Logic}}}, Lecture {{Notes}} in {{Computer Science}}, pages 323--342,
  Springer, Berlin, Heidelberg (1999), ISBN 978-3-540-48855-2.
\newline\urlprefix\url{https://doi.org/10.1007/10703163_22}

\bibitem{sterling:2023:noninterference-erratum}
Sterling, J., \emph{Erratum: adequacy of \emph{Sheaf semantics of
  noninterference}} (2023).
\newline\urlprefix\url{http://www.jonmsterling.com/jms-005Z.xml}

\bibitem{sterling-harper:2021}
Sterling, J. and R.~Harper, \emph{Logical relations as types: Proof-relevant
  parametricity for program modules}, Journal of the ACM \textbf{68} (2021),
  ISSN 0004-5411. \eprint{2010.08599}.
\newline\urlprefix\url{https://doi.org/10.1145/3474834}

\bibitem{sterling-harper:2022}
Sterling, J. and R.~Harper, \emph{Sheaf semantics of termination-insensitive
  noninterference}, in: A.~P. Felty, editor, \emph{7th International Conference
  on Formal Structures for Computation and Deduction (FSCD 2022)}, volume 228
  of \emph{Leibniz International Proceedings in Informatics (LIPIcs)}, pages
  5:1--5:19, Schloss Dagstuhl--Leibniz-Zentrum fuer Informatik, Dagstuhl,
  Germany (2022), ISBN 978-3-95977-233-4, ISSN 1868-8969. \eprint{2204.09421}.
\newline\urlprefix\url{https://doi.org/10.4230/LIPIcs.FSCD.2022.5}

\bibitem{taylor:1991}
Taylor, P., \emph{The fixed point property in synthetic domain theory}, in:
  \emph{[1991] Proceedings Sixth Annual IEEE Symposium on Logic in Computer
  Science}, pages 152--160 (1991).
\newline\urlprefix\url{https://doi.org/10.1109/LICS.1991.151640}

\bibitem{van-oosten-simpson:2000}
{van Oosten}, J. and A.~K. Simpson, \emph{Axioms and (counter)examples in
  synthetic domain theory}, Annals of Pure and Applied Logic \textbf{104},
  pages 233--278 (2000), ISSN 0168-0072.
\newline\urlprefix\url{https://doi.org/10.1016/S0168-0072(00)00014-2}

\end{thebibliography}

\clearpage
\appendix

\section{Properties of the dominance}

\PropDomLinked*

\begin{proof}
  We need to show that the intrinsic order on \(\Sigma\) coincides with the path relation. First, we observe that \(\bot   \specle   \top\); indeed, fixing a map \(f :  \Sigma   \to   \Sigma\), by Phoa's principle, evaluation at boundary obtains a pair \((f~ \bot , f~ \top )\) such that \(f~ \bot\) implies \(f~ \top\). In one direction, suppose \(x  \pathle  y\), which means we have a path \(l :  \Sigma   \to   \Sigma\) whose boundary is determined by \(x, y\). Fixing \(f :  \Sigma   \to   \Sigma\), we want to show that \(f~x  \to  f~y\). But this follows by evaluating \(\bot   \specle   \top\) at the $\Sigma$-predicate \(f  \circ  l :  \Sigma   \to   \Sigma\). Conversely, if \(x \specle  y\), then we have in particular \(x  \to  y\), which by Phoa's principle uniquely determines a path \(l :  \Sigma   \to   \Sigma\).
\end{proof}

\section{Properties of predomains}

\PropPredomainProperties*
\begin{proof}
  Replete types are both complete and boundary separated because the latter can be defined by orthogonality conditions and is satisfied by $\Sigma$: it is complete by the axioms of SDT (\cref{def:axioms}) and it is boundary separated by \cref{prop:sigma-boundary-sep}. Assuming Phoa's principle, the proof that replete types are linked can be found in Taylor~\cite[Corollary 2.10]{taylor:1991} and Reus~\cite[Corollary 6.1.16]{reus:1995}. Lastly, one can show that the objects whose link relation is antisymmetric can be defined as an orthogonality condition, thus we know that the link relation on every replete type is antisymmetric, hence it follows the intrinsic order on replete types are also partial orders since they are linked.
\end{proof}

\PropSealingSigDecidable*

\begin{proof}
  We observe that $\P \vee A$ is defined as the pushout of the projections of $A \times \P$ as indicated below: 
  \[
  \DiagramSquare{
      ne = A, 
      se = \P \vee A,
      nw = A \times \P,
      se/style = pushout,
      sw = \P, 
      east = \eta_{\P \vee -}, 
      south = \star,
      }
  \]
  Using the fact that $A$ has $\Sigma$-equality, we obtain a map $f : (\P \vee A) \times (\P \vee A) \to \P \vee \Sigma$ such that $f(\eta_{\P \vee -}(x), \eta_{\P \vee -}(y)) = \eta_{\P \vee -}(x = y)$ and $f(\star, -) = f(-, \star) = \star$. The desired characteristic map can then be defined as $\sigma \circ f$, where $\sigma : \P \vee \Sigma \to \Sigma$ is defined as follows:
  \begin{align*}
    &\sigma : \P \vee \Sigma \to \Sigma\\ 
    &\sigma(\eta_{\P \vee -}(\phi)) = \P \vee \phi\\ 
    &\sigma(\star) = \top
  \end{align*}  
  For any $u, v : \P \vee A$, if $u = v = \eta_{\P \vee -}(x)$ for some $x : A$, then we have $\sigma(f(u, v)) = \sigma(\eta_{\P \vee -}(x = x)) = \P \vee (x = x) = \top$. Otherwise, we have that $u$ or $v$ is $\star$, which means $\P$ holds and so $\sigma(f(u, v)) = \sigma(\star) = \top$ as well. Conversely, suppose $\sigma(f(u, v)) = \top$, and that $u = \eta_{\P \vee -}(x)$ and $v = \eta_{\P \vee -}(y)$, which means that $\sigma(\eta_{\P \vee -}(x = y)) = \P \vee (x = y)$ holds. If $\P$ holds, we are done as $(\P \vee A) \cong 1$ in this case. Otherwise, we have $x = y$, and so $u = \eta_{\P \vee -}(x) = \eta_{\P \vee -}(y) = v$. Lastly, if either $u$ or $v$ is the unique element $\star$ then we may discharge the case as above. 
\end{proof}

\PropFuncPointwise*

\begin{proof}
  This is Proposition 5.4.4 of Phoa~\cite{phoa:1991}. We just show the case for the function types. Given a path $f \le_{X \to B} g$, it is clear that we may construct a path $f~x \le B g~x$ for all $x : X$. Conversely, suppose we are given a path $f~x \le B g~x$ for all $x : X$. By \cref{prop:predomain-properties}, $B$ is boundary separated, and so such paths are necessarily unique, and so we have a function $\alpha : X \to \Sigma \to B$ such that $\alpha(x)$ is a path $f~x \le B g~x$. We then obtain a path $f \le_{X \to B} g$ by taking the exponential transpose of $\alpha$.  
\end{proof}

\PropLiftPointwise*

\begin{proof}
  In the forward direction, we have a path \(x{ \downarrow }\) \(\sqsubseteq\) \(y{ \downarrow }\), which means \(x{ \downarrow }\) implies \(y{ \downarrow }\) as the \(\Sigma\) is linked by \cref{prop:dom-linked}. Suppose \(x { \downarrow }\), and let \(f : A  \to   \Sigma\) be arbitrary. By assumption, we have that \(f'(x)  \to  f'(y)\), where \(f'(( \phi , u)) =  \phi   \mathbin{\angle}  f  \circ  u\). In other words, we have \(x { \downarrow }   \mathbin{\angle}  f(x)\) implies \(y { \downarrow }   \mathbin{\angle}  f(y)\). Since \(x { \downarrow }\) holds, we have that \(f(x)  \to  f(y)\), which by definition means \(x  \sqsubseteq _A y\). 
    
  In the backward direction, let \(\alpha  :  x { \downarrow }   \to  x  \sqsubseteq _A y\) be the given partial path. We may define a total path \(\beta  :  \Sigma   \to   \mathsf{L} A\) between \(x\) and \(y\) by setting $\beta ( \phi ) = ( x { \downarrow } ,  \lambda  p.~ \alpha ~p~ \phi )$. Thus we have $x \le_{\lift{A}} y$ as required.
\end{proof}

\PropLub* 

\begin{proof}
  The final \(\mathsf{L}\)-coalgebra is equipped with a global element \(\infty  :  \overline{\omega}\) that can be thought of as the ``point at infinity''. Define \(f_ \infty\) be the element determined by the unique extension \(\overline{f}  :  \overline{\omega}   \to  A\) evaluated at the invariant point \(\infty\). 
  \begin{enumerate}
    \item First we show that \(f_ \infty\) is an upper bound for \(f\). Fixing \(i:  \omega\), we need to show that \(f~i  \sqsubseteq ^ \circ _A f_ \infty\). Because \(\overline{f}\) extends \(f\), it suffices to show \(\overline{f} ~i  \sqsubseteq ^ \circ _A f_ \infty\). Using the fact that every map is monotone with respect to the specialization order, the result holds because \(i  \sqsubseteq ^ \circ _{ \overline{\omega} }  \infty\).
    
    \item Let \(\alpha\) be an upper bound for \(f\). We need to show that \(f_ \infty   \sqsubseteq ^ \circ   \alpha\). If the principal lower set \({ \downarrow }( \alpha )\) is complete, we have the following lifting situation:
    \[\begin{tikzpicture}[diagram]
      \path
      (-1.5,0) node (I) {$\omega$}
      (1.5,0) node (F) {$\overline{\omega}$}
      (0,-2) node (A) [align=center] {${\downarrow}(\alpha) = \{ a \mid a \specle \alpha \}$};
      \draw[embedding] (I) to (F); 
      \draw[->] (I) to node[left] {$f$} (A); 
      \draw[->,exists] (F) to node[right, shift={(0.1,0)}] {$\tilde{f}$} (A);
      \end{tikzpicture}
      \]
      
      In the above \(\tilde{f}\) is the unique extension of \(f\) considered as a map \(\omega   \to  { \downarrow }( \alpha )\). By uniqueness of \(\overline{f}\) as the extension of \(f :  \omega   \to  A\), \(\tilde{f}\) is equal to \(\overline{f}\) considered as maps \(\overline{\omega}   \to  A\). Consequently we have that \(f_ \infty  =  \overline{f} ( \infty ) =  \tilde{f} ( \infty )\), so the result follows by observing that \(\tilde{f} ( \infty )  \in  { \downarrow }( \alpha )\).
    
      It remains to show that \({ \downarrow }( \alpha )\) is complete. We can express the principal lower set as follows:
      \[\begin{aligned}           { \downarrow }( \alpha ) &=  { \left \{ a  \mid  a  \sqsubseteq ^ \circ   \alpha \right \} } \\  
          &=  { \left \{ a  \mid   \forall  f:A \to \Sigma ~ f(a)  \to  f( \alpha ) \right \} } \\  
          &=  \bigcap _{f:A \to \Sigma }  { \left \{ a  \mid  f(a)  \to  f( \alpha ) \right \} }        \end{aligned}\] 
          
      Because complete types are internally complete, the result would follow if we can show that \(S =  { \left \{ a  \mid  f(a)  \to  f( \alpha ) \right \} }\) is complete. We may show that \(S\) can be computed as follows: 
      \[
      \DiagramSquare{
      ne = A, 
      se = \Sigma \times \Sigma,
      nw = S,
      nw/style = pullback,
      sw = \Sigma^\Sigma,
      east = {\langle f, f(\alpha)\rangle}, 
      south = \partial,
      }
      \]
      Since $S$ can be defined as the limit of a diagram of complete types, it is complete as well. 
\end{enumerate}
\end{proof}

\PropMonotone*

\begin{proof}
  Given $a \le b$, we derive a path $\Sigma \to B$ whose boundary is $(f~a, f~b)$ by postcomposing with $f$, and so $f~a \le f~b$ as well. 
\end{proof}

\PropCont* 

\begin{proof}
  Predomains are complete, so we have the following extensions of $d$ and $f \circ d$: 
  \[
  \begin{tikzpicture}[diagram]
    \SpliceDiagramSquare<sq/>{
    width = 3cm,
    nw = \omega, 
    sw = \overline{\omega},
    ne = A, 
    se = B,
    east = f, 
    south = \overline{f \circ d}, 
    west/style = embedding, 
    south/style = {->,exists},
   } 
   \draw[->,exists] (sq/sw) to node[desc] {$\overline{d}$} (sq/ne);
  \end{tikzpicture}
  \]
  Because extensions along $\omega \hookrightarrow \overline{\omega}$ are unique for complete types, we have $f \circ \overline{d} = \overline{f \circ d}$. But by definition of the synthetic $\omega$-join, this means that $f(\sup d) = f(\overline{d}(\infty)) = \overline{f \circ d}(\infty) = \sup(f \circ d)$. 
\end{proof}

\section{Domains and admissibility}

\begin{proposition}\label{prop:path-implies-intrinsic}
  The path relation need not coincide with the intrinsic order, but a path $\alpha : x \pathle y$ always implies $x \specle y$. 
\end{proposition}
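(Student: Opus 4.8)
The plan is to reduce the claim to the single fact, already isolated in the proof of \Cref{prop:dom-linked}, that $\bot \specle_\Sigma \top$ holds on the dominance. Recall that this fact is itself a consequence of Phoa's principle: every endomap $h : \Sigma \to \Sigma$ has boundary an ordered pair $(h\,\bot, h\,\top)$ with $h\,\bot \to h\,\top$, so each such $h$ satisfies $h\,\bot \to h\,\top$, which is exactly the assertion $\bot \specle_\Sigma \top$. This is the only nontrivial ingredient I intend to use.

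First I would unfold the hypothesis $x \pathle_A y$: it furnishes a path $\alpha : \Sigma \to A$ with $\alpha\,\bot = x$ and $\alpha\,\top = y$. To establish $x \specle_A y$ I would then fix an arbitrary observable property $g : A \to \Sigma$ and aim to show $g\,x \to g\,y$, since this quantified over all $g$ is precisely the definition of the intrinsic preorder.

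The key step is to form the composite $g \circ \alpha : \Sigma \to \Sigma$, a $\Sigma$-valued endomap of the dominance. Instantiating $\bot \specle_\Sigma \top$ at the predicate $g \circ \alpha$ yields $(g \circ \alpha)\,\bot \to (g \circ \alpha)\,\top$; rewriting along $\alpha\,\bot = x$ and $\alpha\,\top = y$ turns this into $g\,x \to g\,y$. As $g$ was arbitrary, $x \specle_A y$ follows. This is exactly the argument used for the special case $A = \Sigma$ in the forward direction of \Cref{prop:dom-linked}, now run at an arbitrary target type $A$.

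I do not expect any genuine obstacle here: the only substantive dependency is Phoa's principle, which underwrites $\bot \specle_\Sigma \top$ and rules out order-reversing endomaps such as negation, and this is available under the standing SDT-model assumption. The clause that the path relation \emph{need not} coincide with the intrinsic order in general requires no proof — it merely contrasts with the \emph{linked} types, for which the converse implication $x \specle_A y \to x \pathle_A y$ additionally holds.
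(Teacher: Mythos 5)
Your proof is correct and is essentially the paper's own argument: both precompose the arbitrary observable $g : A \to \Sigma$ with the path $\alpha : \Sigma \to A$ to obtain an endomap of $\Sigma$, and then evaluate the fact $\bot \specle_\Sigma \top$ (which follows from Phoa's principle, part of the standing axioms) at that composite to conclude $g\,x \to g\,y$. No gaps.
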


\begin{proof}
  Suppose we have a path \(l :  \Sigma   \to   \Sigma\) whose boundary is determined by \(x, y\). Fixing \(f :  \Sigma   \to   \Sigma\), we want to show that \(f~x  \to  f~y\). But this follows by evaluating \(\bot   \specle   \top\) at the $\Sigma$-predicate \(f  \circ  l :  \Sigma   \to   \Sigma\). 
\end{proof}

\begin{proposition}\label{prop:invariant-top}
The invariant point $\infty$ is the top element of $\overline{\omega}$ with respect to the intrinsic preorder. 
\end{proposition}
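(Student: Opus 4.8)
The plan is to establish that $x \specle_{\overline{\omega}} \infty$ holds for every $x : \overline{\omega}$; since $\infty$ is then an upper bound that is moreover the least one, this exhibits it as the top element. Write $\iota : \omega \hookrightarrow \overline{\omega}$ for the defining inclusion of the initial lift algebra into the final lift coalgebra. I first record that $\infty = \sup\iota$: taking $f = \iota$ in the proof of \cref{prop:lub}, the unique extension $\overline{\iota}$ is the identity on $\overline{\omega}$, so $\iota_\infty = \overline{\iota}(\infty) = \infty$ is the least upper bound of $\iota$. In particular each finite stage satisfies $\iota(n) \specle \infty$, which is exactly the inequality used in that proof when verifying that $\infty$ bounds the chain.

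Next I would study the principal lower set ${\downarrow}(\infty) = \{\, y : \overline{\omega} \mid y \specle \infty \,\}$ and reduce the proposition to the claim that the inclusion $j : {\downarrow}(\infty) \hookrightarrow \overline{\omega}$ is an isomorphism. By the previous paragraph the chain $\iota$ factors through $j$ as some $\tilde{\iota} : \omega \to {\downarrow}(\infty)$, and by the computation of principal lower sets inside the proof of \cref{prop:lub} the subobject ${\downarrow}(\infty)$ is complete. The key step is then an extension argument: since ${\downarrow}(\infty)$ is complete, $\tilde{\iota}$ extends uniquely along $\omega \hookrightarrow \overline{\omega}$ to a map $\overline{\tilde{\iota}} : \overline{\omega} \to {\downarrow}(\infty)$, and the composite $j \circ \overline{\tilde{\iota}} : \overline{\omega} \to \overline{\omega}$ is an extension of $\iota$ along $\omega \hookrightarrow \overline{\omega}$. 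But $\mathrm{id}_{\overline{\omega}}$ is also such an extension, so if extensions of $\iota$ into $\overline{\omega}$ are unique we obtain $j \circ \overline{\tilde{\iota}} = \mathrm{id}_{\overline{\omega}}$; as $j$ is monic and split epic it is invertible, whence ${\downarrow}(\infty) = \overline{\omega}$ and every $x$ satisfies $x \specle \infty$.

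The main obstacle is precisely the uniqueness of extensions of maps $\omega \to \overline{\omega}$ along $\omega \hookrightarrow \overline{\omega}$ --- equivalently, the completeness of $\overline{\omega}$ itself, or the statement that $\omega$ is dense in $\overline{\omega}$ in the sense that $\overline{\omega}$ admits no proper complete subobject containing the chain. This is the one ingredient not directly packaged by the orthogonality axioms of \cref{def:axioms}, and I would discharge it using the standard fact that the final lift coalgebra is complete (for instance, $\overline{\omega}$ embeds as a complete subobject of $\Sigma^{\omega}$ through its coalgebra structure, $\Sigma^{\omega}$ being complete since $\Sigma$ is complete by \cref{def:axioms} and completeness is closed under exponentials). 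As an alternative that avoids completeness of $\overline{\omega}$ altogether, one can argue coinductively: transporting the characterization of the lifting order from \cref{prop:lift-order} through the isomorphism $\overline{\omega} \cong \mathsf{L}\,\overline{\omega}$, the predicate $y \mapsto (y \specle \infty)$ is a post-fixed point of the induced operator --- using $\infty{\downarrow} = \top$ and that $\infty$ is its own predecessor --- so finality of $\overline{\omega}$ forces it to hold everywhere. Either route reduces the remaining bookkeeping to \cref{prop:path-implies-intrinsic} and the monotonicity of \cref{prop:monotone}.
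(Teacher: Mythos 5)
Your route is far more elaborate than what is needed, and it contains two genuine gaps. The first is a circularity: you obtain the inequalities $\iota(n) \specle \infty$ at the finite stages by appealing to \cref{prop:lub} (``$\infty$ bounds the chain $\iota$''), but the proof of the upper-bound half of \cref{prop:lub} is itself justified by monotonicity applied to $i \specle_{\overline{\omega}} \infty$ --- that is, by the very proposition you are proving, restricted to the image of $\omega$. You would need an independent argument that $\iota(n) \specle \infty$, and any such argument essentially already proves the full statement. The second gap is that your entire extension argument --- completeness of ${\downarrow}(\infty)$ as an intersection of pullbacks over the ambient type $\overline{\omega}$, and uniqueness of extensions of $\iota$ along $\omega \hookrightarrow \overline{\omega}$ into $\overline{\omega}$ --- rests on $\overline{\omega}$ itself being complete. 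That is true in standard SDT models, but it is established nowhere in this paper, and your justification (``embeds as a complete subobject of $\Sigma^{\omega}$'') assumes what is to be shown: being a subobject of a complete type does not make a type complete, one must exhibit it as a limit of complete types. Your coinductive alternative is also not a valid use of finality as stated: showing that $y \mapsto (y \specle \infty)$ is closed under the coalgebra structure does not force it to exhaust $\overline{\omega}$, and \cref{prop:lift-order} is stated only for predomains.

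The paper's own proof is a two-line direct construction: viewing elements of $\overline{\omega}$ as sequences of $\Sigma$-propositions, the map $\alpha(\phi) = (n \mapsto \phi \vee i(n))$ is a path $\Sigma \to \overline{\omega}$ with boundary $(i, \infty)$ --- using that $\Sigma$ has finite joins, \cref{thm:dominance-finite-join} --- so $i \specle \infty$ follows from \cref{prop:path-implies-intrinsic}. No completeness of $\overline{\omega}$, no appeal to \cref{prop:lub}, no lower sets. I recommend reworking your argument around such an explicit path.
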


\begin{proof}
  Observe that for any $i : \overline{\omega}$ we have a path $\alpha : i \pathle \infty$ defined by $\alpha(\phi, n) = \phi \vee i(n)$. Thus we have $i \specle \infty$ by \cref{prop:path-implies-intrinsic}. 
\end{proof}

\BugAdmissible*

\begin{proof}
  Suppose $S \subseteq A$ is downward closed and closed under synthetic $\omega$-joins. We must complete the following lifting problem: 
  \[
  \Local{\omega}{\overline{\omega}}{S}[f][?]
  \]
  Because $S$ is complete, we have the following extension of $\omega \xrightarrow{f} A \xhookrightarrow{\iota} S$:
  \[
  \begin{tikzpicture}[diagram]
    \SpliceDiagramSquare<sq/>{
      nw = \omega,
      sw = S,
      ne = \overline{\omega},
      south/style = {opacity = 0},
      east/style = {opacity = 0},
      north/style = >->,
      west = f,
    }
    \node[below = of sq/sw] (S) {$A$};
    \draw[->,dotted] (sq/ne) to node[right] {$\overline{\iota f}$} (S);
    \draw[->,dotted] (sq/ne) to node[right] {$?$} (sq/sw);
    \draw[>->] (sq/sw) to node {} (S);
  \end{tikzpicture}
  \]
  Thus it suffices to show that $\overline{\iota f}$ factors through $S$. Fixing $i : \overline{\omega}$, we want to show that $\overline{\iota f}(i) \in S$. By the assumptions on $S$, we have that $\sup (\iota f) = \overline{\iota f}(\infty)$ is in $S$ and that $a \in S$ whenever $a \le \overline{\iota f}(\infty)$. Thus it suffices to show $\overline{\iota f}(i) \le \overline{\iota f}(\infty)$, which follows from \cref{prop:invariant-top}. 
\end{proof}

\begin{proposition}
  The intersection of a family of admissible subsets of a domain is admissible.
\end{proposition}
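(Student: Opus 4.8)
The plan is to verify the two defining conditions of admissibility (\cref{def:admissible}) for the intersection $A = \bigcap_{i : I} A_i$ of a family of admissible subsets $\{A_i\}_{i:I}$ of a domain $D$: that $A$ contains the least element $\bot_D$, and that $A$ is complete.

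The first condition is immediate. Since each $A_i$ is admissible we have $\bot_D \in A_i$ for every $i : I$, and therefore $\bot_D \in \bigcap_{i:I} A_i = A$ by the definition of intersection.

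For the second condition I would exhibit the intersection as an internal limit of complete types. Writing the intersection as the subset $\{d : D \mid \forall i : I.~ d \in A_i\}$, we realise $A$ as the wide pullback over $D$ of the family of monomorphisms $A_i \hookrightarrow D$; the diagram thus consists of the objects $\{A_i\}_{i:I}$ together with their common codomain $D$. Each $A_i$ is complete by hypothesis, and $D$ is itself complete, being a domain, hence a predomain, hence complete by \cref{prop:predomain-properties}. So $A$ is the limit of a diagram built entirely from complete types.

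The key step — and the only one with any content — is then to invoke the closure of complete types under internal limits. This is exactly the property ``complete types are internally complete'' already used in the proof of \cref{prop:lub} (and again in \cref{prop:bug-admissible}, where a subset was shown complete by presenting it as a limit of complete types). Applying it to the diagram above yields that $A$ is complete, which finishes the verification. I do not anticipate any genuine obstacle: the argument is structurally identical to the limit-of-complete-types reasoning used earlier in the section, and the only point warranting care is that the index type $I$ is allowed to be arbitrary, which is precisely what internal completeness (closure under \emph{all} internal limits, not merely finite ones) guarantees.
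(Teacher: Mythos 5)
Your proof is correct, but it takes a genuinely different route from the paper's. The paper argues elementwise: it fixes a synthetic $\omega$-chain $f$ lying in the intersection, notes that each member $F_j$ of the family contains $\bigvee f$ because $F_j$ is admissible, and concludes that the intersection is closed under synthetic $\omega$-joins. You instead argue structurally, realising the intersection as a wide pullback of the monos $A_i \hookrightarrow D$ over $D$ and invoking closure of complete types under internal limits. Your route has a real advantage: it directly establishes completeness in the orthogonality sense demanded by \cref{def:admissible}, whereas the paper's argument as written only yields closure under synthetic $\omega$-joins, which --- by the paper's own remark following \cref{def:admissible} and by \cref{prop:bug-admissible} --- is only known to imply completeness for \emph{lower} subsets, and an intersection of admissible subsets need not be lower. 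What the paper's elementwise argument buys in return is concreteness (it is the form in which the closure property is actually consumed during fixed-point induction), but your limit-theoretic argument is the one that discharges the stated definition without that caveat, and it is exactly the pattern already used in the proof of \cref{prop:lub}, where a principal lower set is shown complete by exhibiting it as an intersection, i.e.\ a limit, of complete types.
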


\begin{proof}
  The least element is contained in the intersection as it is contained in every fiber. Suppose that \(f :  \omega   \to  A\) is a synthetic \(\omega\)-chain such that \(f_i  \in   \bigcap  F\). But since \(f_i  \in  F_j\) for every \(j : J\),  we have \(\bigvee  f  \in  F_j\) as every \(F_j\) is admissible, which means \(\bigvee  f  \in   \bigcap  F\) as well.
\end{proof}

\begin{proposition}
  If \(P, Q\) are \(\Sigma\)-subsets of a domain \(A\), then the exponential subobject \(Q^P\) is an admissible subset of \(A\). 
\end{proposition}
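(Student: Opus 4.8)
The plan is to check the two clauses of \cref{def:admissible} for the subobject $Q^P = \{a : A \mid P(a) \to Q(a)\}$, namely that it is complete and that it contains the least element $\bot_A$. Writing the two $\Sigma$-subsets as their classifying maps $P, Q : A \to \Sigma$, the key observation is that $Q^P$ is recovered from $P$ and $Q$ by the very same recipe used in the proof of \cref{prop:lub}: it is the preimage under $\langle P, Q\rangle : A \to \Sigma \times \Sigma$ of the subobject $\{(\phi, \psi) \mid \phi \to \psi\} \hookrightarrow \Sigma \times \Sigma$.

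The technical heart, which I expect to be the main obstacle, is showing that $Q^P$ is complete; I would establish this compositionally using Phoa's principle. By Phoa's principle the boundary map $\partial : \Sigma^\Sigma \to \Sigma \times \Sigma$ factors as an isomorphism onto $\{(\phi,\psi) \mid \phi \to \psi\}$ followed by the inclusion, so that $Q^P$ is exactly the following pullback:
\[
\DiagramSquare{
ne = A,
se = \Sigma \times \Sigma,
nw = Q^P,
nw/style = pullback,
sw = \Sigma^\Sigma,
east = {\langle P, Q\rangle},
south = \partial,
}
\]
Now $A$ is a predomain and hence complete, while $\Sigma$, $\Sigma \times \Sigma$, and $\Sigma^\Sigma$ are all complete: completeness is an orthogonality condition, so the subuniverse of complete types is closed under the finite products and function spaces appearing here, exactly as was exploited for the set $S$ in the proof of \cref{prop:lub}. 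Since complete types are closed under (internal) limits, the pullback $Q^P$ is a limit of complete types and is therefore itself complete.

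It then remains to check the least-element clause, $\bot_A \in Q^P$, which unwinds to the implication $P(\bot_A) \to Q(\bot_A)$. Here I would use that every $\Sigma$-predicate is monotone (\cref{prop:monotone}) together with the fact that the intrinsic order on $\Sigma$ is the implication order (the corollary to \cref{prop:dom-linked}): since $\bot_A$ lies below every element, $P(\bot_A)$ can hold only when $P$ is the total subset, so the condition collapses to $\bot_A \in Q$ in that degenerate case and is vacuously satisfied otherwise. This final step is the delicate point and is where any hypotheses relating $P$ and $Q$ enter; the substantive and reusable content of the proposition is the completeness argument of the previous paragraph, which proceeds uniformly via Phoa's principle and mirrors the limit-of-complete-types reasoning already deployed in \cref{prop:lub}.
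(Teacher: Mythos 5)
Your completeness argument is correct but takes a genuinely different route from the paper. The paper does not form the pullback of $\partial$ along $\langle P, Q\rangle$; instead it verifies closure under synthetic $\omega$-joins directly: given a chain $f$ with $f_i \in Q^P$ and $\bigvee f \in P$, it uses the fact that a $\Sigma$-subset reflects suprema (from $P(\bigvee f)$ one extracts $P(f_i)$ for some $i : \omega$ --- what the paper calls ``the universal property of $\bigvee$''), whence $f_i \in Q$ by the definition of $Q^P$, and then uses that $\Sigma$-subsets are upward closed in the synthetic order to conclude $\bigvee f \in Q$. Your pullback argument, which transports the technique from the proof of \cref{prop:lub}, actually delivers more: it establishes genuine completeness (orthogonality to $\omega \hookrightarrow \overline{\omega}$), which is what \cref{def:admissible} literally demands, whereas the paper's argument yields only join-closure. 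On this half your proof is, if anything, the more faithful one.

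The least-element clause, however, is a genuine gap, and your own analysis exposes it. By monotonicity, $P(\bot_A)$ forces $P$ to be the total subset, and then $\bot_A \in Q^P$ reduces to $\bot_A \in Q$ --- which simply does not follow from the hypotheses (take $P = A$ and $Q = \emptyset$, both perfectly good $\Sigma$-subsets; then $Q^P = \emptyset$). Moreover the proposed case split ``total, or otherwise vacuous'' is not constructively available, since $P(\bot_A)$ is a $\Sigma$-proposition rather than a decidable one. So this clause cannot be discharged without further hypotheses relating $P$ and $Q$. You should know that the paper's own proof is silent on exactly this point: it only checks join-closure and never addresses $\bot_A \in Q^P$, so the difficulty you flagged is real and is shared by the source.
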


\begin{proof}
  Let \(f_i  \in  Q^P\) for some synthetic \(\omega\)-chain \(f\). Suppose that \(\bigvee  f  \in  P\). We need to show that \(\bigvee  f  \in  Q\). By the universal property of $\bigvee$, we may assume that \(f_i  \in  P\) for some \(i :  \omega\). By assumption, this means that \(f_i  \in  Q\) and thence \(\bigvee  f  \in  Q\) as every \(\Sigma\)-subset is monotone in the synthetic order.
\end{proof}

\section{Properties of the denotational semantics} 

\PropCostAlgLaws*

\begin{proof}
  Routine computation using the distributive law $\tau$ of \(\mathbb{C}   \times  -\) over the lifting monad.
\end{proof}

\section{Properties of the computational semantics}

\PropEvalFunc*

\begin{proof}
  Consider the following subset:  \[\begin{aligned}     P =  { \left \{ \alpha   \mid   \forall   { \left [ e, v, v' \right ]} ~   \alpha (e, v) { \downarrow }   \land   \alpha (e, v') { \downarrow }   \to  v = v' \right \} }    \end{aligned}\]  
  
  We may check that \(P\) is admissible by means of \cref{cor:admissible} and proceed by fixed-point induction. Suppose that \(\alpha   \in  P\) and that \(\Phi _ \mathsf{eval} ( \alpha )(e, v) { \downarrow }\) and \(\Phi _ \mathsf{eval} ( \alpha )(e, v') { \downarrow }\). We need to show that \(v = v'\). We proceed by cases on \(\mathsf{out} (e)\).
  \begin{enumerate}
    \item{If \(e \mapsto  c',e'\), we may deduce that \(\alpha (e',v) { \downarrow }\) and \(\alpha (e',v') { \downarrow }\), so the result follows from the assumption that \(\alpha   \in  P\). }    \item{Otherwise, we have that \(e = v\) and \(e = v'\) by definition of \(\Phi _ \mathsf{eval}\), and so \(v = v'\).} 
  \end{enumerate}
\end{proof}

\PropEvalSeq* 

\begin{proof}
  Consider the following subset of \(\Pi _{A :  \mathsf{tp} }.~ \mathsf{U} \mathsf{F} A   \to   \mathsf{U} \mathsf{F} A   \to   \mathsf{T} (1)\):
  \[\begin{aligned}     P =  { \left \{ \alpha   \mid   \forall   { \left [ e \right ]} ~   \alpha (e, v) { \downarrow }   \land   \mathsf{eval} (g~v,  \mathsf{ret} (w)) { \downarrow }   \to   \mathsf{eval} (e; g,  \mathsf{ret} (w)) =  \alpha (e, v) +  \mathsf{eval} (g~v,  \mathsf{ret} (w)) \right \} }    \end{aligned}\]  \par{It suffices to show that \(\mathsf{eval}   \in  P\). Observing that \(P\) is admissible, we proceed by fixed-point induction. Suppose that \(\alpha   \in  P\), \(\Phi _ \mathsf{eval} ( \alpha )(e, v) { \downarrow }\), and that \(\mathsf{eval} (g~v,  \mathsf{ret} (w)) { \downarrow }\). We need to show that \(\mathsf{eval} (e; g,  \mathsf{ret} (w)) =  \Phi _ \mathsf{eval} ( \alpha )(e, v) +  \mathsf{eval} (g~v,  \mathsf{ret} (w))\). We proceed by cases on \(\mathsf{out} (e)\). }  \begin{enumerate}
    \item{      If \(e  \mapsto  c',e'\), then we compute: 
      \[\begin{aligned}          \mathsf{eval} (e; g,  \mathsf{ret} (w)) &= c'  \boxplus   \mathsf{eval} (e'; g,  \mathsf{ret} (w)) \\         &= c' +  \alpha (e', v) +  \mathsf{eval} (g~v,  \mathsf{ret} (w)) \\         &=  \Phi _ \mathsf{eval} ( \alpha )(e, v) +  \mathsf{eval} (g~v,  \mathsf{ret} (w)) 
       \end{aligned}\]      \par{Where the first equality follows from the assumption that \(\alpha   \in  P\) and the second by the definition of \(\Phi _ \mathsf{eval}\). }    }    \item{Otherwise, we have that \(e~ \mathsf{val}\). Since \(\Phi _ \mathsf{eval} ( \alpha )(e,  \mathsf{ret} (v)) { \downarrow }  =  (e =  \mathsf{ret} (v), 0) { \downarrow }  = (e =  \mathsf{ret} (v))\) holds, we can compute: }    \[\begin{aligned}       \mathsf{eval} (e; g,  \mathsf{ret} (w)) &=  \mathsf{eval} ( \mathsf{ret} (v); g,  \mathsf{ret} (w)) \\      &=  \Phi _ \mathsf{eval} ( \mathsf{eval} )( \mathsf{ret} (v); g,  \mathsf{ret} (w)) \\      &=  \mathsf{eval} (g~v,  \mathsf{ret} (w))     \end{aligned}\]    \par{But this is what we needed to show since \(\Phi _ \mathsf{eval} ( \alpha )(e,  \mathsf{ret} (v)) = 0\). }  
\end{enumerate}
\end{proof}

\PropCompIndSeq* 

\begin{proof}
  Consider the subset \(P  \subseteq   \mathsf{U} \mathsf{F} A   \to   \mathsf{U} \mathsf{F} A   \to   \mathsf{T} (1)\) defined as the intersection of the following subsets:
  \[\begin{aligned}     &Q =  { \left \{ \alpha   \mid   \alpha   \sqsubseteq   \mathsf{eval} \right \} }   \\     &R =  { \left \{ \alpha   \mid   \forall   { \left [ c', e', n \right ]} ~  (e  \mapsto ^n c', e')  \land   { \left ( \alpha (e'; g,  \mathsf{ret} (w)) { \downarrow } \right )}   \to   \varphi (c' +  \alpha (e'; g,  \mathsf{ret} (w))) \right \} }    \end{aligned}\]   It suffices to show that \(\mathsf{eval}   \in  P\). We have that \(P\) is admissible, and we proceed by fixed-point induction. Suppose that \(\alpha   \in  P\). We need to show that \(\Phi _ \mathsf{eval} ( \alpha )  \in  P\), where \(\Phi _ \mathsf{eval}\) is the characteristic functional of \(\mathsf{eval}\) (\cref{subsec:comp-sem}). It's immediate that \(\Phi _ \mathsf{eval} ( \alpha )  \in  Q\). It remains to show that it is also contained in \(R\). So suppose that \(e  \mapsto ^n c', e'\) and \(\Phi _ \mathsf{eval} ( \alpha )(e'; g,  \mathsf{ret} (w)) { \downarrow }\). We want to show that \(\varphi (c' +  \Phi _ \mathsf{eval} ( \alpha )(e'; g,  \mathsf{ret} (w)))\). We proceed by cases on \(\mathsf{out} (e')\).  
   \begin{enumerate}
    \item{If \(\mathsf{out} (e') =  \mathsf{inl}   \cdot   \star\), then we know that \(e' =  \mathsf{ret} (v)\) for some \(v :  \mathsf{U} \mathsf{F} 1\). Stepping the operational semantics, we have that \(\mathsf{ret} (v); g  \mapsto  0, g~v\), and by definition of the computational semantics \(\Phi _ \mathsf{eval} ( \alpha )(e'; g,  \mathsf{ret} (w)) = 0  \boxplus   \alpha (g~v,  \mathsf{ret} (w)) =  \alpha (g~v,  \mathsf{ret} (w))\). Since we assumed \(\alpha   \in  Q  \iff   \alpha   \sqsubseteq   \mathsf{eval}\), we also have \(\mathsf{eval} (g~v,  \mathsf{ret} (w)) { \downarrow }\), and since \(\mathbb{C}\) is discrete we have \(\alpha (g~v,  \mathsf{ret} (w)) =  \mathsf{eval} (g~v,  \mathsf{ret} (w))\). Recalling the premise and the fact that \(\mathsf{eval} (e,  \mathsf{ret} (v)) = c'\), we may conclude that \(\varphi (c' +  \mathsf{eval} (g~v,  \mathsf{ret} (w)))\), which is what we needed to show.}    \item{Otherwise, \(\mathsf{out} (e') =  \mathsf{inr}   \cdot  (c'', e'')\) for some \(e'' :  \mathsf{U} \mathsf{F} 1\), and we have that \(e';g  \mapsto  c'', e'';g\). By definition of the computational semantics, this means that \(\Phi _ \mathsf{eval} ( \alpha )(e'; g,  \mathsf{ret} (w)) = c''  \boxplus   \alpha (e''; g,  \mathsf{ret} (w))\). Since we assumed that \(c''  \boxplus   \alpha (e''; g,  \mathsf{ret} (w)) { \downarrow }\), we can use the laws of the derived algebra (\cref{prop:cost-alg-laws}) to deduce that \(\alpha (e''; g,  \mathsf{ret} (w)) { \downarrow }\) as well, and so by the assumption that \(\alpha   \in  R  \subseteq  P\), we have that \(\varphi (c' + c'' +  \alpha (e'';g,  \mathsf{ret} (w)))\) holds, which is what we needed to show. }  
\end{enumerate} 
\end{proof}

\PropProfAssoc* 

\begin{proof}
  In one direction, we show that \(\mathsf{profile} ((e; g); i) { \downarrow }\) implies \(\mathsf{profile} (e; ( \lambda  v.~g~v; i)) { \downarrow }\) and both denote identical costs. Consider the \(\Sigma\)-predicate \(\varphi\) such that \(\varphi (c)\) if and only if \(\mathsf{profile} (e; ( \lambda  v.~g~v; h)) = c\). Suppose that \(\mathsf{eval} (e; g,  \mathsf{ret} (w)) { \downarrow }\) and \(\mathsf{profile} (i~w) { \downarrow }\). By computational induction on sequencing~\cref{prop:comp-ind-seq}, it suffices to show that \(\mathsf{profile} (e; ( \lambda  v.~g~v; h)) =  \mathsf{eval} (e; g,  \mathsf{ret} (w)) +  \mathsf{profile} (i~w)\). Applying computational induction on \(\mathsf{eval} (e; g,  \mathsf{ret} (w)) { \downarrow }\), we further suppose that \(\mathsf{eval} (e,  \mathsf{ret} (v)) { \downarrow }\) and \(\mathsf{eval} (g~v,  \mathsf{ret} (w)) { \downarrow }\) and aim to show that \(\mathsf{profile} (e; ( \lambda  v.~g~v; h)) =  \mathsf{eval} (e,  \mathsf{ret} (v)) +  \mathsf{eval} (g~v,  \mathsf{ret} (w)) +  \mathsf{profile} (i~w)\). 
  \begin{enumerate}
    \item{We claim that \(\mathsf{profile} (e; ( \lambda  v.~g~v; i)) { \downarrow }\). By the big-step semantics of profiling~\cref{prop:prof-seq}, it suffices to show that \(\mathsf{eval} (e,  \mathsf{ret} (v)) { \downarrow }\) for some \(v\) and \(\mathsf{profile} (g~v; i) { \downarrow }\). The former follows from our assumption; for the latter, it suffices to show that \(\mathsf{eval} (g~v,  \mathsf{ret} (w)) { \downarrow }\) and \(\mathsf{profile} (i~w) { \downarrow }\), both of which follow from assumptions. }    \item{Given that \(\mathsf{profile} (e; ( \lambda  v.~g~v; i)) { \downarrow }\), we may apply computational induction again: supposing that \(\mathsf{eval} (e,  \mathsf{ret} (v')) { \downarrow }\) and \(\mathsf{profile} (g~v; i) { \downarrow }\), we have to show that \(\mathsf{eval} (e,  \mathsf{ret} (v')) +  \mathsf{profile} (g~v; i) =  \mathsf{eval} (e,  \mathsf{ret} (v)) +  \mathsf{eval} (g~v,  \mathsf{ret} (w)) +  \mathsf{profile} (i~w)\), which follows from the uniqueness of evaluation~\cref{prop:eval-func} and big-step semantics of profiling~\cref{prop:prof-seq}. 
    }  
\end{enumerate}  \par{In the other direction, suppose that \(\mathsf{profile} (e; ( \lambda  v.~g~v; i)) { \downarrow }\). It suffices to show that \(\mathsf{profile} ((e; g); i) { \downarrow }\). By computational induction, we may assume that \(\mathsf{eval} (e,  \mathsf{ret} (v)) { \downarrow }\) and \(\mathsf{profile} (g~v; i) { \downarrow }\). Applying computational induction again, we can also assume that \(\mathsf{eval} (g~v,  \mathsf{ret} (w)) { \downarrow }\) and \(\mathsf{profile} (i~w) { \downarrow }\) for some \(w\). By the big-step semantics of profiling~\cref{prop:prof-seq}, it suffices to show that \(\mathsf{eval} (e; g,  \mathsf{ret} (w)) { \downarrow }\) and \(\mathsf{profile} (i~w) { \downarrow }\). The latter is our assumption, and the former follows from the big-step semantics of evaluation~\cref{prop:eval-seq}.}
\end{proof}

\begin{restatable}{proposition}{PropCompIndApp}\label[prop]{prop:comp-ind-app}
  The following is valid: 
  \begin{mathpar}
  \inferrule{
    \all{e} \kw{eval}(f, \lambda e)\supp{} \land \kw{eval}(e[v], \kw{ret}(w))\supp{} \to \varphi(\kw{eval}(f, \lambda e) + \kw{eval}(e[v], \kw{ret}(w)))
  }{
    \kw{eval}(f~v, \kw{ret}(w))\supp{} \to \varphi(\kw{eval}(f~v, \kw{ret}(w)))
  }
  \end{mathpar}
\end{restatable}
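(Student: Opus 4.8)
The plan is to carry out a fixed-point induction that mirrors the proof of the sequencing rule \cref{prop:comp-ind-seq}, systematically replacing the bind $e; g$ by the application $\kw{ap}(f, v)$. Fix the data $f, v, w$ and the $\Sigma$-predicate $\varphi$, and consider the subset $P = Q \cap R$ of the domain in which $\kw{eval}$ lives, where $Q = \{\alpha \mid \alpha \sqsubseteq \kw{eval}\}$ and
\[
R = \{\alpha \mid \forall[c', f', n]~(f \mapsto^n c', f') \land \alpha(f'~v, \kw{ret}(w))\supp \to \varphi(c' + \alpha(f'~v, \kw{ret}(w)))\}.
\]
Here $\mapsto^n$ is the $n$-fold composite of the one-step relation, so that $f \mapsto^n c', f'$ records that $f$ reaches $f'$ in $n$ steps with accumulated cost $c'$. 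The motive $R$ is chosen so that, once $\kw{eval} \in R$ is established, instantiating at $n = 0$ (whence $f' = f$ and $c' = 0$) produces exactly the conclusion $\kw{eval}(f~v, \kw{ret}(w))\supp \to \varphi(\kw{eval}(f~v, \kw{ret}(w)))$.

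First I would verify that $P$ is admissible via \cref{cor:admissible}: $Q$ is the principal lower set of $\kw{eval}$, and $R$ is a lower set closed under synthetic $\omega$-joins, the latter relying on $\varphi$ being valued in $\Sigma$-propositions and on the discreteness of $\mathbb{C}$ (using \cref{prop:lift-order} and \cref{prop:sigma-dec-disc}) to transport definedness and equality of costs along the order. It then suffices to show $\kw{eval} \in P$, and since $\kw{eval}$ is the least fixed-point of $\Phi_{\kw{eval}}$ we may argue by fixed-point induction: assuming $\alpha \in P$, we show $\Phi_{\kw{eval}}(\alpha) \in P$. Membership in $Q$ is immediate from monotonicity of $\Phi_{\kw{eval}}$, since $\alpha \sqsubseteq \kw{eval}$ yields $\Phi_{\kw{eval}}(\alpha) \sqsubseteq \Phi_{\kw{eval}}(\kw{eval}) = \kw{eval}$.

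The substance is the verification that $\Phi_{\kw{eval}}(\alpha) \in R$. Assuming $f \mapsto^n c', f'$ and $\Phi_{\kw{eval}}(\alpha)(f'~v, \kw{ret}(w))\supp$, I would case on $\kw{out}(f')$, paralleling the two cases of \cref{prop:comp-ind-seq}. If $f'$ is terminal it is a lambda $\kw{lam}(e)$, and the base rule gives $\kw{ap}(\kw{lam}(e), v) \mapsto 0, e[v]$; unfolding $\Phi_{\kw{eval}}$ and using $\alpha \sqsubseteq \kw{eval}$ together with discreteness of $\mathbb{C}$ identifies $\Phi_{\kw{eval}}(\alpha)(f'~v, \kw{ret}(w))$ with $\kw{eval}(e[v], \kw{ret}(w))$, while $f \mapsto^n c', \kw{lam}(e)$ gives $\kw{eval}(f, \lambda e) = c'$ by definition of evaluation; the hypothesis of the rule, instantiated at this body $e$, then delivers $\varphi(c' + \kw{eval}(e[v], \kw{ret}(w)))$ as required. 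If instead $f' \mapsto c'', f''$, then the $\kw{ap}$-congruence rule gives $f'~v \mapsto c'', f''~v$ and hence $f \mapsto^{n+1} c' + c'', f''$; unfolding $\Phi_{\kw{eval}}$ yields $c'' \boxplus \alpha(f''~v, \kw{ret}(w))$, from which the laws of the derived cost algebra (\cref{prop:cost-alg-laws}) extract $\alpha(f''~v, \kw{ret}(w))\supp$, so that $\alpha \in R$ supplies $\varphi((c' + c'') + \alpha(f''~v, \kw{ret}(w)))$, which is precisely $\varphi(c' + \Phi_{\kw{eval}}(\alpha)(f'~v, \kw{ret}(w)))$.

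I expect the only genuinely delicate points to be the admissibility of $R$ and the bookkeeping of accumulated cost across the $\kw{ap}$-congruence rule; both are direct adaptations of the corresponding steps in \cref{prop:comp-ind-seq}, so no new idea beyond the choice of the motive $R$ is needed. In particular, the sole essential use of the rule's hypothesis occurs in the lambda case, exactly as the sequencing premise is consumed when $e'$ reduces to a returned value in \cref{prop:comp-ind-seq}.
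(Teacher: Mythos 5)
Your proof is correct and is exactly the intended argument: the paper states \cref{prop:comp-ind-app} without proof, evidently as the application analogue of \cref{prop:comp-ind-seq}, and your fixed-point induction over $P = Q \cap R$ --- with the motive $R$ tracking the reduction sequence of the function position and the rule's hypothesis consumed only in the $\kw{lam}$ case --- is a faithful transcription of the paper's proof of that proposition with $\kw{bind}$ replaced by $\kw{ap}$. The points you flag (admissibility of $R$ via $\varphi$ being a $\Sigma$-predicate, discreteness of $\mathbb{C}$, and the $\kw{ap}$-congruence rule) are treated at the same level of detail as in the paper's sequencing proof, so nothing further is needed.
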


\begin{restatable}{proposition}{PropCommAppSeq}\label[prop]{prop:comm-app-seq}
  We have that \(\mathsf{eval} ((e; g)~w, z) =  \mathsf{eval} (e;  \lambda  v.~g~v~w, z)\). 
\end{restatable}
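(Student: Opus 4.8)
The plan is to establish the equation directly in $\kw{T}(1)$ by fixed-point induction on $\kw{eval}$, exploiting the fact that $(e;g)~w$ and $e;(\lambda v.~g~v~w)$ reduce in lockstep under the transition relation of \cref{fig:transitions} together with the (omitted) congruence rules for $\kw{bind}$ and $\kw{ap}$. Let $D$ denote the domain of approximations on which $\kw{eval} = \kw{fix}(\Phi_{\kw{eval}})$ lives. I would consider the subset
\[
  Q = \{\, \alpha \mid \forall e,g,w,z.~\alpha((e;g)~w, z) = \alpha(e; \lambda v.~g~v~w, z) \,\} \subseteq D,
\]
so that the claim is exactly $\kw{eval} \in Q$.

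To apply fixed-point induction I must check that $Q$ is admissible. Each conjunct is the equalizer of the two maps $\alpha \mapsto \alpha((e;g)~w, z)$ and $\alpha \mapsto \alpha(e; \lambda v.~g~v~w, z)$ from $D$ into the domain $\kw{T}(1)$; since predomains are closed under limits this equalizer is a complete subset, and it contains $\bot_D$ because $\bot_D$ sends both arguments to $\bot_{\kw{T}(1)}$. Hence each conjunct is admissible in the sense of \cref{def:admissible}, and $Q$ is admissible as the intersection of admissible subsets. Equivalently, completeness of each conjunct follows from the fact that suprema in $D$ are computed pointwise, so pointwise equality is preserved under synthetic $\omega$-joins.

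The inductive step assumes $\alpha \in Q$ and establishes $\Phi_{\kw{eval}}(\alpha) \in Q$ by case analysis on $\kw{out}(e)$. Since $(e;g)~w$ and $e;(\lambda v.~g~v~w)$ are both redexes they are never values, so $\kw{out}$ of each always lands in the stepping branch and the step is dictated by $e$. If $e \mapsto c, e'$, the congruence rules give $(e;g)~w \mapsto c, (e';g)~w$ and $e;(\lambda v.~g~v~w) \mapsto c, e';(\lambda v.~g~v~w)$, so both sides of the equation for $\Phi_{\kw{eval}}(\alpha)$ reduce to $c \boxplus \alpha(-,z)$ evaluated at the respective continuations $(e';g)~w$ and $e';(\lambda v.~g~v~w)$, which agree by the hypothesis $Q(\alpha)$ instantiated at $e'$. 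If $e$ is a value, canonical forms at type $\U{\F{A}}$ force $e = \kw{ret}(v)$; then $\kw{bind}(\kw{ret}(v), -) \mapsto 0, -$ fires in both terms, giving $(e;g)~w \mapsto 0, (g~v)~w$ and $e;(\lambda v.~g~v~w) \mapsto 0, g~v~w$, and since $g~v~w$ is literally $(g~v)~w$ both sides reduce to $0 \boxplus \alpha((g~v)~w, z)$ with no appeal to the hypothesis. This discharges the step, and fixed-point induction yields $\kw{eval} \in Q$.

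The main obstacle is bookkeeping rather than conceptual difficulty: one must verify carefully that the omitted congruence rules make $\kw{out}((e;g)~w)$ and $\kw{out}(e;\lambda v.~g~v~w)$ track $\kw{out}(e)$ identically — in particular that the outer $\kw{ap}(-, w)$ and $\kw{bind}(-, g)$ frames cannot fire before $e$ does, so that evaluation is forced through $e$ first and the two terms genuinely converge to the common continuation $(g~v)~w$ once $e$ returns. The only other delicate point is the admissibility of $Q$: because an equalizer is not a lower set I cannot invoke \cref{prop:bug-admissible}, and instead justify completeness through closure of predomains under limits (equivalently, the pointwise computation of suprema in $D$).
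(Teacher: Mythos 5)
Your proof is correct, and it follows exactly the template the paper uses for the neighbouring computational-semantics lemmas (\cref{prop:eval-func}, \cref{prop:eval-seq}, \cref{prop:comp-ind-seq}): an admissible subset of the domain of $\kw{eval}$, fixed-point induction, and a case split on $\kw{out}(e)$ that is propagated through the congruence frames. The paper itself omits the proof of this proposition, so the only point worth noting is your admissibility argument: you are right that \cref{cor:admissible} is unavailable because an equalizer is not a lower set, and your justification via closure of complete types under limits is sound and is the same device the paper deploys in the proof of \cref{prop:lub}.
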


\section{Soundness of the denotational semantics}

\PropSoundStep* 

\begin{proof}
  By induction on the derivation of \(e  \mapsto  c,e'\). 
\end{proof}

\ThmSound* 

\begin{proof}
  Consider the following subset:
  \[\begin{aligned}     P =  { \left \{ \alpha   \mid   \forall   { \left [ e \right ]} ~   \alpha (e, v) { \downarrow }   \to   \llbracket   e   \rrbracket  =  \mathsf{eval} (e, v)  \boxplus   \llbracket   v   \rrbracket \right \} }  
   \end{aligned}\]  \par{Because \(\llbracket   e   \rrbracket  =  \mathsf{eval} (e, v)  \boxplus   \llbracket   v   \rrbracket\) is a \(\Sigma\)-proposition, we see that \(P\) is an admissible subset. Suppose that \(\alpha   \in  P\) and \(\Phi _{ \mathsf{eval} }( \alpha )(e, v) { \downarrow }\). We need to show that \(\llbracket   e   \rrbracket  =  \mathsf{eval} (e, v)  \boxplus   \llbracket   v   \rrbracket\). We proceed by cases on \(\mathsf{out} (e)\).}  \begin{enumerate}
    \item{If \(e  \mapsto  c',e'\), then by the soundness of the one step relation~\cref{prop:sound-step}, it suffices to show that \(c'  \boxplus   \llbracket   e'   \rrbracket  = c'  \boxplus   \mathsf{eval} (e', v)  \boxplus   \llbracket   v   \rrbracket\), which follows from the assumption, noting that \(\Phi _{ \mathsf{eval} }( \alpha )(e, v) { \downarrow }\) implies \(\alpha (e, v) { \downarrow }\).}    \item{Otherwise, we have that \(e = v\), and so the result holds since \(\mathsf{eval} (e, e) = 0\).}  
\end{enumerate}
\end{proof}

\section{Proofs for the computational adequacy property}\label{app:adequacy}

\begin{proposition}\label[prop]{prop:head-exp}
  If \(d  \lhd _X e\) and \(e'  \mapsto  c, e\), then \(c  \boxplus  d  \lhd _X e'\). 
\end{proposition}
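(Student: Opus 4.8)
The plan is to induct on the structure of the computation type $X$. This is the natural scheme because the transition relation $\mapsto$ is only populated on terms of a type $\U{X}$, and the formal approximation relation $\lhd_X$ is itself defined by recursion on $X$; there are exactly two cases, $X = \F{A}$ and $X = A \rightharpoonup Y$, matching the two computation-type clauses of \cref{fig:formal-approx-rel}. In both cases the workhorses are the three cost-algebra laws of \cref{prop:cost-alg-laws}, which slide the increment $c \boxplus -$ past the algebra operations $f^{\sharp}(-)$ and $(-)~a$, together with the congruence rules for $\mapsto$ (suppressed from \cref{fig:transitions}), which lift a head step $e' \mapsto c, e$ to a head step of a surrounding $\kw{bind}$ or $\kw{ap}$.

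For the base case $X = \F{A}$, I unfold the goal $c \boxplus d \lhd_{\U{\F{A}}} e'$ to: for every $f \mathrel{(\lhd_A \Rightarrow \mathsf{adq})} f'$, the statement $\mathsf{adq}(f^{\sharp}(c \boxplus d),\, e'; f')$ holds. The second law of \cref{prop:cost-alg-laws} rewrites $f^{\sharp}(c \boxplus d) = c \boxplus f^{\sharp}(d)$, while the hypothesis $d \lhd_{\U{\F{A}}} e$ gives $\mathsf{adq}(f^{\sharp}(d),\, e; f')$, that is $f^{\sharp}(d) \le \kw{profile}(e; f')$. The bind-congruence turns $e' \mapsto c, e$ into $e'; f' \mapsto c, e; f'$; unfolding the fixed point $\kw{eval}$ through its functional $\Phi_{\kw{eval}}$ on this single step, whose $\kw{out}$ is $\kw{inr} \cdot (c, e; f')$, then yields $\kw{profile}(e'; f') = c \boxplus \kw{profile}(e; f')$. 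Finally, monotonicity of the definable map $c \boxplus -$ (\cref{prop:monotone}) gives $c \boxplus f^{\sharp}(d) \le c \boxplus \kw{profile}(e; f') = \kw{profile}(e'; f')$, which is the required instance of $\mathsf{adq}$. In slogan form, $\mathsf{adq}$ is itself closed under head expansion, and this closure is exactly what the $\F{A}$ case needs.

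For the inductive case $X = A \rightharpoonup Y$, I unfold $c \boxplus d \lhd_{\U{(A \rightharpoonup Y)}} e'$ to: for all $a \lhd_A a'$, $((c \boxplus d)~a) \lhd_{\U{Y}} (e'~a')$. The third law of \cref{prop:cost-alg-laws} rewrites $(c \boxplus d)~a = c \boxplus (d~a)$, and the hypothesis $d \lhd_{\U{(A \rightharpoonup Y)}} e$ supplies $(d~a) \lhd_{\U{Y}} (e~a')$. The ap-congruence lifts $e' \mapsto c, e$ to $e'~a' \mapsto c, e~a'$, so the induction hypothesis at the strictly smaller computation type $Y$ delivers $c \boxplus (d~a) \lhd_{\U{Y}} (e'~a')$, completing the case.

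The only genuinely delicate step is the profiling identity $\kw{profile}(e'; f') = c \boxplus \kw{profile}(e; f')$ in the base case: it relies both on the omitted bind-congruence for $\mapsto$ and on computing a single unfolding of the fixed point $\kw{eval}$ via $\Phi_{\kw{eval}}$ on a head step. Everything else is a mechanical application of \cref{prop:cost-alg-laws} and monotonicity, and the induction is well-founded since the only recursive appeal is at the smaller type $Y$.
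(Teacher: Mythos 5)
Your proof is correct and follows exactly the route the paper takes: the paper's own proof is the one-liner ``by induction on $X$, using the laws of the cost algebra (\cref{prop:cost-alg-laws})'', and your two cases --- sliding $c \boxplus -$ past $f^\sharp$ and application via the algebra laws, lifting the head step through the bind/ap congruences, and unfolding $\Phi_{\kw{eval}}$ once to get $\kw{profile}(e'; f') = c \boxplus \kw{profile}(e; f')$ --- are precisely the details the paper leaves implicit. No gaps; if anything, your writeup is a useful expansion of the paper's terse argument.
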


\begin{proof}
  By induction on \(X\), using the laws of the cost algebra\cref{prop:cost-alg-laws}. 
\end{proof}

\subsection{Admissibility}

\begin{proposition}\label[prop]{prop:free-adm}
  We have that \(-  \lhd _{ \U{\F{A}} } e\) is an admissible subset of \(\kw{T}(A)\).
\end{proposition}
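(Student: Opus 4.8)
The plan is to obtain admissibility through \cref{cor:admissible}, which reduces the claim to showing that the subset $S_e = \{d : \kw{T}(A) \mid d \lhd_{\U{\F{A}}} e\}$ is a \emph{lower} subset of the domain $\kw{T}(A) = \lift{(\mathbb{C}\times A)}$ that contains the least element and is closed under synthetic $\omega$-joins. Unfolding the definition in \cref{fig:formal-approx-rel}, membership $d \in S_e$ means that for every pair $f \mathrel{(\lhd_A \Rightarrow \kw{adq})} f'$ we have $f^\sharp(d) \le e; f'$. The essential observation is that both the quantified witnesses $(f,f')$ and the right-hand side $e; f'$ are independent of $d$: the only dependence of the defining condition on $d$ is through the algebra extension $f^\sharp(d)$, which is a homomorphism of $\mathbb{T}$-algebras and, since $\kw{T}(A)$ and the target are predomains, a map between predomains. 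Consequently each closure property may be checked for a single fixed $(f,f')$, so the higher-type universal quantifier commutes harmlessly past the verification.

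Fixing such a pair and writing $c = e; f'$ for the fixed right-hand side, I would verify the three conditions as follows. For \emph{downward closure}, if $d \le d'$ and $d' \in S_e$, then monotonicity of $f^\sharp$ (\cref{prop:monotone}) gives $f^\sharp(d) \le f^\sharp(d') \le c$, so $d \in S_e$. For the \emph{least element}, $f^\sharp$ preserves $\bot$ — sequencing a divergent computation diverges, equivalently $f^\sharp$ is an algebra homomorphism and hence strict — so $f^\sharp(\bot) = \bot \le c$. For closure under \emph{synthetic $\omega$-joins}, let $d : \omega \to \kw{T}(A)$ be a chain with every $d_i \in S_e$; continuity of maps between predomains (\cref{prop:cont}) yields $f^\sharp(\sup d) = \sup(f^\sharp \circ d)$, and since each $f^\sharp(d_i) \le c$ the element $c$ is an upper bound of the chain $f^\sharp \circ d$, so the least-upper-bound property (\cref{prop:lub}) gives $\sup(f^\sharp \circ d) \le c$. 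Hence $f^\sharp(\sup d) \le c$ and $\sup d \in S_e$.

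As these hold for every admissible pair $(f,f')$, the subset $S_e$ satisfies the hypotheses of \cref{cor:admissible} and is therefore admissible. Equivalently, one may package the argument by noting that $S_e$ is the intersection over all $f \mathrel{(\lhd_A \Rightarrow \kw{adq})} f'$ of the preimages $(f^\sharp)^{-1}({\downarrow}(e; f'))$ of principal lower sets, each of which is complete and contains $\bot$; this reformulation makes transparent why the quantification over semantic and syntactic functions poses no difficulty.

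The only genuine content lies in the behaviour of $f^\sharp$ on the order structure, so the main obstacle is ensuring that $f^\sharp$ is truly continuous and strict rather than merely monotone. This is exactly what the reflectivity of predomains buys us: because $\kw{T}(A)$ and the target $\mathbb{T}$-algebra are predomains, the extension $f^\sharp$ is automatically a morphism between predomains and hence monotone and continuous by \cref{prop:monotone,prop:cont}, with strictness following from its being an algebra homomorphism. Once this is in place, the three closure conditions are routine.
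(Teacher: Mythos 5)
Your proof is correct and its skeleton is the same as the paper's: reduce to \cref{cor:admissible} and, for each fixed pair $f \mathrel{(\lhd_A \Rightarrow \kw{adq})} f'$, check the least element, closure under synthetic $\omega$-joins, and downward closure. Your $\bot$ and $\sup$ cases (strictness of $f^\sharp$, then continuity from \cref{prop:cont} together with leastness of the synthetic join from \cref{prop:lub}) are essentially verbatim the paper's. Where you genuinely diverge is downward closure: you dispatch it abstractly via \cref{prop:monotone} and transitivity of the intrinsic preorder, $f^\sharp(d) \le f^\sharp(d') \le \kw{profile}(e; f')$, whereas the paper unfolds the order on $\kw{T}(1)$ using \cref{prop:lift-order} and argues pointwise on supports, invoking the discreteness of $\P \vee \mathbb{C}$ to upgrade $f(a') \le f(a)$ to an equality of costs before concluding. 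Your route is shorter and does not consume the discreteness of the cost monoid at this step (notably, \cref{sec:den-sem} advertises $\Sigma$-equality of $\mathbb{C}$ as an ingredient of exactly this admissibility argument); since the synthetic order on a predomain is transitive and every map of predomains is monotone, the abstract version is sound as stated. Your closing repackaging of the relation as an intersection of preimages of principal lower sets under $f^\sharp$ is likewise consistent with the paper's toolkit --- principal lower sets are shown complete inside the proof of \cref{prop:lub}, and complete subsets are closed under limits --- and would yield admissibility directly from \cref{def:admissible} without passing through \cref{cor:admissible}.
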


\begin{proof}
  By \cref{cor:admissible} it suffices to show downward closure and closure under $\bot$ and synthetic $\omega$-joins. 
  \begin{enumerate}
    \item By definition this means to show $f^\sharp(\bot) \le \kw{profile}(e; g)$ for all \(f  \mathrel{(A \Rightarrow \kw{adq})} g\). But this holds since $f^\sharp(\bot) = \bot$ and $\bot$ is the least element of $\kw{T}(A)$.
    \item Let $d$ be a synthetic $\omega$-chain such that $d_i \lhd_{\U{\F{A}}} e$ for all $i : \omega$. We want to show that $\sup d \lhd_{\U{\F{A}}} e$, which is to show that $f^\sharp(\sup d) \le \kw{profile}(e; g)$ for all \(f \mathrel{(A \Rightarrow \kw{adq})}  g\). Since $f^\sharp(\sup d) = \sup(f^\sharp \circ d)$, this means to show $\sup(f^\sharp \circ d) \le \kw{profile}(e; g)$. By the universal property of the synthetic $\omega$-join, it suffices to show $f^\sharp(d_i) \le \kw{profile}(e; g)$ for all $i:\omega$, but this is the assumption.
    \item Fixing $d' \le d \lhd_{\U{\F{A}}} e$, we need to show that $d' \lhd_{\U{\F{A}}} e$. Suppose that \(f \mathrel{(A \Rightarrow \kw{adq})}  g\). We need to show $f^\sharp(d') \le \kw{profile}(e; g)$. By the characterization of the order on lifts \cref{prop:lift-order}, we suppose $f^\sharp(d')\supp$ and show that $\kw{profile}(e; g)\supp$ and that $f^\sharp(d') = \kw{profile}(e; g)$. By assumption we know $d' = \eta_{\mathbb{L}}(a')$ and $f(a') = c$ for some $a : \sem{A}$ and $c : \P \vee \mathbb{C}$. Since $d' \le d$, we know $d' = \eta_{\mathbb{L}}(a)$ for some $a$ such that $a' \le a$. Consequently, we have $c = f(a') \le f(a)$, but since $\P \vee \mathbb{C}$ is discrete (by \cref{sec:den-sem} and the argument in \cref{prop:comp-ind-seq}), we have $f(a) = c = f(a')$. Thus by the assumption that $d \lhd_{\U{\F{A}}} e$, we have $f^{\sharp}(d) = c \le \kw{profile}(e; g)$. Again by the discreteness of $\P \vee \mathbb{C}$ we have that $f^\sharp(d') = f(a') = c = \kw{profile}(e; g)$, as required.
  \end{enumerate}
\end{proof}

\begin{proposition}\label[prop]{prop:sup-func}
  Suprema of synthetic $\omega$-chains in function spaces are computed pointwise.
\end{proposition}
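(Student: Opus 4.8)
The plan is to deduce the claim directly from the continuity of evaluation established in \cref{prop:cont}. Concretely, suppose $A$ and $B$ are predomains and $d : \omega \to (A \to B)$ is a synthetic $\omega$-chain; I must show that the supremum $\sup d$ formed in the predomain $A \to B$ satisfies $(\sup d)~x = \sup(\lambda i.~d_i~x)$ for every $x : A$, where the right-hand side is the supremum of the pointwise chain formed in $B$. Observe first that $\sup d$ exists by \cref{prop:lub}, since $A \to B$ is complete, being a predomain.

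The key observation is that for each fixed $x : A$ the evaluation map $\mathrm{ev}_x : (A \to B) \to B$, sending $g \mapsto g~x$, is a morphism of predomains: its domain $A \to B$ is again a predomain because predomains are closed under function types, and its codomain $B$ is a predomain by assumption. Hence \cref{prop:cont} applies to $\mathrm{ev}_x$, giving
\[
  \mathrm{ev}_x(\sup d) = \sup(\mathrm{ev}_x \circ d).
\]
Unfolding the definition of $\mathrm{ev}_x$, the left-hand side is precisely $(\sup d)~x$, while the chain $\mathrm{ev}_x \circ d$ is $\lambda i.~d_i~x$, so its supremum is the pointwise join of $d$ at $x$ formed in $B$. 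Combining these identifications yields $(\sup d)~x = \sup(\lambda i.~d_i~x)$, which is exactly the asserted pointwise description of the supremum.

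I do not expect any real obstacle here: once continuity is in hand the statement is essentially a tautology, since ``computed pointwise'' means precisely that evaluation commutes with the supremum. The only point worth checking is that $\mathrm{ev}_x$ is a genuine map of predomains so that \cref{prop:cont} is applicable, and this is guaranteed by the closure of predomains under function types (a consequence of their definition by orthogonality conditions). One could alternatively argue directly from the unique-extension characterization of $\sup$ in \cref{prop:lub}, using uniqueness of extensions along $\omega \hookrightarrow \overline{\omega}$ to identify $\overline{d}(\infty)~x$ with $\overline{\lambda i.~d_i~x}(\infty)$, but routing through continuity is the cleanest path.
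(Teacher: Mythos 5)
Your argument is correct. The paper itself states this proposition without proof (both in \cref{subsec:synth-omega-cpo} and in the appendix), so there is nothing to compare against; your route --- applying \cref{prop:cont} to the evaluation map $\mathrm{ev}_x : (A \to B) \to B$, which is a map of predomains since predomains are closed under exponentials by arbitrary types --- is sound and noncircular, because the proof of \cref{prop:cont} relies only on uniqueness of extensions along $\omega \hookrightarrow \overline{\omega}$ and the definition of $\sup d$ as $\overline{d}(\infty)$ from \cref{prop:lub}, not on the present proposition. Your remark that one could instead argue directly from uniqueness of extensions is just the same proof with \cref{prop:cont} inlined, so either phrasing is fine.
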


\begin{proposition}\label[prop]{prop:func-adm}
  Given that \(-  \lhd _{ \mathsf{U} X } e\) is admissible for all \(e :  \mathsf{U} X\), we have that \(-  \lhd _{ \mathsf{U} (A  \to  X) } e\) is as well.
\end{proposition}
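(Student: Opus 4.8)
The plan is to verify admissibility by means of \cref{cor:admissible}, which reduces the problem to showing that the subset $S_e = \{\,d \mid d \lhd_{\U{(A \to X)}} e\,\}$ of the function domain $\sem{A} \to \sem{X}$ is a lower set closed under the least element and under synthetic $\omega$-joins. Unfolding the definition of the relation, $d \in S_e$ means precisely that $(d~a) \lhd_{\U{X}} (e~a')$ holds for every related pair $a \lhd_A a'$. The whole argument then proceeds by transporting each closure condition, pointwise in the argument $a$, to the corresponding property of the subset $- \lhd_{\U{X}} (e~a')$, which is admissible by hypothesis. The two structural facts that make this transport possible are that both the synthetic order and the synthetic $\omega$-joins on a function domain are computed pointwise: the former from the pointwise characterisation of the order on function predomains, the latter from \cref{prop:sup-func}.

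For the least element I would note that $\bot$ in $\sem{A} \to \sem{X}$ is the pointwise-constant map with $\bot~a = \bot_{\sem{X}}$; since $- \lhd_{\U{X}} (e~a')$ is admissible it contains $\bot_{\sem{X}}$, so $\bot \in S_e$. For downward closure, given $d' \le d$ with $d \in S_e$ and a related pair $a \lhd_A a'$, pointwiseness of the order gives $d'~a \le d~a$, while $d~a \lhd_{\U{X}} (e~a')$ holds by hypothesis; since $- \lhd_{\U{X}} (e~a')$ is a lower set we conclude $d'~a \lhd_{\U{X}} (e~a')$, whence $d' \in S_e$.

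Closure under synthetic $\omega$-joins is the step that genuinely uses the pointwise computation of suprema. Given a synthetic $\omega$-chain $d : \omega \to (\sem{A} \to \sem{X})$ with $d_i \in S_e$ for all $i$, and a related pair $a \lhd_A a'$, I would apply \cref{prop:sup-func} to rewrite $(\sup d)~a = \sup(\lambda i.~d_i~a)$. The family $\lambda i.~d_i~a$ is a synthetic $\omega$-chain in $\sem{X}$ (application is monotone by \cref{prop:monotone}), and each of its elements lies in the admissible subset $- \lhd_{\U{X}} (e~a')$; closure of that subset under synthetic $\omega$-joins then yields $\sup(\lambda i.~d_i~a) \lhd_{\U{X}} (e~a')$, i.e.\ $(\sup d)~a \lhd_{\U{X}} (e~a')$. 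Since the related pair $a \lhd_A a'$ was arbitrary, $\sup d \in S_e$.

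I do not expect a genuine obstacle here: once the function-space order and suprema are known to be pointwise, admissibility at $\U{(A \to X)}$ is simply admissibility at $\U{X}$ pulled back along the universal quantifier over related arguments. The one point I would check carefully is that $\lambda i.~d_i~a$ is indeed a synthetic $\omega$-chain, so that \cref{prop:sup-func} legitimately applies and the codomain admissibility hypothesis can be invoked.
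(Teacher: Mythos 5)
Your proposal is correct and follows essentially the same route as the paper's own proof: reduce to \cref{cor:admissible}, then transfer the least element, downward closure, and closure under synthetic $\omega$-joins pointwise to the admissibility of $- \lhd_{\U{X}} (e~b)$, using the pointwise computation of the order and of suprema on function spaces. (The only superfluous worry is your final check: a synthetic $\omega$-chain is by definition just a map out of $\omega$, so $\lambda i.~d_i~a$ is automatically one without any appeal to monotonicity.)
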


\begin{proof}
  Again we show downward closure and closure under $\bot$ and $\sup$. 
  \begin{enumerate}
    \item Because \(\bot (a) =  \bot\), we have that \(\bot   \lhd _{ \mathsf{U} (A  \to  X) } e\) by the assumption that $\bot \lhd_{\U{X}} e$ for all $e$. 
    \item Suppose that \(f_i  \lhd _{ \mathsf{U} (A  \to  X) } e\). We need to show that \(\bigvee  f  \lhd _{ \mathsf{U} (A  \to  X) } e\). Suppose that \(a  \lhd _A b\). We need to show that \(( \bigvee  f)~a  \lhd _{ \mathsf{U} X } e~b\). This follows the fact that synthetic \(\omega\)-joins in function spaces are computed pointwise and the assumption that \(-  \lhd _{ \mathsf{U} X } e~b\) is closed under $\sup$. 
    \item Fix $f' \le f \lhd_{\U{(A \to X)}} e$. To show that $f' \lhd_{\U{(A \to X)}} e$, suppose that $a \lhd_A b$. We need to show that $f'~a \lhd_{\U{X}} e~b$. By the premise, we have that \(-  \lhd _{ \mathsf{U} X } e~b\) is a lower set, so it suffices to show $f'~a \le f~a \lhd_{\U{X}} e~b$, which follow from the assumptions $f' \le f$ and $f \lhd_{\U{(A \to X)}} e$. 
  \end{enumerate} 
\end{proof}

\begin{proposition}\label[prop]{prop:formal-adm}
  Given \(e :  \U{X}\), we have that \(-  \lhd _{ \U{X} } e\) is an admissible subset of \(\sem{\U{X}}\). 
\end{proposition}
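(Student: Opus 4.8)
The plan is to proceed by structural induction on the computation type $X$. By the grammar of \pcfc{} (\cref{fig:pcfc}), every computation type is either of the form $\F{A}$ for some value type $A$, or of the form $A \rightharpoonup X'$ for some value type $A$ and a strictly smaller computation type $X'$. Hence there are exactly two cases to handle, and the induction is well-founded because the function case strictly decreases the computation type.

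In the base case $X = \F{A}$, the claim is that $- \lhd_{\U{\F{A}}} e$ is an admissible subset of $\sem{\U{\F{A}}}$, and since this carrier is $\kw{T}(\sem{A})$, this is exactly the content of \cref{prop:free-adm}; I would simply appeal to it.

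In the inductive step $X = A \rightharpoonup X'$, I would invoke \cref{prop:func-adm}. Its hypothesis requires that $- \lhd_{\U{X'}} e'$ be admissible for every $e' : \U{X'}$, and this is precisely the induction hypothesis applied to the smaller computation type $X'$. Discharging this hypothesis then yields that $- \lhd_{\U{(A \rightharpoonup X')}} e$ is admissible, as required.

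I do not expect a separate obstacle here: since both the free case and the function case have already been established as standalone propositions, the only work in this proof is to assemble them by induction on $X$. The substantive difficulties have already been dispatched in those lemmas — in particular the argument in \cref{prop:free-adm} relying on the discreteness of $\P \vee \mathbb{C}$ together with the pointwise characterization of the order on lifts (\cref{prop:lift-order}), and the pointwise computation of suprema in function spaces (\cref{prop:sup-func}) used in \cref{prop:func-adm}.
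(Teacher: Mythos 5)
Your proposal is correct and matches the paper's own proof, which likewise establishes the claim by induction on the computation type $X$, appealing to \cref{prop:free-adm} for the case $X = \F{A}$ and to \cref{prop:func-adm} for the case $X = A \rightharpoonup X'$. The paper merely cites these two propositions without spelling out the induction, so your write-up is simply a more explicit rendering of the same argument.
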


\begin{proof}
 By \cref{prop:free-adm,prop:func-adm}.  
\end{proof}

\subsection{Fundamental lemma}

We give the representative cases of the proof by induction on the derivation of terms. 

\begin{lemma}\label[lem]{lem:fllr-ret}
  If \(a  \lhd _A v\), then \(\eta _ \mathsf{T} (a)  \lhd _{ \mathsf{U} \mathsf{F} A }  \mathsf{ret} (v)\). 
\end{lemma}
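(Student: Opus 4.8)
The plan is to unfold the definition of the formal approximation relation at type $\U{\F{A}}$ from \cref{fig:formal-approx-rel} and reduce the goal to the hypothesis $a \lhd_A v$ after two routine rewrites. By definition, establishing $\eta_\mathbb{T}(a) \lhd_{\U{\F{A}}} \kw{ret}(v)$ amounts to fixing an arbitrary pair $f \mathrel{(\lhd_A \Rightarrow \kw{adq})} f'$ and proving the adequacy inequality
\[
  f^\sharp(\eta_\mathbb{T}(a)) \le \kw{profile}(\kw{ret}(v); f').
\]
So I would begin by introducing such $f$ and $f'$ and then rewrite each side of this inequality into a form where the assumption applies directly.

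For the left-hand side, I would appeal to the unit law of the partial cost monad: since $f^\sharp$ is the algebra extension of $f$ along the free algebra $\kw{T}(\sem{A})$, precomposition with the unit collapses, giving $f^\sharp(\eta_\mathbb{T}(a)) = f(a)$. For the right-hand side, I would use the computational semantics: the transition $\kw{bind}(\kw{ret}(v), f') \mapsto 0, f'(v)$ of \cref{fig:transitions} shows that $\kw{ret}(v); f'$ fires a single administrative step to the application $f'(v)$ while incurring zero cost. Unfolding the fixed point defining $\kw{profile}$ through this step (equivalently, invoking the big-step profiling law \cref{prop:prof-seq} with $\kw{eval}(\kw{ret}(v), \kw{ret}(v)) = 0$) then yields $\kw{profile}(\kw{ret}(v); f') = \kw{profile}(f'(v))$.

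With both sides simplified, the goal becomes $f(a) \le \kw{profile}(f'(v))$, which is precisely $\kw{adq}(f(a), f'(v))$. This follows immediately by instantiating the assumption $f \mathrel{(\lhd_A \Rightarrow \kw{adq})} f'$ at the related pair supplied by the hypothesis $a \lhd_A v$, and the three steps chain together to close the proof.

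I do not anticipate a serious obstacle here: this is a base case of the fundamental lemma and, once the two rewrites are in place, the argument is essentially definitional. The only point deserving care is the right-hand simplification, where one must confirm that profiling is invariant under the cost-$0$ step that fires $\kw{bind}(\kw{ret}(v), -)$; I would make this precise via the fixed-point unfolding of $\kw{eval}$ (or \cref{prop:prof-seq}), using that $0 \boxplus x = x$ in the derived cost algebra by \cref{prop:cost-alg-laws}.
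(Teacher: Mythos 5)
Your proposal is correct and follows essentially the same route as the paper: fix $f \mathrel{(\lhd_A \Rightarrow \kw{adq})} f'$, collapse the left side to $f(a)$ by the monad unit law, absorb the cost-$0$ administrative step $\kw{ret}(v); f' \mapsto 0, f'(v)$ on the right, and conclude from the hypothesis. The only cosmetic difference is that the paper discharges that step via the general head-expansion lemma (\cref{prop:head-exp}) rather than by unfolding $\kw{profile}$ directly, but the content is identical.
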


\begin{proof}
  Let \(f  \mathrel{( \lhd _A  \Rightarrow {\kw{adq}} )}  g\). We need to show that \((f^ \sharp ( \eta _ \mathsf{T} (a)))  \mathrel{\kw{adq}}  ( \mathsf{ret} (v); g)\). Computing the denotational semantics and applying \cref{prop:head-exp}, it suffices to show that \((f~a) \mathrel{\kw{adq}} (g~v)\), which follows from our assumption. 
\end{proof}

\begin{lemma}\label[lem]{lem:fllr-seq}
  If \(d  \lhd _{ \mathsf{U} \mathsf{F} A } e\) and \(f  \lhd _{ \mathsf{U} (A  \to  X) } g\), then \(f^ \sharp (d)  \lhd _{ \mathsf{U} X } e; g\).
\end{lemma}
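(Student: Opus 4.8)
The plan is to induct on the computation type $X$, since $\lhd_{\U{X}}$ is defined by recursion on $X$ and bottoms out at the ground relation $\kw{adq} = {\le}\,\kw{profile}$. There are two cases, matching the two generators of $\kw{Ty}^\ominus$: the base case $X = \F{A'}$ and the inductive case $X = A' \to X'$. Throughout I use the associativity law for the algebra action, $(k^\sharp \circ h)^\sharp = k^\sharp \circ h^\sharp$, which is just the monad law for $\mathbb{T}$, together with the pointwise description of the algebra on function spaces from \cref{prop:cost-alg-laws}.

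For the base case $X = \F{A'}$, I would unfold the biorthogonal definition of $\lhd_{\U{\F{A'}}}$: fixing a continuation pair $k \mathrel{(\lhd_{A'} \Rightarrow \kw{adq})} k'$, the goal is $\kw{adq}(k^\sharp(f^\sharp(d)), (e; g); k')$. The key move is to assemble these into a single continuation for $d$, setting $h = k^\sharp \circ f$ on the semantic side and $h' = \lambda v.~g~v; k'$ on the syntactic side. From $f \lhd_{\U{(A \to \F{A'})}} g$ and $k \mathrel{(\lhd_{A'} \Rightarrow \kw{adq})} k'$ one checks directly that $h \mathrel{(\lhd_A \Rightarrow \kw{adq})} h'$. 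Feeding this into $d \lhd_{\U{\F{A}}} e$ yields $\kw{adq}(h^\sharp(d), e; h')$, i.e. $k^\sharp(f^\sharp(d)) = h^\sharp(d) \le \kw{profile}(e; h')$ after rewriting by the monad law. It then remains to identify $\kw{profile}(e; (\lambda v.~g~v; k'))$ with $\kw{profile}((e; g); k')$, which is exactly the associativity-of-profiling law \cref{prop:prof-assoc}; this closes the case.

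For the inductive case $X = A' \to X'$, the relation $\lhd_{\U{(A' \to X')}}$ is pointwise, so fixing $a \lhd_{A'} a'$ the goal becomes $(f^\sharp(d))~a \lhd_{\U{X'}} (e; g)~a'$. I would push the application inside the sequencing on both sides: the pointwise algebra law (\cref{prop:cost-alg-laws}) gives $(f^\sharp(d))~a = f_a^\sharp(d)$ with $f_a = \lambda x.~(f~x)~a$, while on the syntactic side $(e; g)~a'$ is \emph{Kleene-equivalent} to $e; g_{a'}$ with $g_{a'} = \lambda v.~g~v~a'$ by \cref{prop:comm-app-seq}. One checks $f_a \lhd_{\U{(A \to X')}} g_{a'}$ directly from $f \lhd_{\U{(A \to (A' \to X'))}} g$ and $a \lhd_{A'} a'$, so the induction hypothesis at $X'$ gives $f_a^\sharp(d) \lhd_{\U{X'}} e; g_{a'}$; transferring along the two identifications yields the goal.

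The main obstacle is this last transfer: I must know that $-\lhd_{\U{X'}} t$ depends on $t$ only up to Kleene equivalence, so that $\lhd_{\U{X'}} (e; g_{a'})$ can be converted to $\lhd_{\U{X'}} ((e; g)~a')$ via \cref{prop:comm-app-seq}. I would establish this as a small invariance lemma, again by induction on $X'$: at the base type the relation is literally a statement about $\kw{profile}$, which \cref{prop:comm-app-seq} equates (propagated through further sequencing by \cref{prop:prof-assoc} and \cref{prop:eval-seq}), and at function types it follows pointwise. This is where the ``big-step'' laws of the computational semantics are doing the real work, justifying the purely syntactic rearrangements (associativity of $\kw{bind}$ and commuting application past $\kw{bind}$) that are invisible on the denotational side.
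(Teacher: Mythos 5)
Your proposal is correct and follows essentially the same route as the paper's proof: induction on $X$, with the $\F{A'}$ case handled by composing the continuation with $f$ via the monad law and reassociating on the syntactic side with \cref{prop:prof-assoc}, and the $A' \to X'$ case handled pointwise via \cref{prop:comm-app-seq} and the inductive hypothesis. Your explicit observation that the function case needs $- \lhd_{\U{X'}} t$ to be invariant under the Kleene equivalence supplied by \cref{prop:comm-app-seq} (established by a further induction on $X'$) is a point the paper's proof passes over silently with ``unraveling the computational semantics,'' so your account is if anything slightly more careful.
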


\begin{proof}
  \par{By induction on \(X\).}  \begin{enumerate}
    \item{If \(X =  \mathsf{F} B\), let \(h  \mathrel{( \lhd _B  \Rightarrow   \mathsf{adq} )}  i\). We need to show that \(h^ \sharp (f^ \sharp (d))  \mathrel{\mathsf{adq}}  (e; g); i\). Computing the denotational semantics and using the fact that we may reassociate sequences (\cref{prop:prof-assoc}), it suffices to show \(((h^ \sharp   \circ  f)^ \sharp (d))  \mathrel{\mathsf{adq}}  (e; ( \lambda  v.~g~v; i))\). By the assumption that \(d  \lhd _{ \mathsf{U} \mathsf{F} A } e\), it suffices to show that for all \(a  \lhd _A v\), we have that \((h^ \sharp (f~a))  \mathrel{\mathsf{adq}}  (g~v; i)\), which follows directly from the assumptions that \(f  \lhd _{ \mathsf{U} (A  \to  X) } g\) and \(h  \mathrel{( \lhd _B  \Rightarrow   \mathsf{adq} )}  i\). 
    }    \item{If \(X = B  \to  Y\), suppose that \(b  \lhd _B v\). We need to show that \((f^ \sharp (d))~b  \lhd _{ \mathsf{U} Y } (e; g)~v\). Unraveling the denotational semantics and the computational semantics (using \cref{prop:comm-app-seq}), it suffices to show \(( \lambda  d.~f~d~b)^ \sharp ~d  \lhd _{ \mathsf{U} Y } (e;  \lambda  d.~g~d~v)\), which follows from the inductive hypothesis and the assumption that \(f  \lhd _{ \mathsf{U} (A  \to  (B  \to  Y)) } g\). }
\end{enumerate}
\end{proof}

\begin{lemma}\label[lem]{lem:fllr-step}
  If \(d  \lhd _X e\), then \(c  \boxplus  d  \lhd _X  \mathsf{step} ^c(e)\). 
\end{lemma}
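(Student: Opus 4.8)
The plan is to observe that this lemma is a direct instance of head expansion, \cref{prop:head-exp}. The small-step rules of \cref{fig:transitions} include the transition $\mathsf{step}^c(e) \mapsto c, e$, which is exactly the hypothesis of \cref{prop:head-exp} with $e'$ taken to be $\mathsf{step}^c(e)$ and the incurred cost being the same $c$. Since we are given $d \lhd_X e$, instantiating head expansion yields $c \boxplus d \lhd_X \mathsf{step}^c(e)$ immediately, with no further work. This is the shortest and cleanest route, and it is the one I would present.

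If one prefers a self-contained argument that does not appeal to \cref{prop:head-exp} — equivalently, to see what that proposition carries out in this case — I would proceed by induction on the computation type $X$, recalling that $\lhd_X$ abbreviates $\lhd_{\U{X}}$. In the case $X = \F{A}$, I would unfold the definition of $\lhd_{\U{\F{A}}}$ from \cref{fig:formal-approx-rel}: fixing $f \mathrel{(\lhd_A \Rightarrow \mathsf{adq})} g$, the goal is $f^\sharp(c \boxplus d) \le \kw{profile}(\mathsf{step}^c(e); g)$. The cost-algebra law $f^\sharp(c \boxplus d) = c \boxplus f^\sharp(d)$ of \cref{prop:cost-alg-laws} rewrites the left-hand side, while on the right the step transition prepends $c$ to the profile of $e; g$. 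The hypothesis $d \lhd_{\U{\F{A}}} e$ supplies $f^\sharp(d) \le \kw{profile}(e; g)$, and monotonicity of $c \boxplus {-}$ (\cref{prop:monotone}) closes the goal. In the case $X = A \to Y$, I would unfold $\lhd_{\U{(A \to Y)}}$ as $\lhd_A \Rightarrow \lhd_{\U{Y}}$, use the law $(c \boxplus f)~a = c \boxplus (f~a)$ to push the cost inside the application, relate $\mathsf{step}^c(e)~b$ to $\mathsf{step}^c(e~b)$ on the computational side, and conclude with the inductive hypothesis at $Y$.

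The only genuinely delicate point in the direct route is the bookkeeping relating the step effect to sequencing and to application on the computational side — namely that $\mathsf{step}^c(e); g$ and $\mathsf{step}^c(e)~b$ behave as $\mathsf{step}^c$ of the respective redexes. This is precisely the content already packaged by the transition-indexed formulation of \cref{prop:head-exp}, so I expect the main ``obstacle'' to be entirely avoided by citing head expansion rather than re-deriving these commutations by hand; the remaining manipulations are routine applications of the cost-algebra laws.
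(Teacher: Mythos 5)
Your primary argument is exactly the paper's proof: since $\mathsf{step}^c(e) \mapsto c, e$ by the transition rules, the claim is an immediate instance of head expansion (\cref{prop:head-exp}). The additional unfolded induction on $X$ is a correct but unnecessary elaboration of what \cref{prop:head-exp} already packages.
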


\begin{proof}
  Since \(\mathsf{step} ^c(e)  \mapsto  c, e\), the result holds by \cref{prop:head-exp}.
\end{proof}

\ThmFLLR* 

\begin{proof}
  By \cref{lem:fllr-ret,lem:fllr-seq,lem:fllr-step,prop:formal-adm}. 
\end{proof}

\ThmNoninterference* 

\begin{proof}
  Let $c : \mathbb{C}$ and $d : \mathbb{C}$ be the costs denoted by $\kw{eval}(e~x, \kw{ret}(v))$ and $\kw{eval}(e~y, \kw{ret}(u))$. By soundness \cref{thm:sound} and laws of the derived algebra \cref{prop:cost-alg-laws}, we have that $\sem{e~x} = c \costmap \eta_\mathbb{T}(\sem{v}) = \eta_\mathbb{L}(c, \sem{v})$ and similarly $\sem{e~y} =\costmap \eta_\mathbb{T}(\sem{u}) = \eta_\mathbb{L}(d, \sem{u})$. It suffices to show that $\sem{v} =_2 \sem{u}$. Because 2 is a purely extensional type (as required by \cref{def:axioms}), we may assume that $\P$ holds. By assumption and soundness, we have $\sem{x} = \sem{y}$, and so $\sem{e~x} = \eta_\mathbb{L}(c, \sem{v}) = \eta_\mathbb{L}(d, \sem{u}) = \sem{e~y}$, which means that $\sem{v} = \sem{u}$ since $c = d$ as elements of a purely intensional type $\mathbb{C}$. 
\end{proof}

\end{document}